\theoremstyle{plain}
\newtheorem{theorem}{Theorem}
\newtheorem*{theorem*}{Theorem}
\newtheorem{proposition}[theorem]{Proposition}
\newtheorem{lemma}[theorem]{Lemma}
\newtheorem{corollary}[theorem]{Corollary}
\theoremstyle{definition}
\newtheorem{definition}[theorem]{Definition}
\newtheorem{conjecture}[theorem]{Conjecture}
\newtheorem*{problem*}{Problem}
\newtheorem{remark}[theorem]{Remark}
\newtheorem{example}[theorem]{Example}
\DeclareMathOperator{\rank}{R}
\DeclareMathOperator{\subrank}{Q}
\DeclareMathAccent{\wtilde}{\mathord}{largesymbols}{"65}
\DeclareMathOperator{\asymprank}{\underaccent{\wtilde}{R}}
\DeclareMathOperator{\asympsubrank}{\underaccent{\wtilde}{Q}}
\newcommand{\Proj}[1]{\ket{#1}\!\bra{#1}}
\newcommand{\F}{\mathbb{F}}
\newcommand{\Z}{\mathbb{Z}}
\newcommand{\R}{\mathbb{R}}
\newcommand{\C}{\mathbb{C}}
\newcommand{\N}{\mathbb{N}}
\DeclareMathOperator{\rk}{rank}
\newcommand{\cD}{\mathcal{D}}
\newcommand{\cE}{\mathcal{E}}
\newcommand{\cG}{\mathcal{G}}
\newcommand{\cH}{\mathcal{H}}
\newcommand{\cK}{\mathcal{K}}
\newcommand{\cI}{\mathcal{I}}
\newcommand{\cL}{\mathcal{L}}
\newcommand{\cM}{\mathcal{M}}
\newcommand{\cN}{\mathcal{N}}
\newcommand{\cS}{\mathcal{S}}
\newcommand{\bX}{\mathbf{X}}
\newcommand{\linspan}{\mathrm{span}}
\newcommand{\EAto}{\overset{\smash{*}}{\to}}
\newcommand{\Qto}{\overset{\smash{q}}{\to}}
\DeclareMathOperator{\Hom}{Hom}
\newcommand{\changed}[1]{#1}
\newcommand{\asympleq}{\lesssim}
\DeclareMathOperator{\Tr}{Tr}
\let\Re\relax
\let\Im\relax
\DeclareMathOperator{\Re}{Re}
\DeclareMathOperator{\Im}{Im}
\title{Quantum asymptotic spectra of\\ graphs and non-commutative graphs,\\ and quantum Shannon capacities}
\author{Yinan Li \thanks{Centrum Wiskunde \& Informatica and QuSoft, Science Park 123, 1098XG Amsterdam, Netherlands ({\tt Yinan.Li@cwi.nl}). Supported by ERC Consolidator Grant 615307-QPROGRESS.}
\and 
Jeroen Zuiddam %
\thanks{Institute for Advanced Study, 1 Einstein Drive, Princeton, NJ 08540, USA ({\tt jzuiddam@ias.edu}). Supported by NWO (617.023.116), the QuSoft Research Center for Quantum Software and the National Science Foundation under Grant No.~DMS-1638352. }
}
\date{\today}
\begin{document}

\maketitle
\begin{abstract}
We study quantum versions of the Shannon capacity of graphs and non-commutative graphs.
We introduce the asymptotic spectrum of graphs with respect to quantum and entanglement-assisted homomorphisms, and we introduce the asymptotic spectrum of non-commutative graphs with respect to entanglement-assisted homomorphisms.
We apply Strassen's spectral theorem (J.~Reine Angew.\ Math., 1988) in order to obtain dual characterizations of the corresponding Shannon capacities and asymptotic preorders in terms of their asymptotic spectra.
This work extends the study of the asymptotic spectrum of graphs initiated by Zuiddam (Combinatorica, 2019) to the quantum domain.

\changed{We then exhibit spectral points in the new quantum asymptotic spectra and discuss their relations with the asymptotic spectrum of graphs. In particular, we prove that the (fractional) real and complex Haemers bounds upper bound the quantum Shannon capacity, which is defined as the regularization of the quantum independence number (Man\v{c}inska and Roberson, J.\ Combin.\ Theory Ser.\ B, 2016), and that the fractional real and complex Haemers bounds are elements in the quantum asymptotic spectrum of graphs. This is in contrast to the Haemers bounds defined over certain finite fields, which can be strictly smaller than the quantum Shannon capacity. Moreover, since the Haemers bound can be strictly smaller than the Lov{\'a}sz theta function (Haemers, IEEE Trans.\ Inf.\ Theory, 1979), we find that the quantum Shannon capacity and the Lov{\'a}sz theta function do not coincide. As a consequence, two well-known conjectures in quantum information theory, namely,
\begin{itemize}
\item The entanglement-assisted zero-error capacity of a classical channel is equal to the Lov\'asz theta function; and
\item Maximally entangled states and projective measurements are sufficient to achieve the entanglement-assisted zero-error capacity,
\end{itemize}
cannot both be true.}

\end{abstract}
\section{Introduction}
This paper studies quantum variations of the \emph{Shannon capacity} of graphs via the \emph{theory of asymptotic spectra}.
The Shannon capacity of a graph $G$, introduced by Shannon in~\cite{MR0089131}, is defined as 
\[
\Theta(G):=\sup_{n\geq 1}\sqrt[n]{\alpha(G^{\boxtimes n})}=\lim_{n\to\infty}\sqrt[n]{\alpha(G^{\boxtimes n})},
\]
where $\alpha(G)$ denotes the \emph{independence number} of~$G$ and where~$G^{\boxtimes n}$ denotes the $n$-th \emph{strong graph product} power of~$G$. (All concepts used in the introduction will be explicitly defined in Section~\ref{sec: preliminaries}.)
The definition of Shannon capacity is motivated by the study of \emph{classical communication channels}. One associates to a classical channel the \emph{confusability graph} with vertices given by the input symbols of the channel, and edges given by the pairs of input symbols that are mapped to the same output by the channel with a nonzero probability. %
The 
Shannon capacity of the confusability graph then measures the amount of information that can be transmitted over the channel without error, asymptotically. 

Deciding whether $\alpha(G) \geq k$ is NP-complete~\cite{MR0378476}, and Shannon capacity is not even known to be a computable function.
A natural approach to study the Shannon capacity is to construct graph parameters that are upper bounds on Shannon capacity. Shannon himself introduced an upper bound in~\cite{MR0089131}, which is known as the \emph{fractional packing number}, \emph{Rosenfeld number}~\cite{10.2307/2035288}, or \emph{fractional clique cover number}\footnote{The fractional packing number is defined in terms of the classical channel itself and equals the fractional clique cover number~\cite[Section~64.8]{schrijver2003combinatorial} of the corresponding confusability graph. See, e.g.,~the discussion in~\cite[Appendix A]{acin2017new}.}. 
In the seminal work of Lov\'asz~\cite{lovasz1979shannon}, the \emph{Lov\'asz theta function} $\vartheta$ was introduced to better upper bound the Shannon capacity. Remarkably, the theta function can be written as a semidefinite program that is efficiently computable. %
Using the theta function, Lov\'asz proved that $\alpha(C_5^{\boxtimes 2})^{1/2} = \Theta(C_5)= \vartheta(C_5) = \sqrt{5}$, where $C_n$ is the $n$-cycle graph. Lov\'asz further conjectured that $\Theta(G)=\vartheta(G)$ for every graph $G$. 
This conjecture was shown to be false by Haemers, who introduced the \emph{Haemers bound} $\cH^\F$ (over a field $\F$) as an upper bound on the Shannon capacity. The Haemers bound can be strictly smaller than the Lov\'asz theta function (e.g.,~on the complement of the Schl\"afli graph~\cite{haemers1979some}).

Even for the odd cycle graphs $C_{2k+1}$ with $k\geq 3$, it is still open whether $\Theta(C_{2k+1}) = \vartheta(C_{2k+1})$. For example, the currently best lower bound on $\Theta(C_7)$ is $\sqrt[5]{367}\approx 3.25787$~\cite{polak2018new}, whereas $\vartheta(C_7) \approx 3.31766$.

Recently, a \emph{dual characterization} of the Shannon capacity was found by Zuiddam in~\cite{zuiddam2018asymptotic} via the theory of asymptotic spectra. 
This theory was developed by Strassen in \cite{strassen1988asymptotic}. \changed{(See also the exposition in \cite[Chapter 1]{phd}.)}
In the general theory we are given a \emph{commutative semiring}~$S$ with addition $+$, multiplication $\cdot$, and a preorder~$\leq$ on $S$ that satisfies the properties to be a ``Strassen preorder'' (see Definition~\ref{strassen preorder}). %
For $a \in S$, the \emph{rank} $\rank(a)$ is defined as the minimum number $n$ such that~$a\leq n$, and the \emph{subrank}~$\subrank(a)$ is defined as the maximum number $n$ such that~$n\leq a$, where $n\in S$ stands for the sum of~$n$ times the element~$1 \in S$.
The \emph{asymptotic rank} of $a$ is defined as the regularization $\lim_{n\to\infty}\sqrt[n]{\rank(a^n)}$ and the \emph{asymptotic subrank} as the regularization $\lim_{n\to\infty}\sqrt[n]{\subrank(a^n)}$. %
\changed{The \emph{asymptotic spectrum} $\bX(S,\leq)$ of $S$ with respect to $\leq$ is the set of all $\leq$-\emph{monotone semiring homomorphisms} $S \to \R_{\geq 0}$. Strassen proved that the asymptotic spectrum $\bX(S,\leq)$ characterizes the ``asymptotic preorder'' $\asympleq$ induced by a Strassen preorder $\leq$. We have $a \asympleq b$ if there exists a sequence $(x_n)_{n\in \N} \subseteq \N$ such that $\inf_n x_n^{1/n} = 1$ and such that for all~$n \in \N$, the relation $a^n \leq x_n \cdot b^n$ holds. This leads to the following dual characterizations of the asymptotic rank and subrank (\cite[Theorem 3.8]{strassen1988asymptotic} and \cite[Corollary~2.14]{phd}).
Let $S$ be a commutative semiring and $\leq$ be a Strassen preorder. 
\begin{itemize}
\item For any $a\in S$ satisfying that there exists $\phi\in\bX(S,\leq)$ such that $\phi(a)\geq 1$, we have
$$\lim_{n\to\infty}\sqrt[n]{\rank(a^n)}=\max\{\phi(a):~\phi\in\bX(S,\leq)\}.$$
\item For any $b\in S$ satisfying that there exists $k\in\N$ such that $b^k\geq 2$, we have
$$\lim_{n\to\infty}\sqrt[n]{\subrank(b^n)}=\min\{\phi(b):~\phi\in\bX(S,\leq)\}.$$
\end{itemize}}

The theory of asymptotic spectra was originally motivated by the study of \emph{tensor rank} and \emph{asymptotic tensor rank}~\cite{Strassen:1986:AST:1382439.1382931,strassen1987relative,strassen1988asymptotic,strassen1991degeneration}, which are the keys to understand the arithmetic complexity of matrix multiplication~(see, e.g., \cite{burgisser1997algebraic}). 
Here we let $S$ be any family of isomorphism classes of tensors 
\changed{(under local general linear group actions on local spaces)}
that is closed under direct sum and tensor product, and which contains the ``diagonal tensors''. %
We let $\leq$ be the restriction preorder, which in quantum information theory language is the preorder corresponding to convertibility by stochastic local operations and classical communication~(SLOCC).
The restriction preorder is a Strassen preorder, the rank as defined above equals tensor rank, and the asymptotic rank as defined above equals asymptotic tensor rank.
\changed{Recently, Christandl, Vrana and Zuiddam in~\cite{christandl2017universalproc} constructed an infinite family of elements in the asymptotic spectrum of tensors over the complex numbers.} A study of tensors with respect to local operations and classical communication was carried out in \cite{jensen2018asymptotic}.

Let us return to the study of graphs as in~\cite{zuiddam2018asymptotic}. Here $S$ is any family of isomorphism classes of graphs 
\changed{(under permutation group actions on vertices)}  
that is closed under the disjoint union and the strong graph product, and which contains the $n$-vertex empty graph~$\overline{K_n}$ for all $n\in\N$. Let $\leq$ be the \emph{cohomomorphism preorder}, which is defined by letting $G\leq H$ if there is a \emph{graph homomorphism} from the \emph{complement} of $G$ to the complement of $H$. 
\changed{Then the subrank of a graph equals the independence number and the rank of a graph equals the \emph{clique cover number} (which is the same as the \emph{chromatic number of the complement graph}). Furthermore, the asymptotic subrank coincides with the Shannon capacity and the asymptotic rank coincides with the fractional clique cover number.\footnote{The regularization of the clique cover number equals the fractional clique cover number~\cite{LOVASZ1975383}. See also~\cite[Theorem 67.17]{schrijver2003combinatorial}.} Zuiddam proved that the cohomomorphism preorder is a Strassen preorder~\cite{zuiddam2018asymptotic}. Thus the Shannon capacity (resp.~fractional clique cover number) equals the pointwise minimum (resp.~maximum) over the asymptotic spectrum of graphs.}
Known elements in the asymptotic spectrum of graphs are the Lov\'asz theta function $\vartheta$~\cite{lovasz1979shannon}, the \emph{fractional Haemers bound} $\cH_f^\F$ over any field $\F$~\cite{blasiak2013graph,bukh2018fractional}, the complement of the \emph{projective rank} $\overline{\xi}_f$~\cite{manvcinska2016quantum} and the fractional clique cover number $\overline{\chi}_f$~(see~\cite[Sec. $64.8$]{schrijver2003combinatorial})
\changed{Interestingly, for every field $\F$ of nonzero characteristic and $\epsilon>0$, there exist an explicit graph $G(\F,\epsilon)$ such that for every field $\F'$ of different characteristic, it holds that $\cH_f^{\F}(G(\F,\epsilon))<\epsilon\cH_f^{\F'}(G(\F,\epsilon))$~\cite[Theorem 19]{bukh2018fractional}. We thus know that there is an infinite family of elements in the asymptotic spectrum of graphs.}
\subsubsection*{Quantum Shannon capacity of graphs}

We now turn to the quantum setting. We consider two quantum variants of graph homomorphism. The first variant is characterized by the existence of perfect quantum strategies for the \emph{graph homomorphism game}~\cite{manvcinska2016quantum}, which is defined as follows. Two players Alice and Bob are given two graphs $G$ and $H$. During the game, the referee sends to Alice some vertex $g_A \in V(G)$ and to Bob some vertex $g_B \in V(G)$. Alice responds to the referee with a vertex $h_A \in V(H)$ and Bob \changed{responds} to the referee with a vertex~$h_B \in V(H)$. Alice and Bob win this instance of the $(G,H)$-homomorphism game, when their answer \changed{satisfies} the following two conditions:
\begin{gather}
\textnormal{if}~g_A=g_B,~{\rm then}~h_A=h_B\\
\changed{\textnormal{if}}~\{g_A,g_B\}\in E(G),~{\rm then}~\{h_A,h_B\}\in E(H).
\end{gather}
Alice and Bob are not allowed to communicate with each other after having received their input from the referee, but they may together decide on a strategy beforehand. 
It is not hard to see that Alice and Bob can win the $(G,H)$-homomorphism game with a classical strategy (i.e.~not sharing entangled states) if and only if there is a graph homomorphism from $G$ to~$H$. 
We say that there is a \emph{quantum homomorphism} from $G$ to $H$, and write $G\Qto H$, if Alice and Bob can win the $(G,H)$-homomorphism game using some shared entangled state.\footnote{It can be proved that if there is a perfect quantum strategy for the $(G,H)$-homomorphism game, it can be achieved using the maximally entangled state and projective measurements~\cite{manvcinska2016quantum}.} 
\changed{More precisely, $G\Qto H$ if there exist a $d\in \N$ and $d \times d$ projectors~$E_g^h\in M(d,\C)$ for every $g\in V(G)$ and $h\in V(H)$, such that the following two conditions hold:
\begin{gather}
\textnormal{for every $g\in V(G)$ we have $\textstyle\sum_{h\in V(H)} E_{g}^h=I_d$}\\
\textnormal{if $ \{g,g'\}\in E(G)$ and $\{h,h'\}\not\in E(H)$, then $E_g^h E_{g'}^{h'}=0$.}
\end{gather}}
The \emph{quantum cohomomorphism preorder} $\leq_q$ is defined by letting~$G\leq_q H$ if $\overline{G}\Qto\overline{H}$. The~\emph{quantum independence number} $\alpha_q(G)$ of $G$ is defined as the maximum number $n$ such that~$\overline{K_n}\leq_q G$ and the \emph{quantum Shannon capacity} $\Theta_q(G)$ of $G$ is defined as its regularization.

\subsubsection*{Entanglement-assisted Shannon capacity of graphs}
The second quantum variant of graph homomorphism comes from the study of \emph{entanglement-assisted} zero-error capacity of classical channels, which is a quantum generalization of Shannon's zero-error communication setting. In the zero-error communication model, Alice wants to transmit messages to Bob without error through some classical noisy channel. As we have mentioned, Shannon in~\cite{MR0089131} showed that the maximum number of zero-error messages Alice can send to Bob equals the independence number of the confusability graph. 
In the entanglement-assisted setting, the maximum number of messages that can be sent with zero error turns out to be also determined by the confusability graph and is called the \emph{entanglement-assisted independence number} (of the confusability graph)~\changed{\cite{cubitt2010improving}}. %
We define the \emph{entanglement-assisted Shannon capacity} as its regularization.
\changed{One can define a natural %
\emph{entanglement-assisted homomorphism} between graphs~\cite{cubitt2014bounds}, denoted by $G \EAto H$, 
by saying that $G \EAto H$ if there exists a $d\in \N$ and $d\times d$ positive semidefinite matrices $\rho$ and $(\rho_g^h\in M(d,\C): g\in V(G),\, h\in V(H))$, %
such that the following two conditions hold
\begin{gather}
\textnormal{for every $g\in V(G)$ we have $\textstyle\sum_{h\in V(H)} \rho_{g}^h=\rho$}\\
\textnormal{if $\{g,g'\}\in E(G)$ and $\{h,h'\}\not\in E(H)$, then $\rho_g^h \rho_{g'}^{h'}=0$}.
\end{gather}}
Let the \emph{entanglement-assisted cohomomorphism preorder} $\leq_*$ be defined by letting~$G\leq_* H$ if~$\overline{G}\EAto\overline{H}$. 
The entanglement-assisted independence number $\alpha_*(G)$ of $G$ can be equivalently defined as the maximum number $n$ such that~$\overline{K_n}\leq_* G$.
\changed{It is not hard to see that $G\leq H$ implies $G\leq_q H$ and $G\leq_q H$ implies $G\leq_* H$, which immediately implies that $\alpha(G)\leq\alpha_q(G)\leq\alpha_*(G)$ and $\Theta(G)\leq\Theta_q(G)\leq\Theta_*(G)$. It has been shown that there exist $G$ and $H$ such that $\alpha(G)<\alpha_q(G)$~\cite{cubitt2010improving,6466384} and $\Theta(H)<\Theta_q(H)$~\cite{Leung2012,briet2015entanglement}, which indicate that $\leq$ and $\leq_q$ are not the same preorder. It has been conjectured that the two quantum preorders $\leq_q$ and $\leq_*$ coincide, in which case one may interpret $G\Qto H$ in the communication setting as restricting to use {a} maximally entangled state and projective measurements~\cite{manvcinska2016quantum}.}

\subsubsection*{(Entanglement-assisted) Shannon capacity of non-commutative graphs}
Finally, we consider the setting of sending \emph{classical} zero-error messages through \emph{quantum channels}, \changed{where the input symbols are modeled by \emph{quantum states} (positive semidefinite matrices with trace~$1$) and the transition rules are modeled by \emph{completely positive and trace-preserving} (CPTP) maps.}
It turns out, analogous to the classical channel scenario, that the 
zero-error classical capacity of a quantum channel is characterized by the so called \emph{non-commutative graph} associated with the channel~\cite{duan2013}.
A non-commutative graph, or \emph{nc-graph} for short, is a subspace $S$ of the vector space of $n\times n$ complex matrices, satisfying $S^\dagger = S$ and $I \in S$.%
\footnote{In operator theory, nc-graphs are exactly operator systems. Duan in~\cite{duan2009super} and Cubitt, Chen and Harrow in~\cite{cubitt2011} have shown that every operator system is indeed associated to a quantum channel.} %
\changed{The independence number of an nc-graph $S$ is the largest integer $k$ such that there exist \emph{unit vectors} $\ket{\psi_1},\dots,\ket{\psi_k}$ such that $\ket{\psi_i}\!\bra{\psi_j}\perp S$ for any $i\neq j\in\{1,\dots,k\}$, where the orthogonality is with respect to the Gram-Schmidt inner product. The Shannon capacity $\Theta(S)$ of an nc-graph $S$ is given as the regularization: $\Theta(S)=\lim_{n\to\infty}\sqrt[n]{\alpha(S^{\otimes n})}$ ($\otimes$ stands for the Kronecker product). There is also a natural way to define the entanglement-assisted independence number and the entanglement-assisted Shannon capacity of nc-graphs. See for details Section~\ref{sec: ea nc graph}.}
There are natural preorders $\leq$ and $\leq_*$ on nc-graphs~\cite{stahlke2016} such that the independence number $\alpha(S)$ and the entanglement-assisted independence number $\alpha_*(S)$, defined in~\cite{duan2013}, equal the maximum number $n$ such that~$\overline{\cK_n} \leq S$ and the maximum number $n$ such that~$\overline{\cK_n} \leq_* S$~\cite{stahlke2016}, respectively. Here $\overline{\cK_n}$ is the nc-graph associated to the $n$-message perfect classical channel (whose confusability graph is $\overline{K_n}$).

\subsubsection*{Overview of our results}
In this paper, we extend the study of the asymptotic spectrum of graphs to the quantum domain. We introduce three new asymptotic spectra:
\begin{itemize}
\item the asymptotic spectrum of graphs $\bX(\cG,\leq_q)$ with respect to the quantum cohomomorphism preorder
\item the asymptotic spectrum of graphs $\bX(\cG,\leq_*)$ with respect to entanglement-assisted cohomomorphism preorder
\item the asymptotic spectrum of non-commutative graphs $\bX(\cS,\leq_*)$ with respect to the entanglement-assisted cohomomorphism preorder.
\end{itemize}
We prove that the preorders in these scenarios are Strassen preorders. This allows us to apply Strassen's spectral theorem to obtain characterizations of these asymptotic preorders in terms of their asymptotic spectra and dual characterizations of the corresponding Shannon capacities, respectively. 
\changed{We then exhibit elements in these quantum asymptotic spectra and study their relations with the asymptotic spectrum of graphs of~\cite{zuiddam2018asymptotic}. More precisely:
\begin{itemize}
\item We prove that the Lov\'asz theta function $\vartheta$ belongs to $\bX(\cG,\leq_*)$ (Theorem~\ref{thm: elements in leq_*}), which is contained in $\bX(\cG,\leq_q)$.
\item We prove that in addition, the complement of the projective rank and the fractional Haemers bounds $\cH_f^\C$ and $\cH_f^\R$ belong to $\bX(\cG,\leq_q)$, while the fractional clique cover number $\overline{\chi}_f$ and the fractional Haemers bound $\cH_f^{\F_q}$ over certain finite field $\F_q$ do not belong to $\bX(\cG,\leq_q)$ (Theorem~\ref{thm: element in leq q}).
\item We prove that a quantum version of the Lov\'asz theta function, introduced in~\cite{duan2013}, belongs to $\bX(\cS,\leq_*)$ (Theorem~\ref{quantum lovasz theta}). Moreover, there is a surjective map from $\bX(\cS,\leq_*)$ to $\bX(\cG,\leq_*)$ (Theorem~\ref{surje}).
\end{itemize}

Searching for elements in quantum asymptotic spectra also leads to new separation results between the quantum variants of Shannon capacity and known upper bounds. Remarkably, we derive that the Haemers bounds over $\R$ and $\C$ upper bound the quantum Shannon capacity, while the Haemers bounds over certain finite fields could be strictly smaller than that. Moreover, it is known that the complement of the Schl\"afli graph $G$ satisfies $\cH^\R(G)\leq 7<9=\vartheta(G)$~\cite{haemers1979some}. Thus, we separate the quantum Shannon capacity from the Lov\'asz theta function. This result connects two conjectures in quantum zero-error information theory. Namely, it has been conjectured that the quantum Shannon capacity equals the entanglement-assisted Shannon capacity~\cite{manvcinska2016quantum}, and that the entanglement-assisted Shannon capacity equals the Lov\'asz theta function~\cite{Beigi2010,5961832,Leung2012,duan2013,6466384,cubitt2014bounds,8260566}. Our results show that these two conjectures cannot both be true.\footnote{In the nc-graph setting, it has been shown in~\cite{8260566} that the entanglement-assisted Shannon capacity of nc-graphs can be \emph{strictly smaller} than a quantum generalization of the Lov\'asz theta function introduced in~\cite{duan2013}. See a detailed discussion in Section~\ref{sec: dual characterization nc graphs}.}  
}

\subsubsection*{Organization of this paper}
\changed{In Section~\ref{sec: preliminaries} we cover
\begin{itemize}
\item the basic definitions of graph theory, the Lov{\'a}sz theta function, the fractional Haemers bounds, the projective rank and the fractional clique cover number
\item the theory of asymptotic spectra of Strassen and its application to graphs
\item the definition and properties of the quantum variants of graph homomorphisms
\item the definition and properties of non-commutative graphs and their homomorphisms.
\end{itemize}}
In Section~\ref{sec: dual characterization graphs} we study the quantum Shannon capacity and the entanglement-assisted Shannon capacity via the corresponding asymptotic spectra. 
In Section~\ref{sec: dual characterization nc graphs} we study the entanglement-assisted Shannon capacity of non-commutative graphs via the corresponding asymptotic spectrum.\section{Preliminaries}\label{sec: preliminaries}
\subsection{Graphs, independence number, and Shannon capacity}
In this paper we consider only finite simple graphs, so graph will mean finite simple graph. 
For a graph $G$, we use $V(G)$ to denote the vertex set of $G$ and $E(G)$ to denote the edge set of $G$. 
We write~$\{g,g'\}\in E(G)$ to denote an edge between vertex $g$ and $g'$. Since our graphs are simple, $\{g,g'\} \in E(G)$ implies that $g\neq g'$. 
The complement of $G$ is the graph~$\overline{G}$ with $V(\overline{G})=V(G)$ and $E(\overline{G})=\{\{g,g'\}:\{g,g'\}\not\in E(G)~{\rm and}~g\neq g'\}$. 
(We emphasize that when we write~$\{g, g'\} \not\in E(G)$ we include the case that $g = g'$.)
For $n\in\N$, the complete graph~$K_n$ is the graph with $V(K_n)=[n]:=\{1,2,\dots,n\}$ and $E(K_n) = \{\{i,j\}: i\neq j\in[n]\}$. 
Thus $K_0 = \overline{K_0}$ is the empty graph and $K_1 = \overline{K_1}$ is the graph consisting of a single vertex and no edges. A graph homomorphism from $G$ to $H$ is a map $f: V(G)\to V(H)$, such that for all~$g,g' \in V(G)$, $\{g,g'\}\in E(G)$ implies $\{f(g),f(g')\}\in E(H)$. We write $G\to H$ if there exists a graph homomorphism from $G$ to $H$.

A clique of $G$ is a subset $C$ of $V(G)$, such that for every $g\neq g'\in C$ \changed{we have} $\{g,g'\}\in E(G)$.
The size of the largest clique of $G$ is called the clique number of $G$ and is denoted by $\omega(G)$. Equivalently, 
\begin{equation}\label{eq: clique number}
\omega(G)=\max\{n\in\N:K_n\to G\}.
\end{equation}
An independent set of $G$ is a clique of $\overline{G}$. The size of the largest independent set of $G$ is called the independence number of $G$ and is denoted by $\alpha(G)$. Equivalently,  
\begin{equation}\label{eq: independence number}
\alpha(G)=\max\{n\in\N:K_n\to \overline{G}\}.
\end{equation}
Let $G$ and $H$ be graphs.
The disjoint union $G\sqcup H$ is the graph with $V(G\sqcup H)=V(G)\sqcup V(H)$ and $E(G\sqcup H)=E(G)\sqcup E(H)$. The strong graph product $G\boxtimes H$ is the graph with 
\begin{gather*}
V(G\boxtimes H)=V(G)\times V(H):=\{(g,h):g\in V(G),\, h\in V(H)\}\\
\begin{split}
E(G\boxtimes H)=\{\{(g,h),(g',h')\}:&~(g=g'~{\rm and}~\{h,h'\}\in E(H))\\
&{\rm or}~(\{g,g'\}\in E(G)~{\rm and}~\{h,h'\}\in E(H))\\
&{\rm or}~(\{g,g'\}\in E(G)~{\rm and}~h=h')\}.
\end{split}
\end{gather*}
We use $G^{\boxtimes N}$ to denote $\underbrace{G\boxtimes\cdots\boxtimes G}_N$. 
The Shannon capacity of $G$~\cite{MR0089131} is defined as 
\begin{equation}\label{eq: Shannon Capacity of graphs}
\Theta(G)\coloneqq\lim_{N\to\infty}\sqrt[N]{\alpha(G^{\boxtimes N})}.
\end{equation}
This limit exists and equals the supremum $\sup_N \sqrt[N]{\alpha(G^{\boxtimes N})}$ by Fekete's lemma.

\subsection{Upper bounds on the Shannon capacity} %
For any $d \in \N$ and any field $\F$, let $M(d, \F)$ be the space of $d\times d$ matrices with coefficients in~$\F$.
For $A \in M(d, \F)$ we let $A^T$ denote the transpose of $A$.
Let $I_d \in M(d, \F)$ be the $d\times d$ identity matrix. 
Let $A \in M(d, \C)$.
Then $A^\dagger$ denotes the complex conjugate of $A$. %
The element $A$ is called a projector if $A^\dagger = A$ and $AA = A$, that is, if~$A$ is Hermitian and idempotent.

Deciding whether $\alpha(G) \geq k$ is NP-hard~\cite{MR0378476} and it is not known whether the Shannon capacity~$\Theta(G)$ is a computable function. In the study of $\Theta(G)$, the following graph parameters have been introduced that upper bound $\Theta(G)$.

\subsubsection*{Lov\'asz theta function $\vartheta(G)$} An orthonormal representation of a graph $G$ is a collection of unit vectors $U=(u_g\in\R^d:g\in V(G))$ indexed by the vertices of $G$, such that non-adjacent vertices receive orthogonal vectors: $u^T_gu_{g'}=0$ for all $g\neq g', \{g,g'\}\not\in E(G)$. The celebrated Lov\'asz theta function~\cite{lovasz1979shannon}, is defined as
\begin{equation}\label{eq: lovasz theta}
\vartheta(G):=\min_{c,U}\max_{g\in V(G)}\frac{1}{(c^Tu_g)^2},
\end{equation}
where the minimization goes over unit vectors $c \in \R^d$ and orthonormal representations $U$ of $G$.
Lov\'asz proved that 
\[
\Theta(G) \leq \vartheta(G).
\]
Equation~\eqref{eq: lovasz theta} is a semidefinite program that is efficiently computable. There are several useful alternative characterizations of $\vartheta$ in the literature~\changed{\cite{lovasz1979shannon}}.
\subsubsection*{Fractional Haemers bound $\cH_f^\F(G)$} A $d$-representation of a graph $G$ over a field $\F$ is a matrix $M\in M(|V(G)|,\F)\otimes M(d,\F)$ of the form $M=\sum_{g,g'\in V(G)}e_ge_{g'}^{\dagger}\otimes M_{g,g'}$, such that $M_{g,g}=I_d$ for all $g\in V(G)$ and $M_{g,g'}=0$ if $g\neq g', \{g,g'\}\not\in E(G)$. Let $\cM_\F^d(G)$ be the set of all $d$-representation of $G$ over $\F$. The fractional Haemers bound~\cite{blasiak2013graph,bukh2018fractional}, as a fractional version of the Haemers bound~\cite{haemers1979some}, is defined as
\begin{equation}\label{eq: fractional Haemers bound}
\cH_f^\F(G):=\inf \bigl\{\rk(M)/ d : M\in\cM_\F^d(G),~d\in\N \bigr\}.
\end{equation}
The original Haemers bound~\cite{haemers1979some} of a graph $G$ can be formulated as:
\begin{equation}\label{eq: hamears bound}
\cH^\F(G)=\min \bigl\{\rk(M) : M\in\cM_\F^1(G)\bigr\}
\end{equation}
and we have
\[
\Theta(G) \leq \cH_f^\F(G)\leq\cH^\F(G).
\]
Whether the (fractional) Haemers bound is computable remains unknown. Interestingly, for any field $\F$ of nonzero characteristic and $\epsilon>0$, there exists an explicit graph $G = G(\F, \epsilon)$ so that if $\F'$ is any field with a different characteristic, $\cH_f^\F(G)\leq\epsilon\cH_f^{\F'}(G)$~\cite[Theorem~19]{bukh2018fractional}. %

\subsubsection*{Projective rank $\xi_f(G)$} A $d/r$-representation of a graph~$G$ is a collection of rank-$r$ projectors $(E_g\in M(d,\C): g\in V(G))$, such that %
$E_gE_{g'}=0$ if $\{g,g'\}\in E(G)$. The projective rank~\cite{manvcinska2016quantum} is defined as
\begin{equation}\label{eq: projective rank}
\xi_f(G):=\inf \bigl\{d/r:G~{\rm has~a}~d/r~{\rm representation}\bigr\}.
\end{equation}
The complement of the projective rank, $\overline{\xi}_f(G) \coloneqq \xi_f(\overline{G})$, is an upper bound on the Shannon capacity,
\[
\Theta(G) \leq \overline{\xi}_f(G).
\]

\subsubsection*{Fractional clique cover number $\overline{\chi}_f(G)$}
The fractional packing number can be written as a linear program (of large size), whose dual program is the fractional clique cover number (see, e.g.,~\cite{schrijver2003combinatorial} or~\cite[Eq.~(A.16)]{acin2017new}).  
Explicitly,
\begin{equation}
\begin{split}
\overline{\chi}_f(G):&=\min\sum_{C} s_C,~\text{s.t.}~s_C\geq 0~\text{for every clique }C,~\sum_{C\ni g}s_C\geq 1~\text{for every vertex }g\in V(G), \\
&=\max\sum_{g} t_g~\text{s.t.}~t_g\geq 0~\text{for every vertex }g\in V(G)~\sum_{g\in C}t_g\leq 1~\text{for every clique }C.
\end{split}
\end{equation}
where a clique $C$ of $G$ is an independent set of $\overline{G}$.
It is known that 
\begin{equation}\label{eq: fractional clique cover}
\Theta(G) \leq \overline{\chi}_f(G)=\lim_{n\to\infty}\sqrt[n]{\overline{\chi}(G^{\boxtimes n})}~(\text{e.g.~see}~\cite{schrijver2003combinatorial}).
\end{equation}

\subsection*{Relations between graph parameters}
We know the following inequalities among the graph parameters that we have just defined:
\begin{gather}
\Theta(G) \leq \vartheta(G) \leq \overline{\xi}_f(G) \leq \overline{\chi}_f(G)\label{ineq1}\\
\Theta(G) \leq \cH^\F_f(G) \leq \overline{\chi}_f(G)\label{ineq2}\\
\cH^\C_f(G) \leq \cH^\R_f(G) \leq \overline{\xi}_f(G).\label{ineq3}
\end{gather}
The inequalities in \eqref{ineq1} can be found in~\cite{lovasz1979shannon,manvcinska2016quantum}.
The inequalities in \eqref{ineq2} follow from the work in~\cite{bukh2018fractional}.
The first inequality in \eqref{ineq3} is actually an equality (cf. Prop.~\ref{prop: real and complex}), and the argument that the real fractional Haemers bound is at most the complement of the real projective rank $\overline{\xi}^{\smash{\R}}(G)$ is the following: We can obtain the definition of $\overline{\xi}^{\smash{\R}}_{\smash{f}}(G)$ from the definition of $\cH^{\smash{\R}}_{\smash{f}}(G)$ by requiring the $d$-representations of $G$ to be positive semidefinite, as implicitly shown in~\cite{Hogben17}.

\subsection{Asymptotic spectra and Strassen's spectral theorem}
We present some fundamental abstract concepts and theorems from Strassen's theory of asymptotic spectra. For a detailed description, we refer the reader to~\cite{strassen1988asymptotic,phd}.

A semiring $(S,+,\cdot,0,1)$ is a set $S$ equipped with a binary addition operation $+$, a~binary multiplication operation $\cdot$, and elements $0,1\in S$, such that for all $a,b,c\in S$, \changed{we have}
\begin{gather}
(a+b)+c=a+(b+c),\, a+b=b+a\\
0+a=a,\, 0\cdot a=0,\, 1\cdot a=a\\
(a\cdot b)\cdot c=a\cdot(b\cdot c)\\ 
a\cdot(b+c)=a\cdot b+a\cdot c.
\end{gather}
A semiring $(S, +, \cdot, 0, 1)$ is commutative if for all $a,b \in S$, \changed{we have} $a\cdot b=b\cdot a$.
For any natural number~$n\in\N$, let $n \in S$ denote the sum of $n$ times the element $1 \in S$. %

A preorder $\leq$ on $S$ is a relation such that for any $a,b,c\in S$, \changed{we have that} $a\leq a$, and that if $a\leq b$ and $b\leq c$, then $a\leq c$.
\begin{definition}\label{strassen preorder}
A preorder $\leq$ on $S$ is a Strassen preorder if for all $a,b,c \in S$ and $n,m \in \N$, \changed{we have}
\begin{gather}
\textnormal{$n\leq m$ in $\N$ if and only if $n\leq m$ in $S$}\\
\textnormal{if $a\leq b$, then $a+c\leq b+c$ and $a\cdot c\leq b\cdot c$}\\
\textnormal{if $b\neq 0$, then there exists an $r\in\N$ such that $a\leq r\cdot b$}.
\end{gather}
\end{definition}

Let $S = (S, +, \cdot, 0, 1)$ and $S' = (S', +, \cdot, 0, 1)$ be semirings.
A semiring homomorphism from $S$ to~$S'$ is a map $\phi: S\to S'$ such that $\phi(a + b) = \phi(a) + \phi(b)$, $\phi(a \cdot b) = \phi(a) \cdot \phi(b)$ for all~$a,b \in S$, and~$\phi(1) = 1$. %
Let $\R_{\geq 0} = (\R_{\geq0}, +, \cdot, 0, 1)$ be the semiring of non-negative real numbers with the usual addition and multiplication operations. The asymptotic spectrum $\bX(S, \leq)$ of the semiring $S = (S, +, \cdot, 0, 1)$ with respect to the preorder $\leq$ is the set of $\leq$-monotone semiring homomorphisms from $S$ to $\R_{\geq 0}$, i.e.
\begin{equation}\label{Eq: asymptotic spectrum}
\bX(S,\leq)\coloneqq\{\phi\in\Hom(S,\R_{\geq 0}) : \forall a,b\in S,~a\leq b~\Rightarrow~\phi(a)\leq\phi(b)\}.
\end{equation}

Let $a\in S$. The subrank of $a$ is defined as $\subrank(a)\coloneqq\max\{n\in\N:n\leq a\}$. The rank of $a$ is defined as $\rank(a)\coloneqq\min\{n\in\N:a\leq n\}$.  The asymptotic subrank and asymptotic rank of~$a$ are defined as
\begin{equation}\label{defasympsubrank}
\asympsubrank(a) \coloneqq \lim_{N\to\infty}\sqrt[N]{\subrank(a^N)},~\text{and}~\asymprank(a) \coloneqq \lim_{N\to\infty}\sqrt[N]{\rank(a^N)}.
\end{equation}
Fekete's lemma implies that the limits in \eqref{defasympsubrank} indeed exist and can be replaced by a supremum and an infimum, that is,
\[
\asympsubrank(a) = \sup_{N}\sqrt[N]{\subrank(a^N)},~\text{and}~\asymprank(a) = \inf_{N}\sqrt[N]{\rank(a^N)}.
\]
Strassen proved the following dual characterizations of $\asympsubrank(a)$ and $\asymprank(a)$ in terms of the asymptotic spectrum.
\begin{theorem}[{\cite[Theorem 3.8]{strassen1988asymptotic} \changed{and} \cite[Cor.~2.14]{phd}}]\label{thm: asymptotic subrank}
Let $S$ be a commutative semiring and let~$\leq$ be a Strassen preorder on $S$. For any $a\in S$ such that $1\leq a$ and $2\leq a^k$ for some $k\in\N$, \changed{we have that}
\begin{equation}
\asympsubrank(a)=\min_{\phi\in\bX(S,\leq)}\phi(a),~\text{and}~\asymprank(a) = \max_{\phi\in\bX(S,\leq)}\phi(a).
\end{equation}
\end{theorem}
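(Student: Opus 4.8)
The plan is to prove each of the two equalities by decomposing it into an ``easy'' inequality valid for every individual spectral point and a ``hard'' inequality that expresses the richness of $\bX(S,\le)$; I would reduce the hard part to Strassen's dual characterization of the \emph{asymptotic preorder} $\asympleq$ (the relation $a\asympleq b$ meaning that there exist $(x_n)\subseteq\N$ with $\inf_n x_n^{1/n}=1$ and $a^n\le x_n b^n$ for all $n$), and that auxiliary theorem is where I expect the real difficulty to lie.

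For the easy inequalities, fix $\phi\in\bX(S,\le)$. Writing $r=\rank(a^N)$ we have $a^N\le r$ in $S$, so by $\le$-monotonicity, multiplicativity, additivity and $\phi(1)=1$ we get $\phi(a)^N=\phi(a^N)\le\phi(r)=r$, hence $\phi(a)\le\rank(a^N)^{1/N}$; taking the infimum over $N$ yields $\phi(a)\le\asymprank(a)$. Dually, from $\subrank(a^N)\le a^N$ one gets $\subrank(a^N)\le\phi(a)^N$, hence $\phi(a)\ge\asympsubrank(a)$. So the supremum and infimum of $\phi(a)$ over $\bX(S,\le)$ are squeezed between $\asympsubrank(a)$ and $\asymprank(a)$; that they are attained follows by compactness: $c\le\rank(c)$ forces $\phi(c)\le\rank(c)$ for every $c$, so $\bX(S,\le)$ embeds into the compact product $\prod_{c\in S}[0,\rank(c)]$, and it is closed there since the two homomorphism identities, $\phi(1)=1$, and $\phi(x)\le\phi(y)$ for $x\le y$ are all closed conditions; as $\phi\mapsto\phi(a)$ is continuous, once $\bX(S,\le)$ is known nonempty the extrema are achieved.

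For the hard inequalities I would invoke the duality $a\asympleq b\iff \phi(a)\le\phi(b)$ for all $\phi\in\bX(S,\le)$. Granting it, note that $a^n\le\rank(a^n)$ gives $a^n\asympleq\rank(a^n)$, while $a^k\asympleq m$ implies $\asymprank(a)\le m^{1/k}$ (expand, take $kj$-th roots, let $j\to\infty$), so $\asymprank(a)=\inf\{m^{1/k}: a^k\asympleq m\}$, and symmetrically $\asympsubrank(a)=\sup\{m^{1/k}: m\asympleq a^k\}$. Now if $\max_\phi\phi(a)<\asymprank(a)$, pick a real $t$ strictly in between and $n$ large enough that $\lceil t^n\rceil^{1/n}<\asymprank(a)$ (possible since $\lceil t^n\rceil^{1/n}\to t$); with $m=\lceil t^n\rceil$ we have $\phi(a^n)=\phi(a)^n<t^n\le m=\phi(m)$ for all $\phi$, hence $a^n\asympleq m$ and $\asymprank(a)\le m^{1/n}<\asymprank(a)$, a contradiction. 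Dually, if $\min_\phi\phi(a)>\asympsubrank(a)$ choose $t$ in between; the hypothesis $2\le a^k$ gives $\asympsubrank(a)\ge 2^{1/k}>1$, so $t>1$ and $\lfloor t^n\rfloor\ge 1$ for large $n$, and then $m=\lfloor t^n\rfloor$ satisfies $\phi(m)=m\le t^n<\phi(a^n)$ for all $\phi$, so $m\asympleq a^n$ and $\asympsubrank(a)\ge m^{1/n}>\asympsubrank(a)$, again a contradiction. (Applied to $2\not\asympleq 1$, the same duality produces a $\phi$ with $\phi(2)>\phi(1)$, so $\bX(S,\le)\ne\emptyset$.)

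The main obstacle is therefore the duality for $\asympleq$; its nontrivial direction is the contrapositive: given $a\not\asympleq b$, construct $\phi\in\bX(S,\le)$ with $\phi(a)>\phi(b)$. Here I would follow Strassen: first check that $\asympleq$ is again a Strassen preorder and reduce to the case $1\le b$; then unfold $a\not\asympleq b$ into the existence of $\lambda>1$ with $a^n\not\le\lambda^n b^n$ for all large $n$; then study the set $\mathcal{C}$ of $\le$-monotone additive functions $f:S\to\R_{\ge 0}$ with $f(1)=1$ that are sub- (or super-) multiplicative and separate $a$ from $b$ in the sense $f(a)\ge\lambda f(b)$. One shows $\mathcal{C}$ is nonempty (via an ``asymptotic rank relative to $b$'' construction), convex, and compact, and --- this is the technical heart --- that its extreme points are precisely the semiring homomorphisms it contains, by showing that a monotone additive $f$ failing multiplicativity at some $x,y$ splits as an average of two distinct functions of the same type via a ``twisting at $x$'' perturbation; a Krein--Milman extreme point of $\mathcal{C}$ is then the desired separating element of $\bX(S,\le)$. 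I expect setting up this function class and proving the extreme-point-implies-multiplicative lemma to be the delicate points; the remaining arguments are bookkeeping with Fekete's lemma and the Strassen-preorder axioms.
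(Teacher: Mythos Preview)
The paper does not give a proof of this theorem at all: it is stated in the preliminaries with the citation ``\cite[Theorem 3.8]{strassen1988asymptotic}, see also \cite[Cor.~2.14]{phd}'' and then used as a black box in Sections~\ref{sec: dual characterization graphs} and~\ref{sec: dual characterization nc graphs}. So there is no ``paper's own proof'' to compare against.

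That said, your outline is essentially the standard argument from Strassen's original work and the exposition in \cite{phd}: the easy pointwise inequalities $\asympsubrank(a)\le\phi(a)\le\asymprank(a)$, compactness of $\bX(S,\le)$ in the product topology to turn sup/inf into max/min, reduction of the hard inequalities to the duality $a\asympleq b\iff(\forall\phi)\,\phi(a)\le\phi(b)$, and the construction of separating spectral points via a convex set of monotone additive sub/super-multiplicative normalizations together with a Krein--Milman extreme-point argument showing extreme points are multiplicative. Your bookkeeping with $\lceil t^n\rceil$ and $\lfloor t^n\rfloor$ is correct (for the subrank side you implicitly also need $\lfloor t^n\rfloor^{1/n}\to t$ to get $m^{1/n}>\asympsubrank(a)$ for large $n$, which follows since $t>1$). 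The only part you have left genuinely sketchy is the ``extreme points are multiplicative'' lemma; you have correctly flagged it as the technical heart, and the perturbation-by-twisting idea you describe is indeed how it goes.
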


Besides asymptotic subrank and rank, the asymptotic spectrum of a commutative semiring with respect to a Strassen preorder $\leq$ also characterizes the asymptotic preorder~$\asympleq$ associated to $\leq$. The asymptotic preorder $\asympleq$ associated to $\leq$ is defined by $a \asympleq b$ if there is a sequence of natural numbers $(x_n)_{n \in \N} \subseteq \N$ such that $\inf_n (x_n)^{1/n} = 1$ and such that for all $n \in \N$ \changed{we have} $a^n \leq x_n\cdot b^n$. The dual characterization is that $a\asympleq b$ if and only if for all $\phi \in \bX(S,\leq)$ \changed{it holds that} $\phi(a) \leq \phi(b)$.
 See \cite[Cor.~2.6]{strassen1988asymptotic} and see also \cite[Theorem 2.12]{phd}.

Finally, we mention that the asymptotic spectrum is well-behaved with respect to subsemirings. Let $S$ be a commutative semiring, let $\leq$ be a Strassen preorder on $S$, and let $T\subseteq S$ be a subsemiring, which means that $0,1 \in T$ and that $T$ is closed under addition and multiplication. Then clearly the restriction ${\leq}_T$ of $\leq$ to $T$ is a Strassen preorder on $T$. For any $\phi \in \bX(S, \leq)$ the restricted function~$\phi|_T$ is clearly an element of $\bX(T, {\leq}_T)$. The opposite is also true.

\begin{theorem}[{\cite[Cor.~2.7]{strassen1988asymptotic} \changed{and} \cite[Cor.~2.17]{phd}}]\label{restr}
Let $S$ be a commutative semiring, let~$\leq$ be a Strassen preorder on $S$, and let $T\subseteq S$ be a subsemiring. For every element~$\phi \in \bX(T, {\leq}|_T)$ there is an element~$\psi \in \bX(S, \leq)$ such that $\psi$ restricted to $T$ equals $\phi$.
\end{theorem}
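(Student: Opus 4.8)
The statement to prove is \cref{restr}: given a subsemiring $T \subseteq S$ with a Strassen preorder $\leq$ on $S$, every $\phi \in \bX(T, {\leq}|_T)$ extends to some $\psi \in \bX(S, \leq)$. My strategy is the standard Zorn's lemma / maximality argument combined with a one-step extension lemma. First I would invoke a packaging of \cref{thm: asymptotic subrank} (or rather its proof ingredients): the key fact is that for a Strassen preorder, $\bX(S,\leq)$ is nonempty and, more importantly, that monotone homomorphisms are controlled by the asymptotic preorder $\asympleq$. Concretely, for any $\phi \in \bX(T, {\leq}|_T)$ and any element $a \in S$, the quantity
\[
  \overline{\phi}(a) := \inf\left\{ \phi(t)^{1/N} : t \in T,\ N \in \N,\ a^N \leq t \right\}
\]
is finite (by the third Strassen axiom, some $a^N \leq r$ for $r \in \N \subseteq T$) and gives an upper bound; dually one takes a supremum of $\phi(t)^{1/N}$ over $t \leq a^N$. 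The heart of the argument is to show there is a genuine homomorphism $\psi$ squeezed between these bounds.

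\textbf{Key steps.} (1) Reduce to the case where $S$ is generated over $T$ by a single element $a$: by Zorn's lemma on the poset of pairs $(T', \phi')$ where $T \subseteq T' \subseteq S$ is a subsemiring and $\phi' \in \bX(T', {\leq}|_{T'})$ extends $\phi$, ordered by extension, a maximal element $(T^*, \psi)$ exists (chains have upper bounds by taking unions); if $T^* \neq S$, pick $a \in S \setminus T^*$ and we must extend $\psi$ to the subsemiring $T^*[a]$, contradicting maximality. (2) For the one-step extension, I would follow Strassen's construction: define $\psi(a)$ to be a real number $\lambda$ lying in the nonempty closed interval $[\underline{\lambda}, \overline{\lambda}]$ determined by $\sup\{\psi(t)^{1/N} : t \leq a^N, t \in T^*\} \le \lambda \le \inf\{\psi(t)^{1/N}: a^N \le t, t\in T^*\}$, where the crucial point is that this interval is nonempty, i.e.\ $\underline\lambda \le \overline\lambda$. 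This nonemptiness follows because $t \leq a^N$ and $a^M \leq t'$ imply $t^M \leq a^{NM} \leq t'^N$ (using compatibility of $\leq$ with $+,\cdot$), hence $\psi(t)^M \leq \psi(t')^N$, i.e.\ $\psi(t)^{1/N} \le \psi(t')^{1/M}$. (3) Then extend $\psi$ to all of $T^*[a]$ multiplicatively and additively — every element of $T^*[a]$ is a polynomial in $a$ with coefficients in $T^*$, and one checks $\psi$ is well-defined on it (this uses that $\leq$ restricted to $T^*[a]$ is still a Strassen preorder and that the value assignment respects the relations); and (4) verify $\psi$ so defined is $\leq$-monotone on $T^*[a]$, which again reduces to the interval construction.

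\textbf{The main obstacle.} I expect the genuinely delicate step to be step (3): showing the extended $\psi$ is \emph{well-defined} as a semiring homomorphism on $T^*[a]$, not merely as a function assigning a value to $a$. An element of $T^*[a]$ may have several representations as a $T^*$-polynomial in $a$, and also the additive/multiplicative structure must be checked to be compatible with the chosen value $\lambda = \psi(a)$. Strassen handles this by first passing to the "Positivstellensatz"-style picture: the monotone homomorphisms $\bX(S,\leq)$ are in bijection with something like $\R_{\geq 0}$-points of a real spectrum, and extension is continuity of a restriction map plus compactness. An alternative, cleaner route — and the one I would actually write — is to not build $\psi$ by hand at all, but rather: endow $S$ with the preorder, form the asymptotic preorder $\asympleq$, observe that $\phi \in \bX(T,{\leq}|_T)$ is the same data as a $\asympleq$-monotone homomorphism on $T$, and then quote the abstract fact (this is essentially \cite[Theorem 3.8]{strassen1988asymptotic} applied after adjoining a formal variable, or a direct corollary in \cite{phd}) that the restriction map $\bX(S,\leq) \to \bX(T, {\leq}|_T)$ is surjective because $\bX(S,\leq)$ is the inverse limit / Boolean-algebra spectrum construction which is always surjective onto spectra of subalgebras. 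Since the excerpt explicitly allows me to cite results stated earlier, and \cref{thm: asymptotic subrank} already encapsulates the spectral theorem, the honest plan is: deduce \cref{restr} from \cref{thm: asymptotic subrank} by the Zorn's-lemma-plus-one-step-extension scheme above, with the one-step extension's nonempty-interval computation (step 2) being the only real content and the well-definedness (step 3) being the bookkeeping that must be done carefully but presents no conceptual difficulty.
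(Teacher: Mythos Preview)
The paper does not prove \cref{restr}; it states the result with a citation to Strassen and to the thesis, and remarks only that the proof is nonconstructive. So there is no in-paper argument to compare against directly.

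Your Zorn-plus-one-step-extension framework is the right shape, and you correctly flag step~(3) as the crux. But you then downgrade it to ``bookkeeping that must be done carefully but presents no conceptual difficulty,'' and that is where the proposal has a real gap. Choosing $\lambda$ in the interval $[\underline\lambda,\overline\lambda]$ built from \emph{power} relations $t\leq a^N$ and $a^M\leq t'$ does not by itself make the map $\sum_i t_i a^i \mapsto \sum_i \phi(t_i)\lambda^i$ well-defined and $\leq$-monotone on $T^*[a]$. For that you need: whenever $p(a)\leq q(a)$ in $S$ for arbitrary $T^*$-polynomials $p,q$, the inequality $p_\phi(\lambda)\leq q_\phi(\lambda)$ holds in $\R$. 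Your interval only encodes monomial constraints; genuinely additive relations such as $a+t_1\leq t_2$ or $a^2+t_1\leq t_2 a+t_3$ impose conditions on $\lambda$ that are invisible to $\underline\lambda,\overline\lambda$. Exhibiting a single $\lambda$ compatible with \emph{all} polynomial inequalities simultaneously is exactly where the content of the theorem lives, and it does not follow from the nonemptiness of $[\underline\lambda,\overline\lambda]$.

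Strassen's actual route (and the one in the cited thesis) does not attempt a bare one-variable extension. One passes to the Grothendieck ring of $S$, equips it with the archimedean preorder induced by $\leq$, and invokes a Kadison--Dubois / Becker--Schwartz type representation theorem to identify $\bX(S,\leq)$ with a compact space of ring homomorphisms into $\R$. Surjectivity of the restriction $\bX(S,\leq)\to\bX(T,{\leq}|_T)$ then comes from the corresponding extension property in real algebra (compactness plus the characterization of $\asympleq$), which is a genuine theorem, not bookkeeping. Your ``alternative, cleaner route'' points in this direction but does not supply it; if you want a self-contained argument, the missing ingredient is that representation theorem (or an equivalent Positivstellensatz), not merely \cref{thm: asymptotic subrank}.
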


We note that the proof of Theorem~\ref{restr} is nonconstructive.

\subsection{Semiring of graphs and the dual characterization of Shannon capacity}\label{known}
Let~$\cG$ be the set of isomorphism classes of (finite simple) graphs. The cohomomorphism preorder~$\leq$ on~$\cG$ is defined by $G\leq H$ if and only if $\overline{G}\to\overline{H}$, that is,~there is a graph homomorphism from the complement of~$G$ to the complement of $H$. Zuiddam proved in~\cite{zuiddam2018asymptotic} that $\cG = (\cG,\sqcup,\boxtimes,K_0,K_1)$ is a commutative semiring and that the cohomomorphism preorder $\leq$ is a Strassen preorder on $\cG$. 
By definition, the asymptotic spectrum of graphs~$\bX(\cG,\leq)$ consists of all maps $\phi : \cG \to \R_{\geq 0}$ such that, for all $G,H\in\cG$, \changed{the following conditions hold:}
\begin{gather}
\phi(G\sqcup H)=\phi(G)+\phi(H)\\
\phi(G\boxtimes H)=\phi(G)\cdot\phi(H)\\
\phi(\overline{K_1})=1\\
G\leq H \Rightarrow \phi(G)\leq\phi(H).
\end{gather}
Note that the subrank of a graph $G$ equals the independence number of $G$, since equation~\eqref{eq: independence number} is exactly 
\[
\alpha(G)=\max\{n\in\N:\overline{K_n}\leq G\}.
\]
By Theorem~\ref{thm: asymptotic subrank}, the Shannon capacity is dually characterized as
\begin{equation}\label{Eq: shannon capacity asymptotic spectrum}
\Theta(G)=\min_{\phi\in\bX(\cG,\leq)}\phi(G).
\end{equation}
The known elements belonging to the asymptotic spectrum of graphs are: the Lov\'asz theta function~$\vartheta$~\cite{lovasz1979shannon}, the fractional Haemers bound $\cH_{\smash{f}}^\F$ over any field $\F$~\cite{bukh2018fractional,blasiak2013graph}, the complement of projective rank $\overline{\xi}_f$~\cite{manvcinska2016quantum,cubitt2014bounds} and the fractional clique cover number $\overline{\chi}_f$~\cite{schrijver2003combinatorial}. Note that there are infinitely many elements in $\bX(\cG,\leq)$, due to the separation result in~\cite{bukh2018fractional} of the fractional Haemers bound over different fields. 
\begin{remark}\label{remark: largest elements}
We note that the fractional clique cover number is the pointwise largest element in~$\bX(\cG,\leq)$. \changed{See,~e.g.,~\cite{zuiddam2018asymptotic}.} This is because the rank of a graph $G$ equals the clique cover number and the asymptotic clique cover number equals the fractional clique cover number. 
\end{remark}

\subsection{Quantum variants of graph homomorphism}
We present mathematical definitions of the two quantum variants of graph homomorphisms, arising from the theory of non-local games and from quantum zero-error information theory, respectively.

\subsubsection{Quantum homomorphism}
\begin{definition}[Quantum homomorphism~\cite{manvcinska2016quantum}]\label{def: quantum homomorphism}
Let $G$ and $H$ be graphs. We say there is a quantum homomorphism from $G$ to $H$, and write $G\Qto H$, if there exist $d\in \N$ and $d \times d$ projectors~$E_g^h\in M(d,\C)$ for every $g\in V(G)$ and $h\in V(H)$, such that the following two conditions hold:
\begin{gather}
\textnormal{for every $g\in V(G)$ we have $\textstyle\sum_{h\in V(H)} E_{g}^h=I_d$}\\
\textnormal{if $ \{g,g'\}\in E(G)$ and $\{h,h'\}\not\in E(H)$, then $E_g^h E_{g'}^{h'}=0$.}
\end{gather}
\end{definition}
\begin{remark}\label{remquant}\hfill
\begin{itemize}
\item The first condition implies $E_g^h E_g^{h'}=0$ for all $g\in V(G)$ and $h\neq h'\in V(H)$. Namely, for a fixed $g\in V(G)$ and an arbitrary $h'\in V(H)$, $\textstyle\sum_{h\in V(H)} E_{g}^h=I_d$ implies $\textstyle\sum_{h\in V(H)} E_{g}^hE_{g}^{h'}=E_g^{h'}$. Since every $E_g^h$ is a projector, we have $\textstyle\sum_{h\neq h'} E_{g}^hE_g^{h'}=0$. We conclude that $E_{g}^hE_g^{h'}=0$ since projectors are also positive semidefinite.
\item For every collection of complex projectors $(E_g^h\in M(d,\C):g\in V(G), h\in V(H))$ satisfying the above two conditions, there exists a collection of real projectors which also satisfies the above two conditions. Namely, we take the collection of real matrices 
\[
(F_g^h=\begin{bmatrix}\Re(E_g^h)&\Im(E_g^h)\\ -\Im(E_g^h)&\Re(E_g^h)\\
\end{bmatrix}\in M(2d,\R):g\in V(G), h\in V(H)),
\]
where $\Re(E_g^h)$ and $\Im(E_g^h)$ denote the real part and the image part of $E_g^h$, respectively.
Noting that $E_g^hE_g^h=(\Re(E_g^h)^2-\Im(E_g^h)^2)+i(\Re(E_g^h)\Im(E_g^h)+\Im(E_g^h)\Re(E_g^h))=(\Re(E_g^h)+i\Im(E_g^h))=E_g^h$, we have 
\begin{align*}
F_g^hF_g^h &= \begin{bmatrix}\Re(E_g^h)^2-\Im(E_g^h)^2&\Re(E_g^h)\Im(E_g^h)+\Im(E_g^h)\Re(E_g^h)\\ -\Re(E_g^h)\Im(E_g^h)-\Im(E_g^h)\Re(E_g^h)&\Re(E_g^h)^2-\Im(E_g^h)^2 \end{bmatrix}\\
&=\begin{bmatrix}\Re(E_g^h)&\Im(E_g^h)\\ -\Im(E_g^h)&\Re(E_g^h)\\
\end{bmatrix}=F_g^h.
\end{align*}
Moreover, it is easy to verify that $(F_g^h:g\in V(G), h\in V(H))$ satisfies the conditions in Definition~\ref{def: quantum homomorphism}~\cite{manvcinska2016quantum}.
\end{itemize}
\end{remark}
It is easy to see that $G\to H$ implies $G\Qto H$. The opposite direction is not true~\cite{manvcinska2016quantum}. The quantum cohomomorphism preorder on graphs is defined by $G\leq_q H$ if and only if $\overline{G}\Qto\overline{H}$, and the quantum independence number as $\alpha_q(G)\coloneqq\max\{n\in\N:\overline{K_n}\leq_q G\}$. The quantum Shannon capacity is defined as $\Theta_q(G) \coloneqq\lim_{N\to\infty}\sqrt[N]{\alpha_q(G^{\boxtimes N})}=\sup_{N}\sqrt[N]{\alpha_q(G^{\boxtimes N})}$.

\subsubsection{Entanglement-assisted homomorphism}\label{conf}
\begin{definition}[Entanglement-assisted homomorphism~\cite{cubitt2014bounds}]\label{def: entanglement-assisted homomorphism}
Let $G$ and $H$ be graphs. We say there is an entanglement-assisted homomorphism from $G$ to $H$, and write $G\EAto H$, if there exist $d\in \N$ and $d\times d$ positive semidefinite matrices $\rho$ and $(\rho_g^h\in M(d,\C): g\in V(G),\, h\in V(H))$, %
such that the following two conditions hold
\begin{gather}
\textnormal{for every $g\in V(G)$ we have $\textstyle\sum_{h\in V(H)} \rho_{g}^h=\rho$}\\
\textnormal{if $\{g,g'\}\in E(G)$ and $\{h,h'\}\not\in E(H)$, then $\rho_g^h \rho_{g'}^{h'}=0$}.
\end{gather}
\end{definition}
\begin{remark}
We note that the positive semidefinite matrix $\rho$ can be further restricted to be positive definite. 
\end{remark}
The entanglement-assisted cohomomorphism preorder is defined by~$G\leq_* H$ if and only if $\overline{G}\EAto\overline{H}$. The entanglement-assisted independence number can be defined as~$\alpha_*(G)=\max\{n\in\N:\overline{K_n}\leq_* G\}$. The entanglement-assisted Shannon capacity of $G$ is defined as~$\Theta_*(G)\coloneqq\lim_{N\to\infty}\sqrt[N]{\alpha_*(G^{\boxtimes N})}=\sup_{N}\sqrt[N]{\alpha_*(G^{\boxtimes N})}$. 

It is easy to see that $G\Qto H$ implies $G\EAto H$. It remains unknown whether the reverse direction is true. In fact, as pointed out in~\cite{manvcinska2016quantum}, $G\Qto H$ can be interpreted in the zero-error communication setting by restricting to the use of maximally entanglement state and projective measurements. 

\subsection{(Entanglement-assisted) zero-error capacity of quantum channels}\label{sec: ea nc graph}
For related definitions in quantum information theory, we refer the reader to~\cite{Nielsen2010}. We use $A$ and~$B$ to denote the (finite-dimensional) Hilbert spaces of the sender (Alice) and the receiver (Bob), respectively. Let $\cL(A,B)$ be the space of linear operators from $A$ to $B$. Let $\cL(A) \coloneqq \cL(A,A)$. The space $\cL(A)$ is isomorphic to the matrix space $M(n,\C)$ with $n=\dim(A)$. Let $\cD(A)\subseteq \cL(A)$ be the set of all (mixed) quantum states, i.e.~all trace-$1$ positive semidefinite operators in $\cL(A)$. A quantum state $\rho\in\cD(A)$ is pure if it has rank $1$, i.e.~if it can be written as $\rho=\Proj{\psi}$ for some unit vector $\ket{\psi}\in A$. 
The support of %
a positive semidefinite matrix $P\in\cL(A)$ is the subspace of $A$ spanned by the eigenvectors with positive eigenvalues. A quantum channel $\cN:\cL(A)\to\cL(B)$ can be characterized by a completely positive and trace-preserving (CPTP) map. This is equivalent to saying that $\cN$ is of the form $\cN(\rho)=\sum_i N_i\rho N_i^\dagger$ for some linear operators $\{N_i\}_i\subseteq\cL(A,B)$, called the Choi--Kraus operators associated to~$\cN$, satisfying $\sum_i N_i^\dagger N_i= I_A$.{\footnote{The Choi--Kraus operators are not unique, but if we have $\cN(\rho)=\sum_{i=1}^m N_i\rho N_i^\dagger=\sum_{i=1}^m \hat{N}_i\rho \hat{N}_i^\dagger$, then there exists an $m$-by-$m$ unitary matrix $U=[u_{i,j}]_{i,j=[m]}$ such that $N_i=\sum_{j=1}^m u_{i,j}\hat{N}_j$ for each $i\in[m]$. See, e.g.,~\cite[Theorem~8.2]{Nielsen2010}.}}  

We focus on the setting in which Alice and Bob use a quantum channel to transmit classical zero-error messages.
To transmit $k$ classical messages to Bob through the quantum channel $\cN$, Alice prepares $k$ pairwise orthogonal states $\rho_1,\dots,\rho_k \in \cD(A)$, where orthogonality is defined with respect to the Hilbert--Schmidt inner product $\langle \rho, \sigma \rangle = \Tr(\rho^\dagger\sigma)$. Bob needs to distinguish the output states $\cN(\rho_1),\dots,\cN(\rho_k)$ perfectly, in order to obtain the messages without error. This is only possible when the output states are pairwise orthogonal. In this situation, without loss of generality, Alice may select the $\rho_i=\Proj{\psi_i}$ to be pure states for all~$i\in[k]$. Note that $\cN(\Proj{\psi})\perp \cN(\Proj{\phi})$ if and only if $\ket{\psi}\!\bra{\phi}\perp N_i^\dagger N_j$ for all $i\neq j$, where $\{N_i\}_i$ are the Choi--Kraus operators of $\cN$. 
We now see that the number of messages one can transmit through the channel $\cN$ is  determined by the linear space of matrices $S=\linspan\{N_i^\dagger N_j\}_{i,j}\subseteq\cL(A)$. 
Duan, Severini and Winter called the linear space $S$ the non-commutative graph (nc-graph) of a quantum channel~$\cN$~\cite{duan2013}. The nc-graphs may be thought of as the quantum generalization of confusability graphs of classical channels, mentioned in Section~\ref{conf}. In this analogy, for an nc-graph $S\subseteq \cL(A)$ the density operators~$\rho,\sigma\in\cD(A)$ are input symbols of the channel, and they are ``non-adjacent'' in the nc-graph~$S$ if~$\ket{\phi}\!\bra{\psi}\perp S$ for all $\ket{\phi}$ and~$\ket{\psi}$ in the support of $\rho$ and $\sigma$, respectively. As in the classical setting, ``non-adjacent vertices'' are nonconfusable.

Note that for every quantum channel $\cN$, the associated nc-graph $S$ satisfies $S^\dagger=S$ and $I_A\in S$, where $I_A$ is the identity operator in $\cL(A)$. It is shown in~\cite{duan2009super,cubitt2011} that any subspace $S\subseteq\cL(A)$ that satisfies $S^\dagger = S$ and $I_A \in S$ is associated to some quantum channel.
From now, we define a non-commutative graph or nc-graph as a subspace $S\subseteq\cL(A)$ satisfying $S^\dagger=S$ and $I_A\in S$. 
We define the independence number $\alpha(S)$ as the maximum $k$ such that there exist pure states $\ket{\psi_1},\dots,\ket{\psi_k}$ satisfying $\ket{\psi_i}\!\bra{\psi_j}\perp S$ for all $i\neq j\in[k]$. 
The Shannon capacity is defined as $\Theta(S):=\lim_{N\to\infty}\sqrt[N]{\alpha(S^{\otimes N})}$, where $S_1\otimes S_2:=\linspan\{E_1\otimes E_2:E_1\in S_1,E_2\in S_2\}$ denotes the tensor product of $S_1$ and $S_2$. One verifies that if $S_1$ and $S_2$ are the nc-graphs of $\cN_1$ and~$\cN_2$ respectively, then the tensor product $S_1\otimes S_2$ is the nc-graph of the quantum channel $\cN_1\otimes\cN_2$.
Then (the logarithm of)~$\Theta(S)$ is exactly the classical zero-error capacity of quantum channels whose nc-graph is $S$~\cite{duan2013}.

In the quantum setting, it will be natural to consider that Alice and Bob are allowed to share entanglement to assist the information transmission. To make use of the entanglement, say {$\ket{\Omega}\in A_0\otimes B_0$}, Alice prepares $k$ quantum channels $\cE_1,\dots,\cE_k:\cL(A_0)\to\cL(A)$ for encoding the classical messages. To send the $i$th message, Alice applies $\cE_i$ to her part of $\ket{\Omega}$, and sends the output state to Bob via the quantum channel~$\cN$. Bob needs to perfectly distinguish the output states $\rho_i=((\cN\circ\cE_i)\otimes\cI_{B_0})(\ket{\Omega}\!\bra{\Omega})$ for $i\in[k]$. The following lemma shows that the maximum number of classical message which can be sent via the quantum channel~$\cN$ in the presence of entanglement can be also characterized by the nc-graph $S$, as also mentioned in~\cite{duan2013,stahlke2016}. 
\begin{lemma}[\cite{stahlke2016}]\label{claim: orthogonal}
Let $\cN$, $S$, $\cE_i$ and $\ket{\Omega}$ as above.
Let $\{E_{i,\ell}\}_\ell$ and $\{E_{j,\ell'}\}_{\ell'}$ be the Choi--Kraus operators of $\cE_i$ and~$\cE_j$, respectively, for $i\neq j\in[k]$. Then 
\[
((\cN\circ\cE_i)\otimes\cI_{B_0})(\ket{\Omega}\!\bra{\Omega})\perp((\cN\circ\cE_j)\otimes\cI_{B_0})(\ket{\Omega}\!\bra{\Omega})
\]
is equivalent to 
\[
\linspan\{E_{i,\ell}\Tr_{B_0}(\ket{\Omega}\!\bra{\Omega})E^\dagger_{j,\ell'}\}\perp S.
\]
\end{lemma}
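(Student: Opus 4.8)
The plan is to unfold both output density operators into sums of pure states using the Choi--Kraus operators, reduce the Hilbert--Schmidt orthogonality of the outputs to orthogonality of their supports, and then recognize the resulting scalar equations as Hilbert--Schmidt pairings between $S$ and the span appearing in the statement.

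Concretely, I would fix Choi--Kraus operators $\{N_m\}_m \subseteq \cL(A,B)$ of $\cN$, so that $S = \linspan\{N_m^\dagger N_{m'}\}_{m,m'}$ by definition of the nc-graph of $\cN$, and write $\tau \coloneqq \Tr_{B_0}(\ket{\Omega}\!\bra{\Omega}) \in \cL(A_0)$, which is positive semidefinite, hence Hermitian. Composing $\cN$ with $\cE_i$ and acting on Alice's share of $\ket{\Omega}$ gives
\[
((\cN\circ\cE_i)\otimes\cI_{B_0})(\ket{\Omega}\!\bra{\Omega}) = \sum_{m,l} \ket{\psi_{i,m,l}}\!\bra{\psi_{i,m,l}}, \qquad \ket{\psi_{i,m,l}} \coloneqq (N_m E_{i,l}\otimes I_{B_0})\ket{\Omega},
\]
and likewise for $j$. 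Since the two output operators are positive semidefinite, their Hilbert--Schmidt inner product vanishes if and only if their supports are orthogonal, and the support of the operator above is $\linspan\{\ket{\psi_{i,m,l}} : m,l\}$; thus the outputs are orthogonal if and only if $\langle\psi_{i,m,l}\,|\,\psi_{j,m',l'}\rangle = 0$ for all $m,m',l,l'$.

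The next step is to compute each of these inner products. Expanding the definitions and using the partial-trace identity $\bra{\Omega}(M\otimes I_{B_0})\ket{\Omega} = \Tr\!\bigl(M\,\Tr_{B_0}(\ket{\Omega}\!\bra{\Omega})\bigr)$ for an operator $M$ on $A_0$,
\[
\langle\psi_{i,m,l}\,|\,\psi_{j,m',l'}\rangle = \bra{\Omega}\bigl(E_{i,l}^\dagger N_m^\dagger N_{m'} E_{j,l'} \otimes I_{B_0}\bigr)\ket{\Omega} = \Tr\bigl(E_{i,l}^\dagger N_m^\dagger N_{m'} E_{j,l'}\,\tau\bigr),
\]
and cycling the trace rewrites the right-hand side as the Hilbert--Schmidt pairing $\langle\, N_{m'}^\dagger N_m\,,\, E_{j,l'}\,\tau\, E_{i,l}^\dagger\,\rangle$. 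As $m,m'$ range over all pairs the operators $N_{m'}^\dagger N_m$ span $S$, so the outputs are orthogonal if and only if $S \perp \linspan\{E_{j,l'}\,\tau\, E_{i,l}^\dagger : l,l'\}$. Finally, since $\tau^\dagger = \tau$, this last span is the adjoint space of $\linspan\{E_{i,l}\,\tau\, E_{j,l'}^\dagger\}$, and because $S^\dagger = S$ for an nc-graph, orthogonality to $S$ is insensitive to passing to adjoints; hence the condition is equivalent to $\linspan\{E_{i,l}\,\Tr_{B_0}(\ket{\Omega}\!\bra{\Omega})\, E_{j,l'}^\dagger\} \perp S$, as claimed.

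There is no genuine obstacle here: the only points requiring attention are the reduction from Hilbert--Schmidt orthogonality of positive semidefinite operators to orthogonality of their supports, and careful bookkeeping of adjoints and index interchanges so that the span that emerges is exactly $\linspan\{E_{i,l}\Tr_{B_0}(\ket{\Omega}\!\bra{\Omega})E_{j,l'}^\dagger\}$ and not its adjoint or a transposed variant. Note also that only the descriptions $S = \linspan\{N_m^\dagger N_{m'}\}$ and the span $\linspan\{E_{i,l}\tau E_{j,l'}^\dagger\}$ enter, so the argument does not depend on the (non-unique) choices of Choi--Kraus operators, and only complete positivity of $\cN$ and of the $\cE_i$ --- not trace preservation --- is used.
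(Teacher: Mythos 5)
Your proof is correct, and it reaches the same scalar identities as the paper but organizes the argument differently. The paper proves the lemma by one direct computation: it plugs the Schmidt decomposition $\ket{\Omega}=\sum_x\sqrt{\lambda_x}\ket{x}_{A_0}\ket{x}_{B_0}$ into $\Tr\bigl(((\cN\circ\cE_i)\otimes\cI_{B_0})(\ket{\Omega}\!\bra{\Omega})\,((\cN\circ\cE_j)\otimes\cI_{B_0})(\ket{\Omega}\!\bra{\Omega})\bigr)$ and manipulates indices until the trace becomes $\sum_{l,l',m,n}\bigl|\Tr\bigl(E_{j,l'}\Tr_{B_0}(\ket{\Omega}\!\bra{\Omega})E_{i,l}^\dagger N_m^\dagger N_n\bigr)\bigr|^2$, so that vanishing of the whole sum forces each term to vanish. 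You instead decompose each output state as an ensemble of (unnormalized) pure states $(N_mE_{i,l}\otimes I_{B_0})\ket{\Omega}$ coming from the Kraus operators of $\cN\circ\cE_i$, invoke the fact that two positive semidefinite operators have vanishing Hilbert--Schmidt inner product iff their supports are orthogonal (reducing to pairwise vector orthogonality), and evaluate each vector inner product with the identity $\bra{\Omega}(M\otimes I_{B_0})\ket{\Omega}=\Tr\bigl(M\,\Tr_{B_0}(\ket{\Omega}\!\bra{\Omega})\bigr)$. What your route buys is a shorter, more conceptual argument that avoids the Schmidt decomposition and the heavy index bookkeeping, and makes visible that only complete positivity (not trace preservation) and the spanning descriptions of $S$ and of $\linspan\{E_{i,l}\,\Tr_{B_0}(\ket{\Omega}\!\bra{\Omega})\,E_{j,l'}^\dagger\}$ are used; what the paper's computation buys is an explicit sum-of-squares formula for the overlap, from which the termwise equivalence is immediate without appealing to the support-orthogonality fact. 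Your handling of the adjoint at the end (using $\tau^\dagger=\tau$ and $S^\dagger=S$ to pass from $\linspan\{E_{j,l'}\tau E_{i,l}^\dagger\}$ to $\linspan\{E_{i,l}\tau E_{j,l'}^\dagger\}$) is the right way to reconcile the index order with the statement, and matches what the paper does implicitly when it states its conclusion.
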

\begin{proof}
\changed{The first condition is equivalent to 
$$\bra{\Omega}(E_{i,\ell}^\dagger N_m\otimes I)(N_nE_{j,\ell'}\otimes I)\ket{\Omega}=0\quad\forall\ \ell,\ell',m,n.$$
This is further equivalent to
$$\Tr\bigl(N_nE_{j,\ell'}\Tr_{B_0}(\ket{\Omega}\!\bra{\Omega})E_{i,\ell}^\dagger N_m^\dagger \bigr)=0\quad\forall\ \ell,\ell',m,n,$$
which concludes the proof.}
\end{proof}

We call $(\ket{\Omega},\{\cE_1,\dots,\cE_k\})$ a size-$k$ entanglement-assisted independent set of %
$S$. 
Let $\alpha_*(S)$ be the maximum size of an entanglement-assisted independent set of $S$. 
The entanglement-assisted Shannon capacity of $S$ is defined as %
$\Theta_{*}(S)\coloneqq\lim_{N\to\infty}\sqrt[N]{\alpha_*(S^{\otimes N})}$.   %

\subsection{Semiring of non-commutative graphs and preorders}
Recall that an nc-graph is a subspace $S \subseteq \cL(A)$ satisfying $S^\dagger = S$ and $I_A \in S$.
We point out that every classical graph $G$ naturally corresponds to an nc-graph $S_G$. 
Namely, for any graph $G$, let~$\{\ket{g}:g\in G\}$ be the standard orthonormal basis of $\C^{|V(G)|}$ and define
\[
S_G \coloneqq \linspan\{\ket{g}\!\bra{g'}:g=g'\in V(G)~{\rm or}~\{g,g'\}\in E(G)\} \subseteq \cL(\C^{|V(G)|}).
\]
For the nc-graphs corresponding to the complement of the complete graphs we use the notation
\[
\overline{\cK_n} \coloneqq S_{\overline{K_n}}=\linspan\{\Proj{i}:i\in[n]\}\subseteq\cL(\C^n).
\]
It is worth noting that $\overline{\cK_n}$ is the nc-graph of the $n$-message noiseless classical channel, which maps $\ket{m}\!\bra{m'}$ to $\delta_{m,m'}\Proj{m}$ for all $m,m'\in[n]$. %

We say two nc-graphs $S_1$ and $S_2$ are isomorphic if they are equal up to a unitary transformation, i.e.\ if~$S_2=U^\dagger S_1 U$ for some unitary matrix $U$. Let $\cS$ be the set of isomorphism classes of nc-graphs. Analogous to the operations in the semiring of graphs, for two nc-graphs $S_1\subseteq \cL(A_1)$ and $S_2\subseteq\cL(A_2)$, the ``disjoint union'' is their direct sum $S_1\oplus S_2\subseteq\cL(A_1)\oplus \cL(A_2) \subseteq \cL(A_1 \oplus A_2)${\footnote{More precisely, $S_1\oplus S_2=\linspan\{\begin{bmatrix}X_1 &0\\0&0\end{bmatrix}, \begin{bmatrix}0&0\\0&X_2 \end{bmatrix}:X_1\in S_1,X_2\in S_2\}$.}} and the ``strong graph product'' is their tensor product $S_1\otimes S_2\subseteq \cL(A_1)\otimes\cL(A_2)\cong\cL(A_1\otimes A_2)$. The reader readily verifies the following.

\begin{theorem}\label{thm: semiring of nc-graphs}
$\cS = (\cS,\oplus,\otimes,\overline{\cK_0},\overline{\cK_1})$ is a commutative semiring. 
\end{theorem}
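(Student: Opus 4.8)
The plan is to verify the semiring axioms one by one, all of which reduce to canonical isomorphisms of finite-dimensional Hilbert spaces together with the fact that isomorphism of nc-graphs means conjugation by a unitary. First I would check that $\oplus$ and $\otimes$ are genuinely well-defined binary operations on $\cS$. For closure: if $S_1 \subseteq \cL(A_1)$ and $S_2 \subseteq \cL(A_2)$ are nc-graphs, then $S_1 \oplus S_2$, viewed as the block-diagonal subspace of $\cL(A_1 \oplus A_2)$, is self-adjoint because $S_1$ and $S_2$ are, and it contains $I_{A_1} \oplus I_{A_2} = I_{A_1 \oplus A_2}$; likewise $S_1 \otimes S_2 = \linspan\{E_1 \otimes E_2\}$ is self-adjoint since $(E_1 \otimes E_2)^\dagger = E_1^\dagger \otimes E_2^\dagger$ and $S_i^\dagger = S_i$, and it contains $I_{A_1} \otimes I_{A_2} = I_{A_1 \otimes A_2}$. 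For invariance under isomorphism: if $S_i = U_i^\dagger S_i' U_i$ for unitaries $U_i$, then $U_1 \oplus U_2$ and $U_1 \otimes U_2$ are again unitary and conjugate $S_1 \oplus S_2$ to $S_1' \oplus S_2'$ and $S_1 \otimes S_2$ to $S_1' \otimes S_2'$, so both operations descend to isomorphism classes.

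Next I would run through the algebraic identities. Commutativity and associativity of $\oplus$ (and of $\otimes$) follow by transporting the $S_i$ along the canonical unitaries $A_1 \oplus A_2 \cong A_2 \oplus A_1$ and $(A_1 \oplus A_2) \oplus A_3 \cong A_1 \oplus (A_2 \oplus A_3)$, respectively the swap and associativity isomorphisms for $\otimes$. The additive unit is $\overline{\cK_0} = \{0\} \subseteq \cL(\C^0)$: since $\C^0 \oplus A \cong A$ we get $\overline{\cK_0} \oplus S \cong S$, and since $\C^0 \otimes A = \C^0$ we get $\overline{\cK_0} \otimes S \cong \overline{\cK_0}$, which covers the axioms $0 + a = a$ and $0 \cdot a = 0$. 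The multiplicative unit is $\overline{\cK_1} = \cL(\C^1)$: since $\C^1 \otimes A \cong A$ we get $\overline{\cK_1} \otimes S \cong S$.

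The one step deserving a moment's care — and the only place where one genuinely uses that spans, rather than individual operators, are being matched up — is distributivity $S_1 \otimes (S_2 \oplus S_3) \cong (S_1 \otimes S_2) \oplus (S_1 \otimes S_3)$. Under the canonical unitary $A_1 \otimes (A_2 \oplus A_3) \cong (A_1 \otimes A_2) \oplus (A_1 \otimes A_3)$, a spanning element $E_1 \otimes (M_2 \oplus M_3)$ of the left-hand side is sent to $(E_1 \otimes M_2) \oplus (E_1 \otimes M_3)$; specialising $M_3 = 0$ and then $M_2 = 0$ and taking linear spans shows the image contains $(S_1 \otimes S_2) \oplus 0$ and $0 \oplus (S_1 \otimes S_3)$, hence all of $(S_1 \otimes S_2) \oplus (S_1 \otimes S_3)$, while the reverse inclusion is immediate. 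I do not expect a real obstacle anywhere: the whole statement is bookkeeping with canonical Hilbert-space identifications, and the distributivity computation is the only spot where a careless identification could actually slip.
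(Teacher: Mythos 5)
Your verification is correct and is exactly the routine check the paper leaves to the reader (the paper states the theorem with ``The reader readily verifies the following'' and gives no proof): closure and unitary-invariance of $\oplus$ and $\otimes$, the unit laws for $\overline{\cK_0}$ and $\overline{\cK_1}$, and commutativity, associativity, and distributivity via the canonical Hilbert-space unitaries. No gaps; in particular your span argument for distributivity, the only non-trivial point, is right.
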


\begin{lemma}\label{injhom}
The map $\iota: \cG \to \cS : G \mapsto S_G$ is an injective semiring homomorphism.
\end{lemma}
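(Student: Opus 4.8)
The plan is to check the two assertions separately: that $G\mapsto S_G$ respects the semiring operations, and that it is injective. The first is a direct unwinding of definitions; the second is where the content lies.

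For the homomorphism property, the multiplicative unit of $\cG$ is the one-vertex graph $K_1$, and $S_{K_1} = \linspan\{\ket{1}\!\bra{1}\} = \cL(\C) = \overline{\cK_1}$ is the multiplicative unit of $\cS$ (and likewise $S_{K_0} = \cL(\C^0) = \overline{\cK_0}$). For the disjoint union I would identify the standard basis of $\C^{|V(G\sqcup H)|}$ with that of $\C^{|V(G)|}\oplus\C^{|V(H)|}$ in the obvious way; since $G\sqcup H$ has no edges between the two parts, the matrix units spanning $S_{G\sqcup H}$ are exactly those supported within the $V(G)$-block (which span $S_G$) together with those supported within the $V(H)$-block (which span $S_H$), so $S_{G\sqcup H} = S_G\oplus S_H$ inside $\cL(\C^{|V(G)|}\oplus\C^{|V(H)|})$. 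For the strong product I would identify the standard basis of $\C^{|V(G\boxtimes H)|}$ with that of $\C^{|V(G)|}\otimes\C^{|V(H)|}$ via $(g,h)\mapsto\ket{g}\otimes\ket{h}$; the defining edge rule of $G\boxtimes H$ says precisely that a pair $(g,h),(g',h')$ is equal or adjacent in $G\boxtimes H$ if and only if $g,g'$ are equal or adjacent in $G$ \emph{and} $h,h'$ are equal or adjacent in $H$, i.e.\ if and only if $\ket{g}\!\bra{g'}$ is a spanning matrix unit of $S_G$ and $\ket{h}\!\bra{h'}$ is one of $S_H$. Since the span of a product family of matrix units is the tensor product of the two spans, $S_{G\boxtimes H} = S_G\otimes S_H$.

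For injectivity, suppose $S_H = U^\dagger S_G U$ for a unitary $U$. Comparing the dimensions of the ambient spaces gives $|V(G)| = |V(H)| =: n$, so it suffices to recover $G$ up to isomorphism from the unitary-equivalence class of $S_G$. The key step is a unitary-invariant description of the cliques: for a subspace $W\subseteq\C^n$, identify $\cL(W)$ with the subspace of $\cL(\C^n)$ consisting of operators with range in $W$ that vanish on $W^\perp$; then $\cL(W)\subseteq S_G$ if and only if $W = \linspan\{\ket{g}:g\in C\}$ for a clique $C$ of $G$. If $W$ is spanned by the basis vectors of a clique then clearly $\cL(W)\subseteq S_G$. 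Conversely, $\ket{v}\!\bra{v}\in S_G$ is equivalent to $v_g\overline{v_{g'}} = 0$ for every non-adjacent pair $g\neq g'$, which forces $\supp(v)$ to be a clique; and if every vector of $W$ has clique support, then a generic $v_0\in W$ has support equal to $P := \bigcup_{v\in W}\supp(v)$, so $P$ is a clique and $W\subseteq\linspan\{\ket{g}:g\in P\}$. It follows that the \emph{maximal} subspaces $W$ with $\cL(W)\subseteq S_G$ are precisely the $W_C := \linspan\{\ket{g}:g\in C\}$ with $C$ a maximal clique of $G$, and this family of subspaces is a unitary invariant of $S_G$. Hence $U$ carries the family of maximal-clique-subspaces of $G$ onto that of $H$, and from such a family of coordinate subspaces of $\C^n$ — known only up to a simultaneous unitary — one recovers the set system of maximal cliques of the graph up to a relabelling of $[n]$, hence the clique complex, hence the edge set, hence $G\cong H$.

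I expect the main obstacle to be this last step of the injectivity proof: reconstructing the \emph{labelled} set system of maximal cliques from a configuration of coordinate subspaces known only up to unitary. This is where the rigidity of coordinate subspaces must genuinely be used, and where the degenerate cases live — isolated vertices, twin vertices, and complete components, for which $S_G$ by itself does not pin down the standard basis. In each such case the residual ambiguity is an automorphism of $G$ and so is harmless for injectivity; one could alternatively carry out the argument by recovering the algebra of diagonal matrices inside $S_G$ (over which $S_G$ is a bimodule), facing the same degenerate cases. The homomorphism part, by contrast, needs nothing beyond matching spanning sets of matrix units on the two sides.
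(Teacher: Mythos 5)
Your homomorphism half is correct, and it is exactly the routine verification the paper has in mind (the paper records the lemma without giving any proof, so there is no alternative argument to compare against). The injectivity half is where the issues are. First, a wording slip: your claimed equivalence ``$\cL(W)\subseteq S_G$ if and only if $W=\linspan\{\ket{g}:g\in C\}$ for a clique $C$'' is false as stated — take $W$ spanned by $\ket{g}+\ket{g'}$ for an edge $\{g,g'\}$; what your generic-support argument actually proves is the containment version $W\subseteq\linspan\{\ket{g}:g\in C\}$ for some clique $C$, and that is all you need to identify the maximal such $W$ with the subspaces $W_C$ of maximal cliques, so this is harmless. Second, and more seriously, the step you yourself flag as ``the main obstacle'' is asserted rather than proved: you claim that if one unitary carries the family $\{W_C: C\ \text{a maximal clique of}\ G\}$ onto $\{W_D: D\ \text{a maximal clique of}\ H\}$, then the two set systems of maximal cliques are isomorphic over a relabelling of the vertices, and you only gesture at ``rigidity of coordinate subspaces'' and at degenerate cases (isolated vertices, twins, complete components). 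As written, that is a genuine gap, since it is precisely the point where unitary equivalence of $S_G$ and $S_H$ gets converted into a graph isomorphism.

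The gap closes cleanly, with none of the case analysis you anticipate. A unitary preserves dimensions of intersections of subspaces, and for coordinate subspaces $\dim\bigl(\bigcap_{i\in T}W_{C_i}\bigr)=\bigl|\bigcap_{i\in T}C_i\bigr|$. Hence the bijection $C\mapsto D$ induced by the unitary satisfies $\bigl|\bigcap_{i\in T}C_i\bigr|=\bigl|\bigcap_{i\in T}D_i\bigr|$ for every subfamily $T$ of maximal cliques. By inclusion–exclusion, for every $T$ the number of vertices lying in exactly the cliques indexed by $T$ (and in no others) agrees on both sides, so one can assemble a vertex bijection $\pi$ with $\pi(C)=D$ for every corresponding pair. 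Since every vertex lies in some maximal clique and the edge set of a graph is exactly the set of pairs contained in some maximal clique, $\pi$ is an isomorphism $G\cong H$. In particular there is no need to recover the standard basis inside $S_G$ (which is genuinely impossible for, say, $G=K_n$, where $S_G=\cL(\C^n)$), which is what makes both your ``rigidity'' route and your diagonal-algebra alternative look harder than the problem is. With this replacement for the last step, your reconstruction-via-maximal-cliques argument is a complete and correct proof of the lemma.
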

{ \begin{proof}
It is easy to see that $\iota$ is a semiring homomorphism by verifying that $S_{K_0}=0$, $S_{K_1}=\C$, $S_{G\sqcup H}=S_G\oplus S_H$ and $S_{G\boxtimes H}=S_G\otimes S_H$. To see that $\iota$ is an injection, we need to prove that if $S_G$ and $S_H$ are (unitary) isomorphic (as nc-graphs), then $G$ and $H$ are (permutation) isomorphic (as graphs). This has been verified in~\cite[Proposition 3.1]{ORTIZ2015128}.
\end{proof}
}

We can adapt the definitions in~\cite{stahlke2016} to obtain the cohomomorphism preorder and the entanglement-assisted cohomomorphism preorder on nc-graphs as follows.
\begin{definition}\label{def: homomorphic preorder of quantum channels}
The cohomomorphism preorder $\leq$ is defined on $\cS$ by, for any nc-graphs $S_1\subseteq\cL(A_1)$ and~$S_2\subseteq\cL(A_2)$, letting
$S_1\leq S_2$ if there exists $E=\linspan\{E_i\}_i\subseteq\cL(A_1,A_2)$ satisfying $\sum_{i} E_i^\dagger E_i= I_{A_1}$, such that $E S_1^\perp E^\dagger \perp S_2$, where $S_1^\perp\coloneqq\{X\in\cL(A_1):\forall Y\in S_1\, \Tr(X^\dagger Y)=0\}$.
\end{definition}
\begin{definition}
The entanglement-assisted cohomomorphism preorder $\leq_*$ is defined on $\cS$ by, for any nc-graphs $S_1\subseteq\cL(A_1)$ and $S_2\subseteq\cL(A_2)$, letting $S_1\leq_* S_2$ if there exist a positive definite~$\rho\in\cD(A_0)$ and $E=\linspan\{E_i\}_i\subseteq\cL(A_1\otimes\rho,A_2)$ satisfying $\sum_{i} E_i^\dagger E_i= I_{A_1 \otimes A_0}$, such that $E (S_1^\perp\otimes\rho) E^\dagger\perp S_2$.
\end{definition}

\begin{lemma}\label{preorderrel}
If $S\leq T$, then $S\leq_*T$.
\end{lemma}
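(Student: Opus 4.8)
The plan is to observe that $\leq$ is precisely the special case of $\leq_*$ in which the shared ancilla is trivial, so that a witness for $S \leq T$ can be reused, essentially verbatim, as a witness for $S \leq_* T$.

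Concretely, write $S \subseteq \cL(A_1)$ and $T \subseteq \cL(A_2)$, and suppose $S \leq T$. By \cref{def: homomorphic preorder of quantum channels} this supplies a subspace $E = \linspan\{E_i\}_i \subseteq \cL(A_1, A_2)$ with $\sum_i E_i^\dagger E_i = I_{A_1}$ and $E S^\perp E^\dagger \perp T$, where $E S^\perp E^\dagger = \linspan\{E_i X E_j^\dagger : X \in S^\perp,\ i, j\}$. First I would take the ancilla space $A_0 = \C$ to be one-dimensional and set $\rho = 1 \in \cD(A_0)$, the unique density operator on $A_0$, which is positive definite. Next, using the canonical identification $A_1 \otimes A_0 \cong A_1$, and hence $\cL(A_1 \otimes A_0, A_2) \cong \cL(A_1, A_2)$, I would regard each $E_i$ as an element of $\cL(A_1 \otimes A_0, A_2)$; then $\sum_i E_i^\dagger E_i = I_{A_1} = I_{A_1 \otimes A_0}$. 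Finally, since $\rho$ acts as the scalar $1$, the subspace $S^\perp \otimes \rho \subseteq \cL(A_1 \otimes A_0)$ equals $\{X \otimes 1 : X \in S^\perp\}$, which the identification above carries onto $S^\perp$; hence $E(S^\perp \otimes \rho) E^\dagger = E S^\perp E^\dagger \perp T$. These are exactly the conditions defining $S \leq_* T$.

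I do not expect a real obstacle here. The only points that need checking are that the normalization condition $\sum_i E_i^\dagger E_i = I$ and the orthogonality condition transfer unchanged across the identification $A_1 \cong A_1 \otimes \C$ — which is immediate because $\rho$ is a scalar — and that the definition of $\leq_*$ genuinely allows a one-dimensional ancilla, which it does, since it only requires a positive definite state $\rho$ on some space $A_0$ with no lower bound on $\dim A_0$. If one prefers to avoid the degenerate ancilla, one may instead take $A_0 = \C^m$ with $\rho = \tfrac{1}{m} I_{A_0}$ and the operators $E_i \otimes \bra{k}$ for $i$ and $k \in [m]$; the verification is the same up to routine bookkeeping with the extra index.
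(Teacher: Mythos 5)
Your proof is correct and is essentially the same as the paper's: the paper's one-line proof takes the positive definite matrix $\rho$ in the definition of $S\leq_* T$ to be the scalar $1$ (i.e.\ a trivial ancilla), which is exactly your construction, spelled out with the identification $A_1\otimes\C\cong A_1$.
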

\begin{proof}
Take the positive definite matrix $\rho$ in the definition of $S\leq_* T$ to be the element $1$.
\end{proof}

The idea behind the above definitions is as follows. Recall that $G\leq H$ if there exists a graph homomorphism from $\overline{G}$ to $\overline{H}$. In other words, there exists a vertex map $f:V(G)\to V(H)$ which maps non-adjacent vertices to non-adjacent vertices. %
Since we may view quantum states as vertices and matrices in the nc-graph as edges in nc-graphs, it is natural to adapt the ``vertex map'' among nc-graphs $S_1\subseteq\cL(A_1)$ and $S_2\subseteq\cL(A_2)$ as a CPTP map $\cE:\cL(A_1)\to\cL(A_2)$, specified by the Choi--Kraus operators $\{E_i\}_i\subseteq\cL(A_1,A_2)$. Now for ``non-adjacent vertices''  $\Proj{\psi}$ and $\Proj{\phi}$ in $S_1$, we require $\cE(\Proj{\psi})$ and $\cE(\Proj{\phi})$ are ``non-adjacent'' in $S_2$. The former is equivalent to $\ket{\psi}\!\bra{\phi}\perp S_1$ and the latter is equivalent to $E_i\ket{\psi}\!\bra{\phi}E_j^\dagger\perp S_2$ for all $i,j$. The definition of $S_1\leq S_2$ is then obtained naturally. 

To see that the above definitions are meaningful, Stahlke in~\cite{stahlke2016} also points out the following.
\begin{lemma}\label{prop: nc-graph independent numbers}
Let $S$ be an nc-graph. Then
\begin{enumerate}[label=\upshape(\roman*)]
\item $\alpha(S)=\max\{n\in\N:\overline{\cK_n}\leq S\}$
\item $\alpha_*(S)=\max\{n\in\N:\overline{\cK_n}\leq_* S\}$.
\end{enumerate}
\end{lemma}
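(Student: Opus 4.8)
The plan is to unwind the definitions and, for each fixed $n$, give an explicit two-way conversion between witnesses of $\overline{\cK_n}\leq S$ (resp.\ $\overline{\cK_n}\leq_* S$) and witnesses of $\alpha(S)\geq n$ (resp.\ $\alpha_*(S)\geq n$); the two claimed identities then follow by taking the maximum over $n$. Two elementary facts will be used repeatedly. (a) Because the relation ``$\perp S$'' is linear and $\overline{\cK_n}^\perp\subseteq\cL(\C^n)$ is exactly the space of zero-diagonal matrices, i.e.\ $\overline{\cK_n}^\perp=\linspan\{\ket{a}\!\bra{b}:a\neq b\in[n]\}$, a condition $E\,\overline{\cK_n}^\perp\,E^\dagger\perp S$ is equivalent to $E_i\ket{a}\!\bra{b}E_j^\dagger\perp S$ for all Kraus indices $i,j$ and all $a\neq b$, and likewise $E\,(\overline{\cK_n}^\perp\otimes\rho)\,E^\dagger\perp S$ is equivalent to $E_i(\ket{a}\!\bra{b}\otimes\rho)E_j^\dagger\perp S$ for all $i,j$ and $a\neq b$. (b) Since $I_A\in S$, any pure states $\ket{\psi_1},\dots,\ket{\psi_n}$ with $\ket{\psi_a}\!\bra{\psi_b}\perp S$ for $a\neq b$ are automatically pairwise orthogonal (trace against $I_A$), so such a family really is a size-$n$ independent set of $S$.

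For (i): if $\alpha(S)\geq n$ with witnessing states $\ket{\psi_1},\dots,\ket{\psi_n}$, set $E:=\sum_{a\in[n]}\ket{\psi_a}\!\bra{a}\in\cL(\C^n,A)$; orthonormality gives $E^\dagger E=I_n$, and $E\ket{a}\!\bra{b}E^\dagger=\ket{\psi_a}\!\bra{\psi_b}\perp S$ for $a\neq b$, so by (a) the single-operator family $\{E\}$ witnesses $\overline{\cK_n}\leq S$. Conversely, given Choi--Kraus operators $\{E_i\}_i\subseteq\cL(\C^n,A)$ with $\sum_iE_i^\dagger E_i=I_n$ witnessing $\overline{\cK_n}\leq S$: for each $a$, $\sum_i\norm{E_i\ket{a}}^2=\bra{a}(\sum_iE_i^\dagger E_i)\ket{a}=1$, so pick $i_a$ with $E_{i_a}\ket{a}\neq0$ and put $\ket{\psi_a}:=E_{i_a}\ket{a}/\norm{E_{i_a}\ket{a}}$; then $\ket{\psi_a}\!\bra{\psi_b}$ is a positive multiple of $E_{i_a}\ket{a}\!\bra{b}E_{i_b}^\dagger\perp S$ for $a\neq b$, so $\ket{\psi_1},\dots,\ket{\psi_n}$ witness $\alpha(S)\geq n$.

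For (ii): use \cref{claim: orthogonal} to read ``$(\ket{\Omega},\{\cE_1,\dots,\cE_n\})$ is a size-$n$ entanglement-assisted independent set of $S$'' as ``$E_{a,l}\,\sigma\,E_{b,l'}^\dagger\perp S$ for all $a\neq b$ and all $l,l'$'', where $\sigma:=\Tr_{B_0}(\ket{\Omega}\!\bra{\Omega})$ and $\{E_{a,l}\}_l$ are the Choi--Kraus operators of $\cE_a$. Given such a set, first restrict each $\cE_a$ to act on $A_0:=\supp(\sigma)$ (this changes no operator $E_{a,l}\sigma E_{b,l'}^\dagger$), so that $\rho:=\sigma$ is positive definite on $A_0$; then bundle the restricted Kraus operators into $E_{(a,l)}:=E_{a,l}(\bra{a}\otimes I_{A_0})\in\cL(\C^n\otimes A_0,A)$. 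One checks $\sum_{a,l}E_{(a,l)}^\dagger E_{(a,l)}=\sum_a\ket{a}\!\bra{a}\otimes(\sum_lE_{a,l}^\dagger E_{a,l})=I_{\C^n\otimes A_0}$ and $E_{(a,l)}(\ket{c}\!\bra{d}\otimes\rho)E_{(b,l')}^\dagger=\delta_{ac}\delta_{bd}\,E_{a,l}\,\sigma\,E_{b,l'}^\dagger$, which is $\perp S$ whenever $c\neq d$; by (a) this witnesses $\overline{\cK_n}\leq_* S$. Conversely, given a positive definite $\rho\in\cD(A_0)$ and $\{E_i\}_i\subseteq\cL(\C^n\otimes A_0,A)$ with $\sum_iE_i^\dagger E_i=I_{\C^n\otimes A_0}$ witnessing $\overline{\cK_n}\leq_* S$: pick any $\ket{\Omega}$ with $\Tr_{B_0}(\ket{\Omega}\!\bra{\Omega})=\rho$, and let $\cE_a:\cL(A_0)\to\cL(A)$ have Kraus operators $E_{a,i}:=E_i(\ket{a}\otimes I_{A_0})$; then $\sum_iE_{a,i}^\dagger E_{a,i}=I_{A_0}$ (so each $\cE_a$ is CPTP) and $E_{a,i}\,\rho\,E_{b,j}^\dagger=E_i(\ket{a}\!\bra{b}\otimes\rho)E_j^\dagger\perp S$ for $a\neq b$, so by \cref{claim: orthogonal} the pair $(\ket{\Omega},\{\cE_a\}_{a\in[n]})$ witnesses $\alpha_*(S)\geq n$.

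Combining the two conversions in each part yields $\{n\in\N:\overline{\cK_n}\leq S\}=\{0,1,\dots,\alpha(S)\}$ and $\{n\in\N:\overline{\cK_n}\leq_* S\}=\{0,1,\dots,\alpha_*(S)\}$, whence the stated maxima. I do not expect a real obstacle: the preorders $\leq$ and $\leq_*$ were defined precisely so that this lemma holds, and the argument is definition-chasing together with the linearity of ``$\perp S$''. The only points needing care are the bookkeeping in part (ii)---the definition of $\leq_*$ compresses $n$ encoding channels plus one shared entangled state into a single CPTP map on $\C^n\otimes A_0$, so one must (de)multiplex the Kraus operators along the register $a\in[n]$---and, in passing from an entanglement-assisted independent set to $\overline{\cK_n}\leq_* S$, the restriction to $\supp(\sigma)$ that makes the produced $\rho$ positive definite as required by the definition of $\leq_*$.
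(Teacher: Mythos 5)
Your proof is correct and follows essentially the same witness-conversion strategy as the paper's Appendix~A argument, including the same constructions in the forward direction of (i) and in both directions of (ii) via \cref{claim: orthogonal}. The only deviations are minor: in the converse of (i) you simply normalize a nonzero $E_{i_a}\ket{a}$ (valid since $E_i\ket{a}\!\bra{b}E_j^\dagger\perp S$ for all $i,j$ and $a\neq b$), which is a clean shortcut around the paper's singular-value-decomposition argument, and in (ii) your restriction to $\supp(\sigma)$ explicitly secures the positive-definiteness of $\rho$ required by the definition of $\leq_*$, a point the paper's proof leaves implicit.
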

\begin{proof}
We provide a detailed proof in Appendix~\ref{sec: proof of 2.8}.
\end{proof}

\section{Dual characterization of entanglement-assisted Shannon capacity and quantum Shannon capacity of graphs}\label{sec: dual characterization graphs}

In this section, we first prove that the entanglement-assisted capacity $\Theta_*$ and the quantum Shannon capacity $\Theta_q(\cdot)$ can be characterized by applying Strassen's theory of asymptotic spectra, and present elements in the corresponding asymptotic spectra. We also discuss the relations between two important conjectures in quantum zero-error information theory.

\subsection{Entanglement-assisted Shannon capacity $\Theta_*(G)$ of a graph}
We first prove that the entanglement-assisted cohomomorphism preorder $\leq_*$ (Definition~\ref{def: entanglement-assisted homomorphism}) is a Strassen preorder on the semiring of graphs $\cG$.
\begin{lemma}\label{lemma: EA homo preorder strassen}
For any graphs $G, H, K, L$ and any $n,m \in \N$, we have
\begin{enumerate}[label=\upshape(\roman*)]
\item $G\leq_* G$
\item if $G\leq_* H$ and $H\leq_* L$, then $G\leq_* L$
\item $\overline{K_m} \leq_* \overline{K_n}$ if and only if $m \leq n$
\item if $G \leq_* H$ and $K\leq_* L$ then $G\sqcup K \leq_* H\sqcup L$ and $G \boxtimes K \leq_* H\boxtimes L$
\item if $H\neq \overline{K_0}$, then there is an $r \in \N$ with $G \leq_* \overline{K_r} \boxtimes H$.
\end{enumerate}
\end{lemma}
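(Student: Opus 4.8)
The plan is first to unfold the definitions. Writing out Definition~\ref{def: entanglement-assisted homomorphism} for $\overline{G}\EAto\overline{H}$, and using the remark that the common positive semidefinite matrix may be taken positive definite, one gets: $G\leq_* H$ holds iff there are $d\in\N$, a positive definite $\rho\in M(d,\C)$, and positive semidefinite matrices $\rho^h_g\in M(d,\C)$ ($g\in V(G)$, $h\in V(H)$) with $\sum_{h}\rho^h_g=\rho$ for every $g$, such that $\rho^h_g\rho^{h'}_{g'}=0$ whenever $g\neq g'$, $\{g,g'\}\notin E(G)$, and ($h=h'$ or $\{h,h'\}\in E(H)$). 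Call such data a \emph{representation} of $G\leq_* H$. Three of the five items then come for free from the classical cohomomorphism preorder: the identity witnesses $\overline{G}\to\overline{G}$, so $G\leq G$, and since $G\to H\Rightarrow G\Qto H\Rightarrow G\EAto H$ (both implications are noted in the text) we get $G\leq H\Rightarrow G\leq_* H$; this gives reflexivity (i), and, via $K_m\to K_n\iff m\leq n$, the ``if'' direction of (iii). For (v) we use that $\leq$ is already a Strassen preorder on $\cG$ (Section~\ref{known}): there is $r$ with $G\leq\overline{K_r}\boxtimes H$ whenever $H\neq\overline{K_0}$, whence $G\leq_*\overline{K_r}\boxtimes H$. (Concretely, $r=|V(G)|$ works: $\overline{\overline{K_r}\boxtimes H}$ is obtained from $|V(G)|$ copies of $\overline{H}$ by joining distinct copies completely, so sending $g$ to a fixed vertex $h_0\in V(H)$ in the $g$-th copy sends distinct vertices to adjacent ones.)

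Transitivity (ii) and both multiplicativity statements in (iv) will be proved by the same tensor-product composition. For (ii), given representations $(\sigma,\sigma^h_g)$ of $G\leq_* H$ on $\C^d$ and $(\tau,\tau^\ell_h)$ of $H\leq_* L$ on $\C^e$, set $\rho:=\sigma\otimes\tau$ and $\rho^\ell_g:=\sum_{h\in V(H)}\sigma^h_g\otimes\tau^\ell_h$ on $\C^d\otimes\C^e$; positive (semi)definiteness and $\sum_\ell\rho^\ell_g=\rho$ are immediate, and in the expansion $\rho^\ell_g\rho^{\ell'}_{g'}=\sum_{h,h'}(\sigma^h_g\sigma^{h'}_{g'})\otimes(\tau^\ell_h\tau^{\ell'}_{h'})$ every summand vanishes: in the case $h=h'$ or $\{h,h'\}\in E(H)$ the $G\leq_* H$ representation kills the first factor, and in the case $\{h,h'\}\in E(\overline{H})$ the $H\leq_* L$ representation kills the second, the outer hypotheses $\{g,g'\}\in E(\overline{G})$ and $\{\ell,\ell'\}\notin E(\overline{L})$ being exactly what each representation needs. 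For $G\boxtimes K\leq_* H\boxtimes L$, take common state $\rho\otimes\sigma$ and matrices $\rho^h_g\otimes\sigma^\ell_k$ indexed by $V(G)\times V(K)$ and $V(H)\times V(L)$: a distinct pair of vertices of $G\boxtimes K$ is non-adjacent exactly when one coordinate is non-adjacent in its factor, while an equal-or-adjacent pair in $H\boxtimes L$ is equal-or-adjacent in each coordinate, so the matching tensor factor of the product is zero by the relevant representation. For $G\sqcup K\leq_* H\sqcup L$, work again on $\C^d\otimes\C^e$ and set $\mu^h_g:=\rho^h_g\otimes\sigma$, $\mu^\ell_g:=0$ for $g\in V(G)$, and symmetrically $\mu^h_k:=0$, $\mu^\ell_k:=\rho\otimes\sigma^\ell_k$ for $k\in V(K)$; the common state is $\rho\otimes\sigma$ throughout. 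The one bookkeeping point is that a source pair with one vertex in each of $V(G),V(K)$ only needs checking against target pairs $w,w'$ that are equal or adjacent in $H\sqcup L$, and such $w,w'$ necessarily lie on the same side, so one of $\mu^w_g,\mu^{w'}_k$ is $0$; same-side source pairs reduce directly to the given representations.

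That leaves the ``only if'' direction of (iii), the one genuinely non-classical point: $\overline{K_m}\leq_*\overline{K_n}$ should force $m\leq n$. Unfolding, a representation provides positive semidefinite $\rho^j_i\in M(d,\C)$ ($i\in[m]$, $j\in[n]$) with positive definite $\rho=\sum_j\rho^j_i$ for every $i$ and $\rho^j_i\rho^j_{i'}=0$ for all $i\neq i'$ and all $j$. For fixed $j$ the matrices $\rho^j_1,\dots,\rho^j_m$ are positive semidefinite with pairwise vanishing products, hence have pairwise orthogonal supports in $\C^d$, so $\sum_i\rk(\rho^j_i)\leq d$; and for fixed $i$, subadditivity of rank gives $d=\rk(\rho)=\rk\bigl(\sum_j\rho^j_i\bigr)\leq\sum_j\rk(\rho^j_i)$. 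Summing the first inequality over $j$ and the second over $i$ gives $md\leq\sum_{i,j}\rk(\rho^j_i)\leq nd$, and $d\geq 1$ yields $m\leq n$. (Alternatively one could quote $\alpha_*(G)\leq\vartheta(G)$ together with $\vartheta(\overline{K_n})=n$, but this rank count is self-contained.) This is the step I expect to need the most care; granting it, items (i)--(v) together are exactly the assertion that $\leq_*$ is a Strassen preorder on $\cG$.
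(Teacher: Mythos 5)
Your proof is correct, and for items (i), (ii), (iv) and (v) it is essentially the paper's own argument: reflexivity and the ``if'' part of (iii) via $G\leq H\Rightarrow G\leq_* H$, transitivity and the strong product via the tensor composition $\tau_g^l=\sum_h \rho_g^h\otimes\sigma_h^l$ (resp.\ $\rho_g^h\otimes\sigma_k^l$), the disjoint union via the same block assignment $\rho_u^v\otimes\sigma$, $\rho\otimes\sigma_u^v$, $0$, and (v) by quoting that $\leq$ is already a Strassen preorder on $\cG$ (your explicit $r=|V(G)|$ construction is a nice concrete supplement to the citation of \cite[Lemma 4.2]{zuiddam2018asymptotic}).

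The one place where you genuinely diverge is the ``only if'' direction of (iii). The paper disposes of it by invoking the $\leq_*$-monotonicity of the Lov\'asz theta function from~\cite{Beigi2010}, so that $\overline{K_m}\leq_*\overline{K_n}$ gives $m=\vartheta(\overline{K_m})\leq\vartheta(\overline{K_n})=n$. You instead give a self-contained rank count: unfolding $\overline{K_m}\leq_*\overline{K_n}$ yields psd matrices $\rho_i^j$ with $\sum_j\rho_i^j=\rho$ positive definite and $\rho_i^j\rho_{i'}^j=0$ for $i\neq i'$; pairwise vanishing products of psd matrices force pairwise orthogonal supports, so $\sum_i\rk(\rho_i^j)\leq d$ for each $j$, while subadditivity gives $d\leq\sum_j\rk(\rho_i^j)$ for each $i$, and summing yields $md\leq nd$. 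This is correct (it uses the paper's remark that $\rho$ may be taken positive definite, or equivalently a restriction to $\supp(\rho)$) and has the advantage of being elementary and independent of external results on $\vartheta$; the paper's route is shorter given that $\vartheta$-monotonicity is needed later anyway (Theorem~\ref{thm: elements in leq_*}). Either argument establishes that $\leq_*$ is a Strassen preorder on $\cG$.
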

\begin{proof}
(i) %
In general, if $G\leq H$, then $G \leq_* H$.
It is clear that $G \leq G$. Therefore also $G \leq_* G$.

(ii) We adapt the proof of~\cite[Lemma 2.5]{manvcinska2016quantum} to show transitivity.
Assume $G \leq_* H$ and $H \leq_* L$.
Let $\rho$, $(\rho_g^h: g \in V(G), h \in V(H))$ and $\sigma$, $(\sigma_h^\ell : h \in V(H), c \in V(L))$ be corresponding positive semidefinite matrices, as in Definition~\ref{def: entanglement-assisted homomorphism}. For $g \in V(G), l \in V(L)$, let
$\tau_g^\ell = \sum_{h \in V(H)} \rho_g^h \otimes \sigma_h^\ell$. Note that $\tau_g^\ell$ is  positive semidefinite for all $g\in V(G)$ and $l\in V(L)$. 
We have
\begin{equation}\label{eq: sum of rho g h}
\sum_{\ell\in V(L)} \tau_g^\ell=\sum_{\substack{h\in V(H)\\ \ell\in V(L)}} \rho_g^h\otimes \sigma_h^\ell= \sum_{h\in V(H)} \rho_g^h \otimes \sum_{\ell\in V(L)} \sigma_h^\ell=\rho \otimes \sigma.
\end{equation}
For all $\{g,g'\}\not\in E(G)$ and $\{\ell,\ell'\}\in E(L)$ or $\ell=\ell'$, we have
\begin{equation}\label{eq: non-adjacent}
\rho_g^\ell \rho_{g'}^{\ell'}=\sum_{h,h' \in V(H)} \rho_g^h\rho_{g'}^{h'} \otimes \rho_h^\ell\rho_{h'}^{\ell'}= \sum_{\{h,h'\}\not\in E(H)} \rho_g^h\rho_{g'}^{h'} \otimes \rho_h^\ell\rho_{h'}^{\ell'}=0
\end{equation}
 where the second equality holds since $\rho_g^h \rho_{g'}^{h'}=0$ for all $\{g,g'\}\not\in E(G)$ and $\{h,h'\}\in E(H)$ or $h=h'$, and the third equality holds since $\rho_h^\ell\rho_{h'}^{\ell'}=0$ for all $\{h,h'\}\not\in E(H)$ and $\{\ell,\ell'\}\in E(L)$ or $\ell=\ell'$. We conclude $G\leq_* L$.

(iii) %
We know that $m \leq n$ implies $\overline{K_m} \leq \overline{K_n}$, and thus $\overline{K_m} \leq_* \overline{K_n}$. To see that $\overline{K_m} \leq_* \overline{K_n}$ implies $m\leq n$, we note that $G\leq_* H$ implies $\vartheta(G)\leq\vartheta(H)$~\changed{\cite{cubitt2014bounds}}.
Thus $\overline{K_m} \leq_* \overline{K_n}$ implies $m=\vartheta(\overline{K_m})\leq\vartheta(\overline{K_n})=n$.

(iv) 
Assume that $G\leq_* H$ and $K\leq_* L$.
Let $\rho, (\rho_g^h:g\in V(G), h\in V(H))$ and $\sigma, (\sigma_k^\ell:k\in V(K), \ell\in V(L))$ be corresponding positive semidefinite matrices, as in Definition~\ref{def: entanglement-assisted homomorphism}. Let
\[
\tau_u^v = \begin{cases} \rho_u^v \otimes \sigma & \textnormal{if $u \in V(G), v\in V(H)$}\\
\rho \otimes \sigma_u^v & \textnormal{if $u \in V(K), v \in V(L)$}\\
0 & \textnormal{otherwise.}
\end{cases}
\]
One readily verifies that $\tau_u^v$ is positive semidefinite for all $u\in V(G\sqcup K)$ and $v\in V(H\sqcup L)$. Moreover, for every $u\in V(G)$ we have $\sum_{v \in V(H\sqcup L)} \tau_u^v=\sum_{v \in V(H)} \tau_u^v+\sum_{v \in V(L)} \tau_u^v= \rho \otimes \sigma$, and for every~$u \in V(K)$ we have $\sum_{v \in V(H\sqcup L)} \tau_u^v=\sum_{v \in V(H)} \tau_u^v+\sum_{v \in V(L)} \tau_u^v= \rho \otimes \sigma$. One verifies directly that $\tau_u^v\tau_{u'}^{v'}=0$ for all $\{u,u'\}\in E(\overline{G\sqcup K})$ and $\{v,v'\} \not\in E(\overline{H\sqcup L})$. We conclude $G\sqcup K \leq_* H\sqcup L$.

To prove that $G \boxtimes K \leq_* H\boxtimes L$, let $\tau_{(g,k)}^{(h,\ell)} = \rho_g^h \otimes \sigma_k^\ell$ for all $g,h,k,\ell$. One readily verifies that these operators satisfy the required conditions.

(v) %
For the cohomomorphism preorder it is not hard to see that for all $G,H\neq \overline{K_0}$, there is an~$r \in \N$ with $G \leq \overline{K_r} \boxtimes H$~\cite[Lemma 4.2]{zuiddam2018asymptotic}. Therefore, $G \leq_* \overline{K_r} \boxtimes H$.
\end{proof}

Let $\bX(\cG,\leq_*)$
be the asymptotic spectrum of graphs with respect to the entanglement-assisted cohomomorphism preorder $\leq_*$, i.e.
\begin{equation}
\bX(\cG,\leq_*)=\{\phi\in\Hom(\cG,\R_{\geq 0}):\forall G,H\in \cG,~G\leq_* H~\Rightarrow~\phi(G)\leq\phi(H)\}.
\end{equation}\label{eq: X(G,leq_*)}
Together with Theorem~\ref{thm: asymptotic subrank}, we obtain the following dual characterization of the entanglement-assisted Shannon capacity of graphs, $\Theta_*(G)$.
\begin{theorem} Let $G$ be a graph. Then
\[
\Theta_*(G)=\min_{\phi\in\bX(\cG,\leq_*)}\phi(G).
\]
\end{theorem}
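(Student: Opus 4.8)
The plan is to apply Strassen's spectral theorem, \cref{thm: asymptotic subrank}, to the commutative semiring $\cG = (\cG,\sqcup,\boxtimes,\overline{K_0},\overline{K_1})$ equipped with the preorder $\leq_*$. Two of the three required ingredients are already in hand: $\cG$ is a commutative semiring (Zuiddam), and $\leq_*$ is a Strassen preorder on $\cG$ by \cref{lemma: EA homo preorder strassen}. What remains is to recognize $\Theta_*(G)$ as the asymptotic subrank of $G$ with respect to $\leq_*$, and then to deal with the nondegeneracy hypotheses of \cref{thm: asymptotic subrank}.

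First I would compute the subrank. Since the element $n \in \cG$ — the $n$-fold $\sqcup$-sum of the multiplicative identity $\overline{K_1}$ — is exactly the edgeless graph $\overline{K_n}$, the definition of subrank together with the definition of the entanglement-assisted independence number gives $\subrank_{\leq_*}(G) = \max\{n\in\N : \overline{K_n}\leq_* G\} = \alpha_*(G)$. Taking the $\boxtimes$-power $G^{\boxtimes N}$ as the $N$-th semiring power of $G$, it follows that $\asympsubrank_{\leq_*}(G) = \lim_{N\to\infty}\sqrt[N]{\alpha_*(G^{\boxtimes N})} = \Theta_*(G)$, directly by the definition of $\Theta_*$.

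Next I would check the hypotheses ``$1\leq a$'' and ``$2\leq a^k$ for some $k$'' of \cref{thm: asymptotic subrank}, i.e.\ $\overline{K_1}\leq_* G$ and $\overline{K_2}\leq_* G^{\boxtimes k}$. If $G$ has at least one vertex, then there is a classical graph homomorphism $K_1\to\overline{G}$, so $\overline{K_1}\leq G$ and hence $\overline{K_1}\leq_* G$ (recall $G\leq H$ implies $G\leq_* H$); if moreover $G$ is not a complete graph, then $\overline{G}$ has an edge, so $K_2\to\overline{G}$, whence $\overline{K_2}\leq G$ and hence $\overline{K_2}\leq_* G = G^{\boxtimes 1}$, so we may take $k=1$. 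For every such $G$, \cref{thm: asymptotic subrank} applies verbatim and yields $\Theta_*(G) = \min_{\phi\in\bX(\cG,\leq_*)}\phi(G)$.

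The only remaining work — and the one mild obstacle, since \cref{thm: asymptotic subrank} does not directly cover it — is the two degenerate families excluded above, which must be handled by hand. For $G = \overline{K_0}$ one has $\Theta_*(\overline{K_0}) = 0$, while $\phi(\overline{K_0}) = \phi(0) = 0$ for every semiring homomorphism $\phi$, so both sides vanish. For $G = K_n$ with $n\geq 1$, the obvious classical homomorphisms $\overline{K_n}\to K_1$ and $K_1\to\overline{K_n}$ give $K_n\leq_*\overline{K_1}$ and $\overline{K_1}\leq_* K_n$, so $\phi(K_n) = \phi(\overline{K_1}) = 1$ for every $\phi\in\bX(\cG,\leq_*)$, making the right-hand side equal to $1$; on the left-hand side, monotonicity of $\vartheta$ under $\leq_*$~\cite{Beigi2010} forces $\alpha_*(K_{n^N}) = 1$ for all $N$ (as $\overline{K_m}\leq_* K_{n^N}$ would give $m = \vartheta(\overline{K_m}) \leq \vartheta(K_{n^N}) = 1$), so $\Theta_*(K_n) = 1$ as well. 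This exhausts all graphs and completes the proof.
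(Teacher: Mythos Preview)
Your proof is correct and follows the same approach as the paper, which simply invokes \cref{thm: asymptotic subrank} after having established in \cref{lemma: EA homo preorder strassen} that $\leq_*$ is a Strassen preorder. Your argument is in fact more careful than the paper's: you explicitly identify the subrank with $\alpha_*$, verify the hypotheses $1\leq G$ and $2\leq G^{\boxtimes k}$ of \cref{thm: asymptotic subrank}, and treat the degenerate cases $G=\overline{K_0}$ and $G=K_n$ separately, none of which the paper spells out.
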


Since $G\leq H$ implies $G\leq_* H$, we have $\bX(\cG,\leq_*)\subseteq\bX(\cG,\leq)$. As we mentioned already in the proof of Lemma~\ref{lemma: EA homo preorder strassen}, the Lov\'asz theta function is $\leq_*$-monotone~\changed{\cite{cubitt2014bounds}}.
This implies the following.

\begin{theorem}\label{thm: elements in leq_*}
$\vartheta\in\bX(\cG,\leq_*)$.
\end{theorem}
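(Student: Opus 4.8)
The plan is to verify the two defining properties of membership in $\bX(\cG,\leq_*)$ for the Lov\'asz theta function $\vartheta$: that $\vartheta$ is a semiring homomorphism $\cG \to \R_{\geq 0}$, and that it is $\leq_*$-monotone. Both facts are essentially already available. For the semiring homomorphism property, recall that $\vartheta$ is already known to be an element of the asymptotic spectrum $\bX(\cG,\leq)$ of graphs with respect to the ordinary cohomomorphism preorder: this is recorded in Section~\ref{known}, based on Lov\'asz's original multiplicativity result $\vartheta(G\boxtimes H)=\vartheta(G)\cdot\vartheta(H)$ together with the additivity $\vartheta(G\sqcup H)=\vartheta(G)+\vartheta(H)$ and the normalization $\vartheta(\overline{K_1})=1$. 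So all of the algebraic content of ``$\vartheta$ is a semiring homomorphism'' is inherited verbatim; nothing new needs to be proved there.

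The only remaining point is $\leq_*$-monotonicity, i.e.\ that $G\leq_* H$ implies $\vartheta(G)\leq\vartheta(H)$. Unwinding the definition, $G \leq_* H$ means $\overline{G}\EAto\overline{H}$, so what must be shown is that the existence of an entanglement-assisted homomorphism $\overline{G}\EAto\overline{H}$ forces $\vartheta(G)\leq\vartheta(H)$. But this is exactly the content of the result of Beigi~\cite{Beigi2010} that the (complement of the) Lov\'asz theta function is monotone under entanglement-assisted homomorphisms — equivalently, that $\vartheta$ does not increase along such maps. This fact was already invoked twice in the proof of Lemma~\ref{lemma: EA homo preorder strassen} (in parts (iii) and in the surrounding discussion), so I would simply cite it. Thus the proof is two sentences: $\vartheta\in\bX(\cG,\leq)$ gives the semiring homomorphism conditions and $\leq$-monotonicity, and $\bigl[\,G\leq_* H \Rightarrow \vartheta(G)\leq\vartheta(H)\,\bigr]$ from~\cite{Beigi2010} upgrades this to $\leq_*$-monotonicity, so $\vartheta\in\bX(\cG,\leq_*)$.

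There is essentially no obstacle here: the statement is a bookkeeping corollary that collects the homomorphism property of $\vartheta$ (classical) and its behavior under entanglement-assisted homomorphisms (Beigi) into the language of asymptotic spectra. If one wanted a self-contained treatment rather than a black-box citation, the mildly nontrivial step would be reproving Beigi's bound — one would take an entanglement-assisted homomorphism $\overline{G}\EAto\overline{H}$, given by positive semidefinite matrices $\rho$ and $(\rho_g^h)$ as in Definition~\ref{def: entanglement-assisted homomorphism}, and use them to transform a feasible dual (or primal) solution of the semidefinite program computing $\vartheta(H)$ into one for $\vartheta(G)$ of no larger value, but I would not carry this out and would instead cite~\cite{Beigi2010} directly.
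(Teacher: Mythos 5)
Your proposal is correct and matches the paper's argument exactly: the paper also obtains the semiring homomorphism properties from the known fact $\vartheta\in\bX(\cG,\leq)$ and gets $\leq_*$-monotonicity by citing Beigi~\cite{Beigi2010}, as already noted in the proof of Lemma~\ref{lemma: EA homo preorder strassen}. Nothing further is needed.
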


In fact, the following conjecture has been mentioned in~\cite{Beigi2010,5961832,Leung2012,duan2013,6466384,cubitt2014bounds}.
\begin{conjecture}\label{conj: theta}
$\Theta_*(G)=\vartheta(G)$ for all graphs $G$.
\end{conjecture}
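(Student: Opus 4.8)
Since \cref{conj: theta} is a conjecture, there is no proof to reconstruct; instead I describe the state of play and the line of attack. The plan is to split the claimed equality into its two inequalities. The bound $\Theta_*(G)\le\vartheta(G)$ is already a consequence of the machinery above: by \cref{lemma: EA homo preorder strassen} the preorder $\leq_*$ is a Strassen preorder on $\cG$, so \cref{thm: asymptotic subrank} yields the dual characterization $\Theta_*(G)=\min_{\phi\in\bX(\cG,\leq_*)}\phi(G)$, and by \cref{thm: elements in leq_*} the Lov\'asz theta function is one of the $\phi$ being minimized over. Unpacked, this is just Beigi's one-shot inequality $\alpha_*(G)\le\vartheta(G)$ combined with the multiplicativity $\vartheta(G^{\boxtimes N})=\vartheta(G)^N$. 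Hence the entire content of the conjecture is the reverse inequality $\Theta_*(G)\ge\vartheta(G)$.

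By the same dual characterization, $\Theta_*(G)\ge\vartheta(G)$ for all $G$ is equivalent to the statement that $\vartheta$ is the \emph{pointwise smallest} element of $\bX(\cG,\leq_*)$, i.e.\ every $\leq_*$-monotone semiring homomorphism $\cG\to\R_{\ge0}$ dominates $\vartheta$. On the primal side it is equivalent to: for every graph $G$ and every $\eps>0$ there is an $N$ with $\alpha_*(G^{\boxtimes N})\ge(\vartheta(G)-\eps)^N$. Through \cref{prop: nc-graph independent numbers}, \cref{injhom} and \cref{claim: orthogonal}, proving this last form amounts to constructing, for large $N$, a shared entangled state $\ket{\Omega}$ and encoding channels $\cE_1,\dots,\cE_k$ with $k\approx\vartheta(G)^N$ whose pairwise overlap operators $E_{i,l}\Tr_{B_0}(\ketbra{\Omega}{\Omega})E_{j,l'}^\dagger$ are orthogonal to the nc-graph of $G^{\boxtimes N}$. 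I would first try this primal route, starting from an optimal orthonormal representation realizing $\vartheta(G)$ and attempting to ``round'' it into such quantum encodings on a high tensor power.

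A more modest sub-goal worth settling first is whether $\bX(\cG,\leq_*)$ is literally the single point $\{\vartheta\}$ or contains other functions. Since $\leq$ implies $\leq_q$, which implies $\leq_*$, we have $\bX(\cG,\leq_*)\subseteq\bX(\cG,\leq_q)\subseteq\bX(\cG,\leq)$, and this paper shows that the fractional complex Haemers bound $\cH^\C_f$ lies in $\bX(\cG,\leq_q)$. If $\cH^\C_f$ were also $\leq_*$-monotone, then since $\cH^\C_f<\vartheta$ on the complement of the Schl\"afli graph \cite{haemers1979some}, the conjecture would be false; so a prerequisite for \cref{conj: theta} is to prove $\cH^\C_f\notin\bX(\cG,\leq_*)$, that is, to exhibit graphs with $\overline{G}\EAto\overline{H}$ but $\cH^\C_f(G)>\cH^\C_f(H)$. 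Conversely, a proof of the conjecture would, together with this paper's results, force $\Theta_q\neq\Theta_*$ on such Haemers-type examples.

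The hard part is that this is genuinely open and I do not expect to resolve it; both routes funnel into the same gap. The primal route has no known general recipe for turning the semidefinite data behind $\vartheta$ into entanglement plus encoding channels, and, crucially, the one-shot quantity $\alpha_*(G)$ is in general \emph{strictly} below $\vartheta(G)$ (already for odd cycles), so any construction must genuinely use the tensor-power structure rather than a single-copy gadget. The dual route stalls because monotonicity alone only gives $\phi(G)\ge\alpha_*(G^{\boxtimes N})^{1/N}$, whose supremum is $\Theta_*(G)$, merely restating the problem. Concretely, the conjecture is unknown even for $C_7$, where the best lower bound is $\Theta(C_7)\ge\sqrt[5]{367}\approx3.25787$ \cite{polak2018new}, still below $\vartheta(C_7)\approx3.31766$, and no entanglement-assisted protocol is known that closes the gap.
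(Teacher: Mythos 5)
The statement is an open conjecture, and the paper offers no proof of it either — it only remarks, exactly as you do, that $\Theta_*(G)\le\vartheta(G)$ follows from $\vartheta\in\bX(\cG,\leq_*)$ together with the dual characterization, and that the conjecture would follow from showing $\vartheta$ is the minimal (or only) element of $\bX(\cG,\leq_*)$. Your assessment of what is known, what the remaining gap is, and the observation that $\cH^\C_f\in\bX(\cG,\leq_*)$ would refute the conjecture (so that it and \cref{conj: theta q} cannot both hold) are all correct and consistent with the paper's discussion.
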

It would be interesting to show that this conjecture is true by proving that $\vartheta$ is the minimal element in $\bX(\cG,\leq_*)$, or even the only point in $\bX(\cG,\leq_*)$.

\subsection{Quantum Shannon capacity}
We begin by proving that the quantum cohomomorphism preorder $\leq_q$ (Definition~\ref{def: quantum homomorphism}) is a Strassen preorder on the semiring of graphs.
\begin{lemma}\label{lemma: quantum homo preorder strassen}
For any graphs $G, H, K, L$ and any $n,m \in \N$, we have
\begin{enumerate}[label=\upshape(\roman*)]
\item $G\leq_q G$
\item if $G\leq_q H$ and $H\leq_q L$, then $G\leq_q L$
\item $\overline{K_m} \leq_q \overline{K_n}$ if and only if $m \leq n$
\item if $G \leq_q H$ and $K\leq_q L$ then $G\sqcup K \leq_q H\sqcup L$ and $G \boxtimes K \leq_q H\boxtimes L$
\item if $H\neq \overline{K_0}$, then there is an $r \in \N$ with $G \leq_q \overline{K_r} \boxtimes H$.
\end{enumerate}
\end{lemma}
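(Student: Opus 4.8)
The plan is to verify the five Strassen-preorder axioms (i)--(v) for $\leq_q$ by following the proof of Lemma~\ref{lemma: EA homo preorder strassen} almost verbatim, with positive semidefinite matrices replaced by projectors throughout. For (i), a classical homomorphism is in particular a quantum homomorphism, so $G \leq H$ implies $G \leq_q H$; since $G \leq G$ holds trivially, reflexivity follows. For (iii), the implication $m \leq n \Rightarrow \overline{K_m} \leq_q \overline{K_n}$ again comes from $\overline{K_m} \leq \overline{K_n}$, while the converse follows from the fact that $G \Qto H$ implies $G \EAto H$ (hence $G \leq_q H$ implies $G \leq_* H$) together with Lemma~\ref{lemma: EA homo preorder strassen}(iii); alternatively one invokes the $\leq_*$-monotonicity of $\vartheta$. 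For (v), the inequality $G \leq \overline{K_r} \boxtimes H$ from \cite[Lemma 4.2]{zuiddam2018asymptotic} passes through $\leq_q$ exactly as in the $\leq_*$ case.

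The two substantive parts are (ii) and (iv). For transitivity (ii), suppose $\overline{G} \Qto \overline{H}$ is witnessed by projectors $(E_g^h)$ and $\overline{H} \Qto \overline{L}$ by projectors $(F_h^l)$. I would set $P_g^l \coloneqq \sum_{h \in V(H)} E_g^h \otimes F_h^l$. The one place where projectivity is essential --- as opposed to the mere positive semidefiniteness used in the entanglement-assisted argument --- is the check that each $P_g^l$ is again a projector: for fixed $g$ the family $\{E_g^h\}_h$ is a projective measurement (Remark~\ref{remquant}), so $E_g^h E_g^{h'} = \delta_{h,h'} E_g^h$, and expanding $P_g^l P_g^l$ collapses the double sum, giving idempotence; Hermiticity is clear. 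The completeness relation $\sum_l P_g^l = I$ follows from $\sum_l F_h^l = I$ and $\sum_h E_g^h = I$. The orthogonality condition is the same two-step calculation as in Lemma~\ref{lemma: EA homo preorder strassen}(ii): for $\{g,g'\} \in E(\overline{G})$ and $\{l,l'\} \notin E(\overline{L})$ one expands $P_g^l P_{g'}^{l'} = \sum_{h,h'} E_g^h E_{g'}^{h'} \otimes F_h^l F_{h'}^{l'}$, discards the terms with $\{h,h'\} \notin E(\overline{H})$ using the $\overline{G}\Qto\overline{H}$ condition, and then discards the remaining terms (those with $\{h,h'\} \in E(\overline{H})$) using the $\overline{H}\Qto\overline{L}$ condition. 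Hence $\overline{G} \Qto \overline{L}$, i.e.\ $G \leq_q L$.

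For (iv), given projectors $(E_g^h)$ witnessing $\overline{G} \Qto \overline{H}$ and $(F_k^l)$ witnessing $\overline{K} \Qto \overline{L}$, I would reuse the piecewise and product constructions from Lemma~\ref{lemma: EA homo preorder strassen}(iv): for the disjoint union set $\tau_u^v \coloneqq E_u^v \otimes I$ when $u \in V(G)$, $\tau_u^v \coloneqq I \otimes F_u^v$ when $u \in V(K)$, and $\tau_u^v \coloneqq 0$ otherwise; for the strong product set $\tau_{(g,k)}^{(h,l)} \coloneqq E_g^h \otimes F_k^l$. These operators are projectors, since a tensor product of projectors with an identity is a projector, and the completeness and orthogonality conditions are verified exactly as in the $\leq_*$ case (the standing convention that $\{h,h'\} \notin E(\overline{H})$ includes $h = h'$ is what makes the diagonal terms work). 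Passing to complements yields $G \sqcup K \leq_q H \sqcup L$ and $G \boxtimes K \leq_q H \boxtimes L$.

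I expect no real obstacle beyond bookkeeping: the only genuinely new point relative to the entanglement-assisted proof is the verification in (ii) that the composed operators $\sum_h E_g^h \otimes F_h^l$ are projectors, which is precisely where the projective-measurement structure of Definition~\ref{def: quantum homomorphism} (via Remark~\ref{remquant}) enters essentially.
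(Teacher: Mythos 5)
Your proposal is correct, and for parts (i), (iv) and (v) it coincides with the paper's proof (the same piecewise construction $D_u^v$ for the disjoint union, the same product construction $E_g^h\otimes F_k^l$ for the strong product, and the same reduction of (v) to the classical fact $G\leq \overline{K_r}\boxtimes H$). The difference is in (ii) and (iii), where the paper simply cites Man\v{c}inska--Roberson: for transitivity it invokes their Lemma~2.5, whereas you write out the composition $P_g^l=\sum_h E_g^h\otimes F_h^l$ and verify projectivity via $E_g^hE_g^{h'}=\delta_{h,h'}E_g^h$ (Remark~\ref{remquant}), completeness, and the two-step orthogonality argument --- this is exactly the construction underlying the cited lemma (and the one the paper itself adapts in the entanglement-assisted case), so your version is self-contained at no real cost. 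For (iii) your route is genuinely different: instead of citing the known fact $K_m\Qto K_n\Leftrightarrow m\leq n$, you push $\leq_q$ into $\leq_*$ (taking $\rho=I$, $\rho_g^h=E_g^h$) and appeal to Lemma~\ref{lemma: EA homo preorder strassen}(iii), which ultimately rests on the $\leq_*$-monotonicity of $\vartheta$; this keeps the argument internal to the paper, at the price of relying on the Lov\'asz theta monotonicity result of Beigi rather than the more elementary combinatorial statement about quantum homomorphisms between complete graphs. Both routes are valid, and no step of yours fails.
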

\begin{proof}
(i) 
In general, if $G\leq H$, then $G \leq_q H$.
It is clear that $G \leq G$. Therefore also $G \leq_q G$.

(ii) Quantum homomorphisms are known to be transitive in the sense that if $G \Qto H$ and $H \Qto L$, then $G \Qto L$~\cite[Lemma 2.5]{manvcinska2016quantum}. Therefore, if $G \leq_q H$ and $H \leq_q L$, then $G \leq_q L$.

(iii) It is known that $K_m\Qto K_n$ if and only if $m\leq n$~\cite[Lemma 2.6]{manvcinska2016quantum}. Thus $\overline{K_m} \leq_q \overline{K_n}$ if and only if $m \leq n$.

(iv) 
Assume $G\leq_q H$ and $K\leq_q L$. Let $(E_g^h : g\in V(G), h\in V(H))$ and $(F_k^\ell:k\in V(K), \ell\in V(L))$ be the corresponding collections of projectors, as in Definition~\ref{def: quantum homomorphism}.  To prove $G\sqcup K \leq_q H\sqcup L$, let
\[
D_u^v = \begin{cases} E_u^v \otimes I & \textnormal{if $u \in V(G), v\in V(H)$}\\
I \otimes F_u^v & \textnormal{if $u \in V(K), v \in V(L)$}\\
0 & \textnormal{otherwise.}
\end{cases}
\]
It is easy to see that $D_u^v$ is a projector for every $u\in V(G\sqcup K)$ and $v\in V(H\sqcup L)$. Moreover, for every $u\in V(G)$ we have $\sum_{v \in V(H\sqcup L)} D_u^v=\sum_{v \in V(H)} D_u^v+\sum_{v \in V(L)} D_u^v= I \otimes I$, and for every $u \in V(K)$ we have $\sum_{v \in V(H\sqcup L)} D_u^v=\sum_{v \in V(H)} D_u^v+\sum_{v \in V(L)} D_u^v= I \otimes I$. It is also easy to verify that $D_u^vD_{u'}^{v'}=0$ for all $\{u,u'\}\in E(\overline{G\sqcup K})$ and $\{v,v'\} \not\in E(\overline{H\sqcup L})$. We conclude that $G\sqcup K \leq_q H\sqcup L$.

To prove $G \boxtimes K \leq_q H\boxtimes L$, let $D_{(g,k)}^{(h,\ell)} = E_g^h \otimes F_k^\ell$ for all $g,h,k,\ell$. One can also verify that these operators satisfy the required conditions to conclude $G \boxtimes K \leq_q H\boxtimes L$.

(v) For the cohomomorphism preorder it is not hard to see that for all $G,H\neq \overline{K_0}$, there is an~$r \in \N$ with $G \leq \overline{K_r} \boxtimes H$~\cite[Lemma 4.2]{zuiddam2018asymptotic}. Therefore, $G \leq_q \overline{K_r} \boxtimes H$.
\end{proof}
Let $\bX(\cG,\leq_q)$ be the asymptotic spectrum of graphs with respect to the quantum cohomomorphism preorder $\leq_q$, i.e.
\begin{equation}\label{eq: X(G,leq_q)}
\bX(\cG,\leq_q)=\{\phi\in\Hom(\cG,\R_{\geq 0}): \forall G,H\in \cG,~G\leq_q H~\Rightarrow~\phi(G)\leq\phi(H)\}.
\end{equation}
Together with Theorem~\ref{thm: asymptotic subrank}, we obtain the following dual characterization of the quantum Shannon capacity of graphs.

\begin{theorem}
Let $G$ be a graph. Then
\[
\Theta_q(G)=\min_{\phi\in\bX(\cG,\leq_q)}\phi(G).
\]
\end{theorem}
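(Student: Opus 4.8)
The plan is to apply Strassen's spectral theorem (Theorem~\ref{thm: asymptotic subrank}) to the commutative semiring $\cG$ equipped with the quantum cohomomorphism preorder $\leq_q$. Lemma~\ref{lemma: quantum homo preorder strassen} establishes precisely that $\leq_q$ is a Strassen preorder on $\cG$: items (i) and (ii) give reflexivity and transitivity, item (iii) gives the compatibility $n \leq_q m \iff n \leq m$ in $\N$ (reading $n$ as $\overline{K_n}$), item (iv) gives monotonicity under $\sqcup$ and $\boxtimes$, and item (v) gives the Archimedean-type condition that every nonzero $H$ dominates an arbitrary $G$ after scaling by some $\overline{K_r}$. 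So the hypotheses of Theorem~\ref{thm: asymptotic subrank} are in place once we identify the relevant ranks.

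First I would observe that the subrank of a graph $G$ in $(\cG, \leq_q)$ is exactly the quantum independence number: by definition $\subrank(G) = \max\{n \in \N : \overline{K_n} \leq_q G\} = \alpha_q(G)$. Then the asymptotic subrank is
\[
\asympsubrank(G) = \lim_{N\to\infty} \sqrt[N]{\alpha_q(G^{\boxtimes N})} = \Theta_q(G),
\]
using that the semiring multiplication is $\boxtimes$ and that $\overline{K_1} = K_1$ is the multiplicative identity. Next I need to check the side conditions of Theorem~\ref{thm: asymptotic subrank}: we need $1 \leq_q G$ and $2 \leq_q G^{\boxtimes k}$ for some $k$. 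For this to hold we should assume $G$ has at least one vertex (so $K_1 \to \overline{G}$, i.e.\ $\overline{K_1} \leq G \leq_q G$) and at least one edge, or more precisely $\alpha_q(G^{\boxtimes k}) \geq 2$ for some $k$; in fact $\alpha(G) \geq 1$ always when $V(G) \neq \emptyset$, and for the edgeless or non-complete situation $\alpha(G^{\boxtimes k}) \geq 2$ for suitable $k$, giving $2 \leq G^{\boxtimes k} \leq_q G^{\boxtimes k}$ via the classical independence number bounding the quantum one from below. (If $G$ is a complete graph the statement degenerates; as in \cite{zuiddam2018asymptotic} one handles this trivial case separately, where $\Theta_q(K_n) = 1$ and every $\phi$ sends $K_n$ to $1$.) Applying Theorem~\ref{thm: asymptotic subrank} to $\asympsubrank$ then yields $\Theta_q(G) = \min_{\phi \in \bX(\cG, \leq_q)} \phi(G)$.

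The argument is essentially identical in structure to the one already carried out for $\Theta_*$ in the preceding subsection, and to the original treatment of $\Theta$ in \cite{zuiddam2018asymptotic}; the only real content is Lemma~\ref{lemma: quantum homo preorder strassen}, which has already been proved. I expect no genuine obstacle here — the main point requiring a moment's care is the bookkeeping around the degenerate cases (empty vertex set, complete graphs) so that the hypotheses ``$1 \leq a$ and $2 \leq a^k$'' of Theorem~\ref{thm: asymptotic subrank} are met, exactly as one must do in the classical case. I would state this as a one-paragraph proof: cite Lemma~\ref{lemma: quantum homo preorder strassen} for the Strassen preorder property, note $\subrank = \alpha_q$ hence $\asympsubrank = \Theta_q$, dispatch the complete-graph case by hand, and invoke Theorem~\ref{thm: asymptotic subrank} for the rest.
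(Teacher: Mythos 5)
Your proposal is correct and follows essentially the same route as the paper: the paper likewise obtains the theorem by combining Lemma~\ref{lemma: quantum homo preorder strassen} (that $\leq_q$ is a Strassen preorder on $\cG$) with Theorem~\ref{thm: asymptotic subrank}, using that the subrank under $\leq_q$ is $\alpha_q$ and hence the asymptotic subrank is $\Theta_q$. Your extra care with the side conditions $1\leq G$ and $2\leq G^{\boxtimes k}$ (and the complete-graph degenerate case) is sound and merely makes explicit what the paper leaves implicit.
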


We know that if $G\leq H$, then $G\leq_q H$. It is also easy to see that $G\leq_q H$ implies $G\leq_* H$. Therefore, $\bX(\cG,\leq_*)\subseteq\bX(\cG,\leq_q)\subseteq\bX(\cG,\leq)$. 

\begin{theorem}\label{thm: element in leq q}
We have
\[
\{\vartheta, \cH_f^\C, \cH_f^\R, \overline{\xi}_f\}\subseteq\bX(\cG,\leq_q).
\]
\changed{Moreover,  $\overline{\chi}_f\not\in\bX(\cG,\leq_q)$ and for any odd prime $p$ such that there exists a Hadamard matrix of size~$4p$, we have $\cH_f^{\F_p} \not\in\bX(\cG,\leq_q)$. Therefore, we have a proper inclusion
\[
\bX(\cG,\leq_q)\subsetneq\bX(\cG,\leq).
\]}
\end{theorem}
\begin{proof}
We know that $\{\vartheta, \cH_{\smash{f}}^\C, \cH_{\smash{f}}^\R, \overline{\xi}_{\smash{f}}\} \subseteq  \bX(\cG,\leq)$, so to prove that $\{\vartheta, \cH_{\smash{f}}^\C, \cH_{\smash{f}}^\R, \overline{\xi}_{\smash{f}}\}\subseteq \bX(\cG,\leq_q)$, it remains to show that the functions $\vartheta, \cH_f^\C, \cH_f^\R, \overline{\xi}_f$ are $\leq_q$-monotone.

Man{\v{c}}inska and Roberson proved in \cite{manvcinska2016quantum} 
that the Lov\'asz theta function $\vartheta$ and the complement of projective rank $\overline{\xi}_f$ are $\leq_q$-monotone.
We prove that $\cH_f^\C$ is $\leq_q$-monotone. Suppose $G\leq_q H$. Let $E_g^h$ be corresponding complex $d'\times d'$ projector for all $g\in V(G)$ and $h\in V(H)$. 
Let 
\[
M=\sum_{h,h'\in V(H)}e_he_{h'}^\dagger\otimes M_{h,h'}\in M(|V(H)|,\C)\otimes M(d,\C)
\]
be a $d$-representation of $H$ over $\C$. We construct a $dd'$-representation of $G$ over $\C$ as follows. Let 
\[
M'=\sum_{g,g'\in V(G)}e_ge_{g'}^\dagger\otimes M'_{g,g'}\in M(|V(G)|,\C)\otimes M(dd',\C),
\] 
with
\[
M'_{g,g'}=\sum_{h,h'\in V(H)}M_{h,h'}\otimes E_g^hE_{g'}^{h'}\in M(dd',\C).
\]
To see that $M'$ is a $dd'$-representation of $G$, we first show $M'_{g,g}=I_{dd'}$ for all $g\in V(G)$. Note that
\[
M'_{g,g}=\sum_{h,h'\in V(H)} M_{h,h'}\otimes E_g^hE_{g}^{h'}=\sum_{h\in V(H)} M_{h,h}\otimes E_g^hE_g^h=I_d\otimes I_{d'},
\]
where the second equality uses $E_g^hE_{g}^{h'}=0$ for all $g\in V(G)$ and $h\neq h'$ (Remark~\ref{remquant}), and the last equality uses the facts that $M_{h,h}=I_d$ for all $h\in V(H)$ and $\sum_{h\in V(H)}  E_g^hE_g^h=\sum_{h\in V(H)}E_g^h=I_{d'}$. On the other hand, we show $M'_{g,g'}=0$ if $g\neq g'$ and $\{g,g'\}\not\in E(G)$. In this case, we have
\[
M'_{g,g'}=\sum_{h,h'\in V(H)} M_{h,h'}\otimes E_g^hE_{g'}^{h'}=\sum_{\{h,h'\}\not\in E(H)~and~h\neq h'} M_{h,h'}\otimes E_g^hE_{g'}^{h'}=0,
\]
where the second equality use the fact that $E_g^hE_{g'}^{h'}=0$ for all $\{g,g'\}\in E(\overline{G})$ and $\{h,h'\}\not\in E(\overline{H})$, and the last equality holds since $M_{h,h'}=0$ for all $h\neq h', \{h,h'\}\not\in E(H)$. %
Thus $M'$ is a $dd'$-representation of $G$ over $\C$. 

Next we prove that $ \rk(M')\leq d'\rk(M)$. %
We factor $M'$ as
\begin{align*}
M'&=\sum_{g,g'}e_ge^\dagger_{g'}\otimes\Bigl(\sum_{h,h'}M_{h,h'}\otimes E_g^hE_{g'}^{h'}\Bigr)\\
&=\Bigl(\sum_{g,h}e_ge^\dagger_{h}\otimes I_d\otimes E_g^h\Bigr)\Bigl(\sum_{h,h'}e_he^\dagger_{h'}\otimes M_{h,h'}\otimes I_{d'}\Bigr)\Bigl(\sum_{g',h'}e_{h'}e_{g'}^\dagger\otimes I_d\otimes E_{g'}^{h'}\Bigr)\\
&=\Bigl(\sum_{g,h}e_ge^\dagger_{h}\otimes I_d\otimes E_g^h\Bigr)(M\otimes I_{d'})\Bigl(\sum_{g',h'}e_{h'}e_{g'}^\dagger\otimes I_d\otimes E_{g'}^{h'}\Bigr).
\end{align*}
Thus $\rk(M')\leq \rk(M\otimes I_{d'})=d'\rk(M)$. Therefore, 
\begin{equation}\label{rankineq}
\cH_f^\C(G)\leq \frac{\rk(M')}{dd'}\leq \frac{d'\rk(M)}{dd'}\leq \frac{\rk(M)}{d}.
\end{equation}
Since \eqref{rankineq} holds for all $d$-representation $M$ of $H$ over $\C$, we conclude $\cH_f^\C(G)\leq \cH_f^\C(H)$. 

To prove that $\cH_{\smash{f}}^{\smash{\R}}$ is $\leq_q$-monotone, one follows the above proof with real instead of complex $d$-representations and one uses the fact that the projectors~$E_g^h$ can be chosen to be real matrices (Remark~\ref{remquant}).

Finally, we point out that $\bX(\cG,\leq_q)$ is a proper subset of $\bX(\cG,\leq)$. {Namely, $\overline{\chi}_f\in\bX(\cG,\leq)$ but $\overline{\chi}_f\not\in\bX(\cG,\leq_q)$. This is due to the fact that there exists a graph $G$ such that $\overline{\chi}_q(G)<\overline{\chi}_f(G)$~\cite{manvcinska2016quantum}, and $\max_{\phi\in\bX(\cG,\leq_q)}\phi(G)=\lim_{n\to\infty}\sqrt[n]{\overline{\chi}_q(G^{\boxtimes n})}\leq\overline{\chi}_q(G)$. On the other hand,}
it is known that $\Theta(G)$ can be strictly smaller than $\Theta_*(G)$ for some graph $G$~\cite{Leung2012,Briet19227}. 
More precisely, Bri{\"e}t, Buhrman and Gijswijt proved in \cite{Briet19227}  that for any odd prime $p$ such that there exists a Hadamard matrix of size $4p$, there exists a graph $G$ satisfying $\Theta(G)\leq \cH^{\F_p}(G)<\Theta_*(G)$. We note that the proof in~\cite{Briet19227} of $\cH^{\F_p}(G)<\Theta_*(G)$ in fact shows that $\cH^{\F_p}(G)<\Theta_q(G)$. \changed{The key observation is the following claim: If $\overline{G}$ has an orthonormal representation $U=(u_g\in\R^d:g\in V(G))$ in dimension~$d$, and $\overline{G}$ has $M$ disjoint $d$-cliques, then $\Theta_q(G) \geq M$. The proof is rather straightforward: Let the cliques be denoted by $C_1,\dots, C_M$. Take
\[
E_i^g = \begin{cases} u_gu_g^T & \textnormal{if $g \in C_i$}\\
0 & \textnormal{if $g \not\in C_i$.}
\end{cases}
\]
It is easy to see that $(E_i^g:g\in V(G),\, i\in V(\overline{K_M}))$ satisfies the conditions for the inequality $\overline{K_M}\leq_q G$. Thus $\Theta_q(G)\geq\alpha_q(G)\geq M$. This proves the claim.}

It is known that if $n$ is odd and there exists a Hadamard matrix of size $n+1$, then there exists a graph $G_n$ whose complement has an $n$-dimensional orthonormal representation and $|V(G_n)|/n^2$ disjoint cliques of size $n$~\cite{Briet19227}. 
Thus $\Theta_q(G_n)\geq |V(G_n)|/n^2$. On the other hand, it has been proved that the Haemers bound over some finite field $\F_p$ on $G_n$, $\cH^\F(G)$, can be strictly smaller than~$|V(G_n)|/n^2$~\cite{Briet19227}. Since $\cH_f^\F(G)\leq \cH^\F(G)$ for any field $\F$, we conclude that~$\cH_f^{\smash{\F_p}}\not\in\bX(\cG,\leq_q)$ for such odd prime $p$. 
\end{proof}

\begin{remark}
It is not hard to adjust the above proof to show that the fractional Haemers bound for any field extension of $\R$ belongs to $\bX(\cG,\leq_q)$. We show that $\cH_f^\R$ (resp.~$\xi_f^\R$) and $\cH_f^\C$ (resp.~$\xi_f$) are actually the same parameter. Moreover, one may naturally define a real projective rank $\xi_f^\R$ by requiring that the projectors in the definition of $\xi_f$ are real. Again, we show that $\xi_f^\R$ is equal to projective rank.
\end{remark}

\begin{proposition}\label{prop: real and complex}
$\cH_f^\R(G)=\cH_f^\C(G)$ and $\xi_f^\R(G)=\xi_f(G)$ for all graphs $G$.
\end{proposition}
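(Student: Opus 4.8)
The plan is to derive both equalities from the single observation that the \emph{realification} map $A\mapsto\widehat A$ used in Remark~\ref{remquant} to turn complex projectors into real ones is a unital $\R$-algebra homomorphism $M(d,\C)\to M(2d,\R)$ (so $\widehat{A+B}=\widehat A+\widehat B$, $\widehat{AB}=\widehat A\,\widehat B$, $\widehat{I_d}=I_{2d}$, $\widehat 0=0$) that is $*$-compatible, $\widehat{A^\dagger}=\trans{\widehat A}$, and that doubles rank: $\rk(\widehat A)\le 2\rk(A)$ in general, and $\rk(\widehat E)=\tr(\widehat E)=2\Re(\tr E)=2\rk(E)$ when $E$ is a projector. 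All of these are routine; the rank inequality is immediate by applying $A\mapsto\widehat A$ to a complex rank factorization of $A$, and the identity $\rk(\widehat E)=2\rk(E)$ for a projector uses that rank equals trace for a real symmetric idempotent. One direction of each equality in the proposition is then free: a real $d$-representation (resp.\ $d/r$-representation) is in particular a complex one of the same rank, so $\cH_f^\C(G)\le\cH_f^\R(G)$ and $\xi_f(G)\le\xi_f^\R(G)$.

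For the nontrivial direction of $\cH_f^\R(G)=\cH_f^\C(G)$, starting from a complex $d$-representation $M=\sum_{g,g'\in V(G)}e_ge_{g'}^\dagger\otimes M_{g,g'}$ I would form $M'=\sum_{g,g'\in V(G)}e_ge_{g'}^\dagger\otimes\widehat{M_{g,g'}}$. Since $A\mapsto\widehat A$ is a unital ring homomorphism, $M'_{g,g}=\widehat{I_d}=I_{2d}$ and $M'_{g,g'}=\widehat 0=0$ whenever $g\ne g'$ and $\{g,g'\}\notin E(G)$, so $M'$ is a real $2d$-representation of $G$; and since the coefficients $e_ge_{g'}^\dagger$ are real, $M'$ coincides, after a permutation of rows and columns, with the realification of $M$, hence $\rk(M')\le 2\rk(M)$. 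Therefore $\cH_f^\R(G)\le\rk(M')/(2d)\le\rk(M)/d$, and taking the infimum over all complex $d$-representations $M$ gives $\cH_f^\R(G)\le\cH_f^\C(G)$. For $\xi_f^\R(G)=\xi_f(G)$ the same idea is even more direct: from a $d/r$-representation $(E_g)$ by rank-$r$ complex projectors with $E_gE_{g'}=0$ on the edges of $G$, the matrices $(\widehat{E_g})$ are real projectors of rank $2r$ with $\widehat{E_g}\widehat{E_{g'}}=\widehat{E_gE_{g'}}=0$ on edges, so they form a $(2d)/(2r)$-representation of $G$, giving $\xi_f^\R(G)\le 2d/(2r)=d/r$; taking the infimum yields $\xi_f^\R(G)\le\xi_f(G)$.

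I do not anticipate a genuine obstacle: this is exactly the complex-to-real trick of Remark~\ref{remquant}, and it has no effect on the ratios $\rk(M)/d$ and $d/r$ because realification scales numerator and denominator by the same factor $2$. The only point deserving a careful sentence is the rank bookkeeping---that realifying just the $M(d,\C)$ tensor factor of $M$ reproduces, up to reindexing, the full realification of $M$ (so the total rank at most doubles), together with the fact that a realified projector is again a projector, of exactly twice the rank.
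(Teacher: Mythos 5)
Your proof is correct and follows essentially the same route as the paper: both realify complex matrices via $A\mapsto\bigl(\begin{smallmatrix}\Re(A)&\Im(A)\\-\Im(A)&\Re(A)\end{smallmatrix}\bigr)$, apply it blockwise to a complex $d$-representation and note the result agrees with the full realification up to row/column permutations (so the rank at most doubles while the denominator doubles), and realify the projectors directly for the projective rank. Your packaging of the map as a unital $*$-homomorphism with the rank-doubling property is just a slightly more structured statement of the paper's Lemma on $\rk(R(E))=2\rk(E)$ and its permutation bookkeeping.
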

\begin{proof}
The following lemma is readily verified.
\begin{lemma}\label{lemma: complex to real rank}
Let $E\in M(n,\C)$.
Define the real matrix
\[
R(E) = \begin{pmatrix}\Re(E)&\Im(E)\\-\Im(E)&\Re(E)\end{pmatrix} \in M(2n, \R).
\]
Then $\rk(R(E)) = 2\rk(E)$.
\end{lemma}

For the fractional Haemers bound, clearly $\cH_f^\C(G)\leq\cH_f^\R(G)$ since every real matrix is a complex matrix, and its rank over $\R$ equals the rank over $\C$. We prove $\cH_f^\R(G)\leq\cH_f^\C(G)$ by proving that, for every $d$-representation $M$ of $G$, there exists a $2d$-representation $M'$ of $G$, such that $\rk(M')\leq 2\rk(M)$. Assume $G$ has $n$ vertices. Write $M$ in the block matrix form
\begin{equation}
M=
\begin{bmatrix}
M_{1,1}&\cdots &M_{1,n}\\
\vdots& &\vdots\\
M_{n,1}&\cdots& M_{n,n}
\end{bmatrix}\in M(nd,\C),
\end{equation}
where $M_{i,i}=I_d$ for $i\in[n]$, and $M_{i,j}= M_{j,i}=0$ if $\{i,j\}\in E$. Let $M'$ be the $2nd\times 2nd$ real matrix of the form $M'=\begin{bmatrix}\Re(M)&\Im(M)\\-\Im(M)& \Re(M)\\\end{bmatrix}$. On the other hand, let $M_{i,j}'=\begin{bmatrix}\Re(M_{i,j})&\Im(M_{i,j})\\-\Im(M_{i,j})& \Re(M_{i,j})\\\end{bmatrix}$ and denote 
\begin{equation}
M''=
\begin{bmatrix}
M'_{1,1}&\cdots &M'_{1,n}\\
\vdots& &\vdots\\
M'_{n,1}&\cdots& M'_{n,n}
\end{bmatrix}\in M(2nd,\R).
\end{equation}
It is clear that $2\rk(M)\geq \rk(M')$ and $M''\in\cM_\R^{2d}(G)$. We show that $M''$ can be transformed to $M'$ by some row and column permutations of the blocks, which will not influence the rank. We first sort the columns, resulting that the first block row of the first $n$ block columns is $\Re(M_{1,1}),\ldots,\Re(M_{1,n})$ and the last $n$ block column is $\Im(M_{1,1}),\ldots,\Im(M_{1,n})$. Then we sort the rows, resulting that the first block column of the first $n$ block rows is $\Re(M_{1,1}),\ldots,\Re(M_{n,1})$ and the last $n$ block column is $\Im(M_{1,1}),\ldots,\Im(M_{n,1})$. Denote the matrix of these two permutations by $S$ and $T$, it is easy to check that $SM''T=M'$ (In fact, $T=S^T$). Thus $\rk(M'')=\rk(M')\leq 2\rk(M)$, and $\cH_f^\R(G)\leq\cH_f^\C(G)$ follows.

For the projective rank, let $(E_g\in M(d,\C):g\in V(G))$ be a $d/r$-representation of $G$. From Lemma~\ref{lemma: complex to real rank} follows that $(R(E_g) : g\in V(G))$
is a $2d/2r$-representation. We conclude that $\xi_f^\R(G)\leq\xi_f(G)$. On the other hand, every real~$d/r$-rep\-re\-sen\-ta\-tion is also a complex $d/r$-representation. Therefore, $\xi_f(G)\leq\xi_f^\R(G)$.
\end{proof}

{Whether $\cH_f^\R$ and $\overline{\xi}_f$ belong to $\bX(\cG,\leq_*)$ remains unknown. In particular, we do not see how to adapt the proof to show that they are monotones with respected to the entanglement-assisted cohomomorphism preorder. }

Recall that $\leq_q$ can be obtained from $\leq_*$ by restricting to the use of maximally entangled state and projective measurements in the zero-error information transmission setting~\cite{manvcinska2016quantum}. An open problem in quantum zero-error information theory is to show maximally entangled state is also necessary to achieve the maximal entanglement-assisted Shannon capacity~\cite{manvcinska2016quantum}. Namely, 
\begin{conjecture}\label{conj: theta q}
$\Theta_q(G)=\Theta_*(G)$ for all graphs $G$.
\end{conjecture}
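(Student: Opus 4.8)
Since this is stated as a conjecture, I will describe the route I would attempt and where I expect it to break. The plan is to establish the nontrivial inequality $\Theta_*(G)\le\Theta_q(G)$; the other direction is automatic, because $\overline{G}\Qto\overline{H}$ implies $\overline{G}\EAto\overline{H}$, hence $G\leq_q H\Rightarrow G\leq_* H$, hence $\alpha_q(G^{\boxtimes N})\le\alpha_*(G^{\boxtimes N})$ for all $N$ and $\Theta_q(G)\le\Theta_*(G)$ (equivalently $\bX(\cG,\leq_*)\subseteq\bX(\cG,\leq_q)$, which re-derives the same bound through the spectral formula of \cref{thm: asymptotic subrank}). So what remains is to turn entanglement-assisted independent sets of $G^{\boxtimes N}$ into quantum, i.e.\ maximally entangled and projective, independent sets without losing in the rate.

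First I would look for the single-shot equality $\leq_q\,=\,\leq_*$ on $\cG$: if any shared-state-plus-POVM strategy for the $(\overline{G},\overline{H})$-homomorphism game could always be simulated by one using a maximally entangled state and projectors, the conjecture would be immediate. Failing that, I would aim for the spectral inclusion $\bX(\cG,\leq_q)\subseteq\bX(\cG,\leq_*)$, which already forces the two minima, and hence $\Theta_q$ and $\Theta_*$, to agree on every graph. The last resort, and the statement closest to what is actually claimed, is an asymptotic rounding: given $\overline{K_n}\leq_* G^{\boxtimes N}$, manufacture $\overline{K_m}\leq_q G^{\boxtimes M}$ with $m^{1/M}$ arbitrarily close to $n^{1/N}$, so that replacing the shared state by a maximally entangled one is allowed to be lossy per copy but amortises away as $M\to\infty$.

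The hard part will be that none of these steps has a known execution and the obstruction is structural, not computational: there is no general mechanism for substituting a maximally entangled state for an arbitrary one while keeping the number of simultaneously perfectly distinguishable messages fixed, and even the single-shot equality $\leq_q=\leq_*$ is a well-known open problem. I would also flag a hard consistency check for anyone pursuing this. By \cref{thm: element in leq q}, $\cH_f^\C\in\bX(\cG,\leq_q)$, so $\Theta_q\le\cH_f^\C$ pointwise; since the complex fractional Haemers bound lies strictly below $\vartheta$ on some graph (for instance the complement of the Schl\"afli graph, by Haemers' separation), $\Theta_q$ is strictly below $\vartheta$ there, and so \cref{conj: theta} ($\Theta_*=\vartheta$) and \cref{conj: theta q} cannot both hold. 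A proof of $\Theta_q=\Theta_*$ would therefore simultaneously refute the widely believed $\Theta_*=\vartheta$, which makes it at least as plausible that the resolution runs the other way, via an explicit separation $\Theta_q(G)<\Theta_*(G)$.
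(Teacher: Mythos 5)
The statement you were asked about is a conjecture: the paper does not prove it and no proof is known, so your refusal to claim one is the correct assessment, and your account of the situation matches the paper's own treatment — the easy direction $\Theta_q(G)\leq\Theta_*(G)$ from $G\leq_q H\Rightarrow G\leq_* H$, the openness of the converse (and of the single-shot question whether $\leq_q$ and $\leq_*$ coincide). Your ``consistency check'' is exactly the paper's corollary: since $\cH_f^\C\in\bX(\cG,\leq_q)$ gives $\Theta_q\leq\cH_f^\C<\vartheta$ on the complement of the Schl\"afli graph, Conjecture~\ref{conj: theta} and Conjecture~\ref{conj: theta q} cannot both be true, which is precisely the tension the paper highlights.
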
 

The original proof of Haemers~\cite{haemers1979some} shows that taking $G$ to be the complement of the Schl\"afli graph, $\cH^\R(G)\leq 7<9=\vartheta(G)$. By Theorem~\ref{thm: element in leq q}, we know that $\Theta_q$ and $\vartheta$ are not the same parameters, which immediately implies the following:
\begin{corollary}
Conjecture~\ref{conj: theta} and~\ref{conj: theta q} cannot both be true. In other words, there exists a graph $G$, such that either $\Theta_*(G)<\vartheta(G)$ or $\Theta_q(G)< \Theta_*(G)$.
\end{corollary}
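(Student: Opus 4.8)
The plan is to produce a single graph $G$ for which the fractional complex Haemers bound already forces $\Theta_q(G) < \vartheta(G)$, and then finish by a short case analysis. Concretely, I would take $G$ to be the complement of the Schl\"afli graph, Haemers' original separating example from~\cite{haemers1979some}, for which $\vartheta(G) = 9$ and $\cH^\R(G) \le 7$. The first step is to transfer this classical separation to the \emph{fractional} complex bound: by \cref{prop: real and complex} we have $\cH_f^\C(G) = \cH_f^\R(G)$, and $\cH_f^\R(G) \le \cH^\R(G)$ by comparing the definitions~\eqref{eq: fractional Haemers bound} and~\eqref{eq: hamears bound}. Together these give $\cH_f^\C(G) \le 7 < 9 = \vartheta(G)$.

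Second, I would invoke \cref{thm: element in leq q}, which states $\cH_f^\C \in \bX(\cG, \leq_q)$, together with the dual characterization $\Theta_q(G) = \min_{\phi \in \bX(\cG, \leq_q)} \phi(G)$ established just above. Since $\cH_f^\C$ is one of the functions in the minimum, $\Theta_q(G) \le \cH_f^\C(G) < \vartheta(G)$.

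The last step is the case analysis. Recall $\bX(\cG,\leq_*) \subseteq \bX(\cG,\leq_q)$, so $\Theta_*(G)$ is a minimum over a subset and hence $\Theta_q(G) \le \Theta_*(G)$ for every $G$. Now either $\Theta_*(G) < \vartheta(G)$, giving the first alternative in the statement, or $\vartheta(G) \le \Theta_*(G)$, in which case $\Theta_q(G) < \vartheta(G) \le \Theta_*(G)$ gives the second alternative (even with strict inequality). In particular, were both \cref{conj: theta} and \cref{conj: theta q} true, we would have $\Theta_q(G) = \Theta_*(G) = \vartheta(G)$, contradicting $\Theta_q(G) < \vartheta(G)$; so the two conjectures cannot both hold.

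Essentially all the content lives in results already proved: \cref{thm: element in leq q} (the $\leq_q$-monotonicity of $\cH_f^\C$), \cref{prop: real and complex}, and the dual characterization of $\Theta_q$. There is no genuine obstacle; the only point needing a little care is that Haemers' separation is stated over $\R$ for the \emph{non-fractional} bound, so one must check that it survives the passage to $\cH_f^\C$ --- which is exactly what the chain $\cH_f^\C = \cH_f^\R \le \cH^\R$ provides. Everything else is bookkeeping with the inclusions among the asymptotic spectra.
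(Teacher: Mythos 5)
Your proposal is correct and follows essentially the same route as the paper: it also takes $G$ to be the complement of the Schl\"afli graph with $\cH^\R(G)\leq 7<9=\vartheta(G)$ and invokes \cref{thm: element in leq q} (the fractional Haemers bound lies in $\bX(\cG,\leq_q)$, hence upper bounds $\Theta_q$) to get $\Theta_q(G)<\vartheta(G)$, from which the incompatibility of \cref{conj: theta} and \cref{conj: theta q} is immediate. Your passage through $\cH_f^\C=\cH_f^\R\leq\cH^\R$ and the explicit case analysis are just bookkeeping that the paper leaves implicit, using $\cH_f^\R$ directly.
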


\section{Dual characterization of entanglement-assisted Shannon capacity of non-commutative graphs}\label{sec: dual characterization nc graphs}
In this section, we focus on the fully quantum setting: the entanglement-assisted Shannon capacity of nc-graphs. We point out that Strassen's theory of asymptotic spectra seems not applicable to the (unassisted) Shannon capacity of nc-graphs. We discuss this further in Appendix~\ref{sec: unassisted}.

Recall that the map 
\[
\cG\to \cS : G \mapsto S_G\coloneqq\linspan\{\ket{g}\!\bra{g'}:g=g'\in V(G)~{\rm or}~\{g,g'\}\in E(G)\}
\]
is an injective semiring homomorphism. We prove that this homomorphism behaves well with respect to the entanglement-assisted cohomomorphism preorders on $\cG$ and $\cS$.
\begin{lemma}\label{prop: reduce to EA preorder graphs}
For any graphs $G, H\in \cG$, \changed{we have that}
$G\leq_* H$ if and only if $S_G\leq_*S_H$.
\end{lemma}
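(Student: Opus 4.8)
The plan is to unwind both definitions concretely and match them up. Recall that $S_G \leq_* S_H$ means: there is a positive definite $\rho \in \cD(A_0)$ and $E = \linspan\{E_i\}_i \subseteq \cL(\C^{V(G)} \otimes A_0, \C^{V(H)})$ with $\sum_i E_i^\dagger E_i = I_{\C^{V(G)} \otimes A_0}$ and $E(S_G^\perp \otimes \rho)E^\dagger \perp S_H$. On the graph side, $G \leq_* H$ means $\overline G \EAto \overline H$, i.e.\ there are positive semidefinite $\sigma$ and $(\sigma_g^h)$ with $\sum_h \sigma_g^h = \sigma$ for each $g \in V(G)$, and $\sigma_g^h \sigma_{g'}^{h'} = 0$ whenever $\{g,g'\} \in E(\overline G)$ and $\{h,h'\} \notin E(\overline H)$ — equivalently (taking complements, and using the convention that $\{g,g'\}\notin E(G)$ includes $g=g'$): $\sigma_g^h \sigma_{g'}^{h'} = 0$ whenever $\{g,g'\} \notin E(G)$ and ($\{h,h'\} \in E(H)$ or $h = h'$). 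The first observation I would record is the explicit description $S_G^\perp = \linspan\{\ket{g}\!\bra{g'} : g \neq g',\ \{g,g'\} \notin E(G)\}$; that is, $S_G^\perp$ is spanned by the ``non-edge'' rank-one operators, since $S_G$ is spanned by the diagonal and ``edge'' rank-one operators and these together form an orthonormal basis of $\cL(\C^{V(G)})$ up to normalization.

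For the direction $G \leq_* H \Rightarrow S_G \leq_* S_H$: given the $(\sigma, \sigma_g^h)$ witnessing $\overline G \EAto \overline H$ on some space $A_0$, I would take $\rho = \sigma$ (after noting $\sigma$ may be taken positive definite, as in the remark after Definition~\ref{def: entanglement-assisted homomorphism}), and build Choi--Kraus operators indexed by $(g,h) \in V(G) \times V(H)$ roughly of the form $E_{g,h} = \ket{h} \otimes \bra{g} \otimes (\sigma_g^h)^{1/2}\,\rho^{-1/2}$, acting $\C^{V(G)} \otimes A_0 \to \C^{V(H)}$. The normalization $\sum_{g,h} E_{g,h}^\dagger E_{g,h} = \sum_g \ket{g}\!\bra{g} \otimes \rho^{-1/2}(\sum_h \sigma_g^h)\rho^{-1/2} = \sum_g \ket g\!\bra g \otimes I = I$ uses precisely $\sum_h \sigma_g^h = \rho$. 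Then one computes $E_{g,h}\,(\ket{g_1}\!\bra{g_2} \otimes \rho)\, E_{g',h'}^\dagger$, which is nonzero only if $g = g_1$, $g' = g_2$, and equals a scalar multiple of $\ket h\!\bra{h'} \otimes (\sigma_g^h)^{1/2}(\sigma_{g'}^{h'})^{1/2}$ on the $A_0$-factor after the $\rho^{-1/2}\rho\rho^{-1/2}=I$ cancellation — wait, one must be slightly careful: the $A_0$ parts of $E_{g,h}$ and $E_{g',h'}^\dagger$ sandwich $\rho$, giving $(\sigma_g^h)^{1/2}\rho^{-1/2}\rho\rho^{-1/2}(\sigma_{g'}^{h'})^{1/2} = (\sigma_g^h)^{1/2}(\sigma_{g'}^{h'})^{1/2}$, and this vanishes iff $\sigma_g^h \sigma_{g'}^{h'} = 0$ (using $\Tr$ of the product of positive semidefinites). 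For $\ket g\!\bra{g'}\otimes\rho$ with $\ket g\!\bra{g'} \in S_G^\perp$ we have $g \neq g'$ and $\{g,g'\}\notin E(G)$; the $H$-part of $E(S_G^\perp\otimes\rho)E^\dagger$ is then a span of $\ket h\!\bra{h'}$ with $(\sigma_g^h)^{1/2}(\sigma_{g'}^{h'})^{1/2}\neq 0$, hence $\sigma_g^h\sigma_{g'}^{h'}\neq 0$, which by the orthogonality hypothesis forces $\{h,h'\}\in E(H)$ or $h=h'$ — exactly the condition that $\ket h\!\bra{h'} \in S_H$. So $E(S_G^\perp\otimes\rho)E^\dagger \perp S_H^\perp$... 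I need $\perp S_H$, so I should double-check the direction of the orthogonality in Definition~\ref{def: homomorphic preorder of quantum channels}; the intended reading (matching Lemma~\ref{prop: nc-graph independent numbers}) is that the image lands in $(S_H)^\perp$ — I would align the computation accordingly, and this bookkeeping of ``$S$ vs.\ $S^\perp$'' is where care is needed.

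For the converse $S_G \leq_* S_H \Rightarrow G \leq_* H$: given $\rho$ and $\{E_i\}$ witnessing $S_G \leq_* S_H$, I would decompose each $E_i \colon \C^{V(G)}\otimes A_0 \to \C^{V(H)}$ into its ``blocks'' $E_{i}^{g,h} := (\bra h \otimes I_{A_0\to\text{scalar?}})$ — more precisely, writing $E_i = \sum_{g,h} \ket h\!\bra g \otimes B_i^{g,h}$ with $B_i^{g,h} \in \cL(A_0, \C) $, hmm, the target has no $A_0$ factor, so $B_i^{g,h}$ is a row vector / functional on $A_0$; let me instead write $E_i = \sum_{g,h}\ket h\!\bra g \otimes \bra{b_i^{g,h}}$ with $\ket{b_i^{g,h}} \in A_0$. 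Then define $\sigma_g^h := \sum_i \rho^{1/2}\ket{b_i^{g,h}}\!\bra{b_i^{g,h}}\rho^{1/2}$ (positive semidefinite), and check $\sum_h \sigma_g^h = \rho^{1/2}(\sum_{i,h}\ket{b_i^{g,h}}\!\bra{b_i^{g,h}})\rho^{1/2}$; the normalization $\sum_i E_i^\dagger E_i = I$ restricted to the $\ket g\!\bra g \otimes(\cdot)$ block gives $\sum_{i,h}\ket{b_i^{g,h}}\!\bra{b_i^{g,h}} = \rho^{-1}$, so $\sum_h\sigma_g^h = I =: \sigma$ (so $\sigma$ is even the identity, which is fine and positive definite). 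Finally translate $E(S_G^\perp\otimes\rho)E^\dagger \perp S_H$ back: for non-edge $\ket g\!\bra{g'}$ of $G$, the component of $E(\ket g\!\bra{g'}\otimes\rho)E^\dagger$ along $\ket h\!\bra{h'}$ is $\sum_{i,j}\bra{b_i^{g,h}}\rho\ket{b_j^{g',h'}}$ (a scalar); requiring this to vanish for all edges/diagonal $\{h,h'\}$ of $H$... but I actually need the stronger $\sigma_g^h\sigma_{g'}^{h'}=0$ as operators, not just a single inner product — so I would need the orthogonality to hold ``with a $B_0$ system'' or appeal to the Schmidt-decomposition trick as in the proof of Lemma~\ref{claim: orthogonal}, OR observe that one may assume $\rho$ is a maximally mixed / identity-like state on a purifying space so that the relevant sums of moduli-squared vanish term by term. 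The main obstacle, I expect, is exactly this last point: extracting operator-level orthogonality $\sigma_g^h\sigma_{g'}^{h'}=0$ from the scalar/Hilbert--Schmidt orthogonality condition in the nc-graph definition. I would handle it the same way Lemma~\ref{claim: orthogonal} does — by writing out $\Tr(\sigma_g^h\sigma_{g'}^{h'})$ (or a suitable purified version) as a sum of squared absolute values of traces against a spanning set of $S_H$, each of which is forced to be zero — so that the whole sum, hence each summand, hence the product operator, vanishes. Everything else is the routine block-matrix bookkeeping sketched above, together with the identification of $S_G^\perp$ with the non-edge operators and of $S_H$-membership with the ``$\{h,h'\}\in E(H)$ or $h=h'$'' condition.
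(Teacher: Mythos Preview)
Your plan is essentially correct, and in fact a bit more direct than the paper's version. Two things to clean up.

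First, the ``obstacle'' you flag in the direction $S_G\leq_*S_H\Rightarrow G\leq_*H$ is not there. In the nc-graph definition, $E=\linspan\{E_i\}_i$ is a subspace and $E(S_G^\perp\otimes\rho)E^\dagger$ denotes the span of all $E_iXE_j^\dagger$; the orthogonality to $S_H$ therefore gives $\bra{b_i^{g,h}}\rho\ket{b_j^{g',h'}}=0$ for \emph{each} pair $(i,j)$ separately, not just for the sum $\sum_{i,j}(\cdots)$ you wrote. With $\sigma_g^h=\sum_i\rho^{1/2}\ket{b_i^{g,h}}\!\bra{b_i^{g,h}}\rho^{1/2}$ this yields $\sigma_g^h\sigma_{g'}^{h'}=0$ term by term, with no need for a Lemma~\ref{claim: orthogonal}-style argument. (Also, the normalization $\sum_iE_i^\dagger E_i=I$ gives $\sum_{i,h}\ket{b_i^{g,h}}\!\bra{b_i^{g,h}}=I_{A_0}$, not $\rho^{-1}$; so $\sum_h\sigma_g^h=\rho$, not $I$, which is equally fine.)

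Second, in the other direction your $E_{g,h}=\ket h\bra g\otimes(\sigma_g^h)^{1/2}\rho^{-1/2}$ maps into $\C^{V(H)}\otimes A_0$, not $\C^{V(H)}$. Add a basis index $k$ and set $E_{g,h,k}=\ket h\bigl(\bra g\otimes\bra k(\sigma_g^h)^{1/2}\rho^{-1/2}\bigr)$ for an orthonormal basis $\{\ket k\}$ of $A_0$; then your computation goes through, and the orthogonality confusion resolves as you anticipated: the image $E(S_G^\perp\otimes\rho)E^\dagger$ lies in $S_H^\perp$, hence is orthogonal to $S_H$.

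The paper follows the same route in different clothing. For $(\Leftarrow)$ it purifies $\sigma$ to $\ket\Omega$ and sets $\rho_g^h=\sum_i(\bra hE_i\otimes I)(\Proj g\otimes\Proj\Omega)(E_i^\dagger\ket h\otimes I)$ on the purifying system; for $(\Rightarrow)$ it invokes a Schr\"odinger--HJW-type lemma (Lemma~\ref{lemma: POVM}) to produce POVMs $A_g^h=\rho^{-1/2}(\rho_g^h)^T\rho^{-1/2}$ and then spectrally decomposes them to manufacture Kraus operators. Up to a transpose and a change of basis these are exactly your constructions; your direct $\rho^{\pm 1/2}$ conjugation simply avoids the purification detour.
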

\begin{proof}
Let $|V(G)|=n$ and $|V(H)|=m$.

$(\Leftarrow)$ Assume there exist a positive definite $\sigma\in\cD(\C^d)$ and $E=\{E_i\}_i\subseteq\cL(\C^n\otimes\C^d,\C^m)$ satisfying $\sum_{i}E_i^\dagger E_i= I_{nd}$ and $E(S_G^\perp\otimes \sigma)E^\dagger \perp S_H$. 
Let $\sigma = \sum_{x=1}^d\lambda_x\Proj{\psi_x}$ be the spectral decomposition of $\sigma$ and let $\ket{\Omega}=\sum_{x=1}^d\sqrt{\lambda_x}\ket{\psi_x}\ket{x}\in\C^d\otimes\C^d$ be a purification of $\sigma$. Let
\begin{equation}
\rho_g^h=\sum_i(\bra{h}E_i\otimes I_{d})(\Proj{g}\otimes\Proj{\Omega})(E_i^\dagger\ket{h}\otimes I_{d})\in\cL(\C^d)
\end{equation}
for each $g\in V(G)$ and $h\in V(H)$.
Let $\rho\coloneqq\sum_{x=1}^d\lambda_x\Proj{x}$.
One verifies directly that every $\rho_g^h$ is positive semidefinite. We first prove that $\sum_{h\in V(H)}\rho_g^h=\rho$ for all $g\in V(G)$. Note that
\begin{equation}
 \begin{split}
 \sum_{h\in V(H)}\rho_g^h&=\sum_{h\in V(H)}\sum_{i} \bigl(\bra{h}E_i\otimes I_{d} \bigr)\bigl(\Proj{g}\otimes\Proj{\Omega}\bigr)\bigl(E_i^\dagger\ket{h}\otimes I_{d}\bigr)\\
 &=\sum_{h\in V(H)}\sum_{i,x,y}\sqrt{\lambda_x\lambda_y}\bigl(\bra{h}E_i (\Proj{g}\otimes\ket{\psi_x}\!\bra{\psi_y})E_i^\dagger\ket{h}\bigr)\otimes \ket{x}\!\bra{y}\\
&=\sum_{i,x,y}\sqrt{\lambda_x\lambda_y}\Tr\bigl(E_i(\Proj{g}\otimes\ket{\psi_x}\!\bra{\psi_y})E_i^\dagger\bigr)\otimes \ket{x}\!\bra{y}\\
&=\sum_{i,x,y}\sqrt{\lambda_x\lambda_y}\bigl(\bra{g}\otimes\bra{\psi_y}\bigr)E_i^\dagger E_i\bigl(\ket{g}\otimes\ket{\psi_x}\bigr)\otimes \ket{x}\!\bra{y}\\
&=\sum_x\lambda_x\Proj{x},
 \end{split}
 \end{equation} 
where the last equality holds since $\sum_i E_i^\dagger E_i= I$ and $\langle\psi_y|\psi_x\rangle=0$ for all $x\neq y\in[d]$. Now we are left to prove that $\rho_g^h\rho_{g'}^{h'}=0$ for all $g\neq g', \{g,g'\}\not\in E(G)$ and $\{h,h'\}\in E(H)$ or $h=h'$. Note that $\rho_g^h\rho_{g'}^{h'}$ equals
\begin{equation*}
\begin{split}
\rho_g^h\rho_{g'}^{h'} &= \sum_{i,j}(\bra{h}E_i\otimes I_{d})(\Proj{g}\otimes\Proj{\Omega})(E_i^\dagger\ket{h}\!\bra{h'}E_j\otimes I_{d})(\Proj{g'}\otimes\Proj{\Omega})(E_j^\dagger\ket{h'}\otimes I_{d})\\
=&\sum_{\substack{i,j\\x,y,z,w}}\sqrt{\lambda_x\lambda_y\lambda_z\lambda_w}\bra{h}E_i (\Proj{g}\otimes\ket{\psi_x}\!\bra{\psi_y})E_i^\dagger\ket{h}\!\bra{h'}E_j(\Proj{g'}\otimes\ket{\psi_z}\!\bra{\psi_w})E_j^\dagger\ket{h'}\otimes \langle y|z\rangle\ket{x}\!\bra{w}\\
=&\sum_{\substack{i,j\\x,y,w}}\sqrt{\lambda_x\lambda_w}\lambda_y\bra{h}E_i (\Proj{g}\otimes\ket{\psi_x}\!\bra{\psi_y})E_i^\dagger\ket{h}\!\bra{h'}E_j(\Proj{g'}\otimes\ket{\psi_y}\!\bra{\psi_w})E_j^\dagger\ket{h'}\ket{x}\!\bra{w}\\
=&\sum_{i,j,x,w}\sqrt{\lambda_x\lambda_w}(\bra{h}E_i(\ket{g}\otimes\ket{\psi_x}))(\Tr(E_j(\ket{g'}\!\bra{g}\otimes\sigma)E_i^\dagger\ket{h}\!\bra{h'}))((\bra{g'}\otimes\bra{\psi_w})E_j^\dagger\ket{h'})\ket{x}\!\bra{w},\\
\end{split}
\end{equation*}
where the last equality holds since 
\begin{equation}\label{eq: inner product 2}
\sum_{y}\lambda_y(\bra{g}\otimes\bra{\psi_y})E_i^\dagger\ket{h}\!\bra{h'}E_j(\ket{g'}\otimes\ket{\psi_y})=\Tr(E_j(\ket{g'}\!\bra{g}\otimes\sigma)E_i^\dagger\ket{h}\!\bra{h'}).
\end{equation}
Recall that $E(S_G^\perp\otimes \sigma) E\perp  S_H$, where  $S_G^\perp=\{\ket{g}\!\bra{g'}: g\neq g', \{g,g'\}\not\in E(G)\}$ and $S_H=\{\ket{h}\!\bra{h'}:\{h,h'\}\in E(H)~\textnormal{or}~h=h'\in V(H)\}$. Equation~\eqref{eq: inner product 2} equals $0$ when $\{g,g'\}\not\in E(G)$ and $h=h'\in V(H)$ or $\{h,h'\}\in E(H)$. We conclude that $\rho_g^h\rho_{g'}^{h'}=0$ for $g\neq g', \{g,g'\}\not\in E(G)$ and $h=h'\in V(H)$ or $\{h,h'\}\in E(H)$. 

$(\Rightarrow)$ Assume $G\leq_* H$. There exist $d\in\N$, a positive definite matrix $\rho\in M(d,\C)$ and positive semidefinite matrices $(\rho_g^h\in M(d,\C):g\in V(G),\, h\in V(H))$ such that $\sum_{h\in V(H)}\rho_g^h=\rho$ for all $g\in V(G)$ and $\rho_g^h\rho_{g'}^{h'}=0$ if $g\neq g', \{g,g'\}\not\in E(G)$ and $\{h,h'\}\in E(H)$ or $h=h'$. We shall prove that there exist a positive definite $\sigma\in\cD(\C^d)$ and $E=\linspan\{E_i\}_i\subseteq\cL(\C^n\otimes\C^d,\C^m)$ satisfying $\sum_i E_i^\dagger E_i=I_{nd}$, such that $E(S_G^\perp\otimes \sigma)E^\dagger \perp S_H$. We need the following lemma, which we prove for the convenience of the reader.
\begin{lemma}[\cite{HUGHSTON199314,Spekkens:2002:OCA:2011417.2011421}]\label{lemma: POVM}
Let $\rho_1,\dots,\rho_\ell\in M(d,\C)$ be a collection of positive semidefinite matrices which sum up to a positive definite matrix $\rho\in M(d,\C)$. Then there exist $\ket{\Omega}\in\C^d\otimes\C^d$ and a POVM $\{A_1,\dots,A_\ell\}\subseteq M(d,\C)$, i.e.\ a collection of positive semidefinite matrices that add up to the identity, such that $\rho_k=\Tr_1((A_k\otimes I)\Proj{\Omega})$.
Namely, let $\rho=\sum_{i=1}^{d}\lambda_i\Proj{\psi_i}$ be the spectral decomposition of $\rho$, so $\lambda_i>0$ for $i\in[d]$ and $\{\ket{\psi_1},\dots,\ket{\psi_{d}}\}$ forms an orthonormal basis of $\C^d$. Let $\ket{\Omega}=\sum_{i=1}^{d}\sqrt{\lambda_i}\ket{\psi_i}\otimes\ket{\psi_i}\in\C^d\otimes \C^d$ and let $A_k=\rho^{-1/2}\rho_k^T\rho^{-1/2}$.
\end{lemma}
\begin{proof}
We have
\[
A_k=\rho^{-1/2}\rho_k^T\rho^{-1/2}=\sum_{i,j=1}^{d}\frac{1}{\sqrt{\lambda_i\lambda_j}}\Proj{\psi_i}\rho_k^T\Proj{\psi_j}=\sum_{i,j=1}^{d}\frac{1}{\sqrt{\lambda_i\lambda_j}}\bra{\psi_i}\rho_k^T\ket{\psi_j}\ket{\psi_i}\!\bra{\psi_j}
\]
for $k\in[\ell]$. %
Moreover,
\begin{equation*}
\begin{split}
\Tr_1((A_k\otimes I)\Proj{\Omega})&=\sum_{i,j,x,y=1}^{d}\frac{\sqrt{\lambda_x\lambda_y}}{\sqrt{\lambda_i\lambda_j}}\bra{\psi_i}\rho_k^T\ket{\psi_j}\Tr_1((\ket{\psi_i}\!\bra{\psi_j}\otimes I)(\ket{\psi_x}\!\bra{\psi_y}\otimes\ket{\psi_x}\!\bra{\psi_y}))\\
&=\sum_{i,j}\bra{\psi_i}\rho_k^T\ket{\psi_j}\ket{\psi_j}\!\bra{\psi_i}=\rho_k.
\end{split}
\end{equation*}
This proves Lemma~\ref{lemma: POVM}.
\end{proof}
Following Lemma~\ref{lemma: POVM}, we define the pure state $\ket{\Omega}=\sum_{i=1}^{d}\sqrt{\lambda_i}\ket{\psi_i}\otimes\ket{\psi_i}\in \C^d\otimes\C^d$, where $\rho=\sum_{i=1}^{d}\lambda_i\Proj{\psi_i}$ is the spectral decomposition of $\rho$. Then, for every $g\in V(G)$, there exists a POVM $(A_g^h=\rho^{-1/2}(\rho_g^h)^T\rho^{-1/2}:h\in V(H))$, indexed by $h\in V(H)$, such that $\rho_g^h=\Tr_{1}((A_g^h\otimes I)\Proj{\Omega})$. 
For $g\neq g', \{g,g'\}\not\in E(G)$ and $\{h,h'\}\in E(H)$ or $h=h'$, $\rho_g^h\rho_{g'}^{h'}=0$ implies $(\rho_g^h)^T(\rho_{g'}^{h'})^T=0$, thus 
\begin{equation}\label{eq: Agh rho Ag'h'}
A_g^h\rho A_{g'}^{h'}=(\rho^{-1/2}(\rho_g^h)^T\rho^{-1/2})\rho(\rho^{-1/2}(\rho_{g'}^{h'})^T\rho^{-1/2})=\rho^{-1/2}(\rho_g^h)^T(\rho_{g'}^{h'})^T\rho^{-1/2}=0.
\end{equation}
Since $A_g^h$ is positive semidefinite for all $g\in V(G)$ and $h\in V(H)$, there is a spectral decomposition $A_g^h=\sum_{x}\mu_{x}\Proj{\phi^{g,h}_{x}}$, with $\mu_x>0$ for all possible $x$. By Equation~\eqref{eq: Agh rho Ag'h'}, we know that $\bra{\phi^{g,h}_{x}}\rho\ket{\phi^{g',h'}_{y}}=0$ for all possible $x,y$ and $g\neq g', \{g,g'\}\not\in E(G)$ and $\{h,h'\}\in E(H)$ or $h=h'$. Let $M_g^h=\sum_{x}\sqrt{\mu_x}\ket{\psi_x}\bra{\phi^{g,h}_{x}}$. We have $A_g^h=(M_g^h)^\dagger (M_g^h)$ and 
\begin{equation}\label{eq Mgh rho Mg'h'}
M_g^h \rho (M_{g'}^{h'})^{\dagger}=0,~\forall g\neq g', \{g,g'\}\not\in E(G)~{\rm and}~\{h,h'\}\in E(H)~{\rm or}~h=h'.
\end{equation}
Let $E=\linspan\{E_{i,g,h}=\ket{h}(\bra{g}\otimes\bra{\psi_i}M_g^h):g\in V(G),~h\in V(H),~i\in[d]\}\subseteq\cL(\C^n\otimes\C^d,\C^m).$ 
Note that 
\[
\sum_{i,g,h}E_{i,g,h}^\dagger E_{i,g,h}=\sum_{i,g,h} \Proj{g}\otimes(M_g^h)^\dagger \Proj{\psi_i} M_g^h=\sum_{g,h} \Proj{g}\otimes A_g^h=I_n\otimes I_d,
\]
where the second equality holds since $\{\ket{\psi_1},\dots,\ket{\psi_{d}}\}$ forms an orthonormal basis of $\C^d$ and since $A_g^h=(M_g^h)^\dagger (M_g^h)$, the last equality holds since $\sum_{h\in V(H)} A_g^h= I_d$ for all $g\in V(G)$. We also claim that $E(S_G^\perp\otimes\rho)E^\dagger\perp S_H$. We have
\[
E(S_G^\perp\otimes \rho)E^\dagger=\linspan\{\bra{\psi_i}(M_g^h\rho M_{g'}^{h'})\ket{\psi_j}\ket{h}\!\bra{h'}:\{g,g'\}\not\in E(G),~h,h'\in V(H),~i,j\in[d]\}.
\]
By Equation~\eqref{eq Mgh rho Mg'h'}, we know that $E(S_G^\perp\otimes \rho)E^\dagger$ is at most spanned by those operators $\ket{h}\!\bra{h'}$ with $h\neq h', \{h,h'\}\not\in E(H)$. This immediately implies $E(S_G^\perp \otimes \rho)E^\dagger\perp S_H$, since $S_H^\perp=\{\ket{h}\!\bra{h'}:\{h,h'\}\not\in E(H)\}$. This proves Lemma~\ref{prop: reduce to EA preorder graphs}.
\end{proof}

Now we prove that the entanglement-assisted cohomomorphism preorder $\leq_*$ (Definition~\ref{def: homomorphic preorder of quantum channels}) on nc-graphs is a Strassen preorder. %

\begin{lemma}\label{lemma: EA homo preorder quantum strassen}
For any nc-graphs $S\subseteq\cL(A)$, $S'\subseteq\cL(A')$, $T\subseteq\cL(B)$ and $T\subseteq\cL(B')$, \changed{it holds that}
\begin{enumerate}[label=\upshape(\roman*)]
\item $S\leq_* S$
\item if $S\leq_* T$ and $T\leq_* T'$, then $S\leq_* T'$
\item $\overline{\cK_m} \leq_* \overline{\cK_n}$ if and only if $m \leq n$
\item if $S \leq_* T$ and $S'\leq_* T'$, then $S\oplus S' \leq_* T\oplus T'$ and $S \otimes S' \leq_* T\otimes T'$
\item if $T\neq \overline{\cK_0}$, then there is an $r \in \N$ with $S \leq_* \overline{\cK_r} \otimes T$.
\end{enumerate}
\end{lemma}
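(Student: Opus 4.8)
I would check the five Strassen‑preorder conditions for $\leq_*$ on $\cS$ one at a time, transcribing the graph‑case arguments of Lemma~\ref{lemma: EA homo preorder strassen} into operator language and invoking the embedding Lemma~\ref{prop: reduce to EA preorder graphs} where it is convenient. For \emph{(i)} take the one‑dimensional ancilla $\rho=1$ and the single Choi--Kraus operator $E_1=I_A$, so that $E(S^\perp\otimes\rho)E^\dagger=S^\perp$, which is orthogonal to $S$ by the definition of $S^\perp$. For \emph{(ii)} I would mimic the composition used for graphs: given a witness $(\rho_1,\{E_i\}_i)$ for $S\leq_* T$ with $E_i\in\cL(A\otimes A_0,B)$ and a witness $(\rho_2,\{F_j\}_j)$ for $T\leq_* T'$ with $F_j\in\cL(B\otimes B_0,B')$, set $\rho=\rho_1\otimes\rho_2$ (still positive definite) and $G_{ij}=F_j(E_i\otimes I_{B_0})\in\cL(A\otimes A_0\otimes B_0,B')$. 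Then $\sum_{ij}G_{ij}^\dagger G_{ij}=\sum_i(E_i^\dagger\otimes I_{B_0})\bigl(\textstyle\sum_j F_j^\dagger F_j\bigr)(E_i\otimes I_{B_0})=\sum_i E_i^\dagger E_i\otimes I_{B_0}=I$, and for $X\in S^\perp$ one has $E_i(X\otimes\rho_1)E_{i'}^\dagger\in T^\perp$, hence $(E_i(X\otimes\rho_1)E_{i'}^\dagger)\otimes\rho_2\in T^\perp\otimes\rho_2$, hence $G_{ij}(X\otimes\rho)G_{i'j'}^\dagger=F_j\bigl((E_i(X\otimes\rho_1)E_{i'}^\dagger)\otimes\rho_2\bigr)F_{j'}^\dagger\in(T')^\perp$; so $S\leq_* T'$.

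\textbf{Constants and the monoid operations.} For \emph{(iii)}, since $\overline{\cK_n}=S_{\overline{K_n}}$, Lemma~\ref{prop: reduce to EA preorder graphs} gives $\overline{\cK_m}\leq_*\overline{\cK_n}$ if and only if $\overline{K_m}\leq_*\overline{K_n}$, and the latter holds iff $m\le n$ by Lemma~\ref{lemma: EA homo preorder strassen}(iii). For \emph{(iv)}: for the tensor product take $\rho=\rho_1\otimes\rho_2$ and $G_{ij}=E_i\otimes F_j$ (after reordering the domain's tensor factors), and use that $(T\otimes T')^\perp=T^\perp\otimes\cL(B')+\cL(B)\otimes(T')^\perp$, so that $E_i(X\otimes\rho_1)E_{i'}^\dagger\in T^\perp$ together with $F_j(X'\otimes\rho_2)F_{j'}^\dagger\in(T')^\perp$ already forces orthogonality with $T\otimes T'$. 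For the direct sum, take the common ancilla $A_0\otimes B_0$ with state $\rho_1\otimes\rho_2$ and build the block‑diagonal channel $\cL((A\oplus A')\otimes A_0\otimes B_0)\to\cL(B\oplus B')$ that, on the summand $A\otimes A_0\otimes B_0$, applies the $S\to T$ channel to the $A\otimes A_0$ factors while tracing out $B_0$ (Choi--Kraus operators $E_i\otimes\bra{h_l}$ for an orthonormal basis $\{\ket{h_l}\}$ of $B_0$) and outputs into the $B$‑summand, and symmetrically on the $A'$‑summand; trace‑preservation and orthogonality with $T\oplus T'$ then follow from the two given witnesses and the block structures of $(S\oplus S')^\perp$ and $(T\oplus T')^\perp$, exactly as the piecewise definition of $\tau_u^v$ works in Lemma~\ref{lemma: EA homo preorder strassen}(iv).

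\textbf{The padding condition (the main obstacle).} Item \emph{(v)} is the part that genuinely needs the ancilla: for the \emph{unassisted} preorder the analogue fails (for instance $\C I_2\subseteq\cL(\C^2)$ satisfies $\C I_2\not\leq\overline{\cK_r}$ for every $r$), so one cannot simply quote the cohomomorphism‑preorder fact as in the graph case. The plan is to first establish the key lemma: \emph{every nc‑graph $S\subseteq\cL(A)$ with $\dim A=d$ satisfies $S\leq_*\overline{\cK_{d^2}}$}. As a witness take the ancilla $A_0:=A$ with the maximally mixed state $\rho:=I_{A_0}/d$ and the single (unitary) Choi--Kraus operator $E\in\cL(A\otimes A_0,\C^{d^2})$ defined by $\bra{m}E=\bra{f_m}$, where $\{\ket{f_m}\}_{m\in[d^2]}$ is an orthonormal basis of $A\otimes A_0$ consisting of maximally entangled states (e.g.\ a Weyl--Heisenberg / generalized Bell basis). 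Since $I_A\in S$, every $X\in S^\perp$ is traceless, and for a maximally entangled $\ket{f}=\tfrac1{\sqrt d}\sum_a\ket a\otimes\ket{u_a}$ one computes $\bra{f}(X\otimes I_{A_0})\ket{f}=\tfrac1d\Tr X=0$; hence the diagonal of $E(X\otimes\rho)E^\dagger$ vanishes, i.e.\ $E(S^\perp\otimes\rho)E^\dagger\subseteq\overline{\cK_{d^2}}^\perp$, which is exactly $E(S^\perp\otimes\rho)E^\dagger\perp\overline{\cK_{d^2}}$.

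\textbf{Finishing (v).} Given $T\neq\overline{\cK_0}$, fix a unit vector $\ket{\psi_0}$ in the space underlying $T$; the isometry $\ket{m}\mapsto\ket{m}\otimes\ket{\psi_0}$ witnesses $\overline{\cK_n}\leq\overline{\cK_n}\otimes T$ for every $n$, because the image of $\overline{\cK_n}^\perp$ is spanned by the operators $\ket i\bra j\otimes\Proj{\psi_0}$ with $i\neq j$, and these are orthogonal to every $M\otimes N$ with $M\in\overline{\cK_n}$ (which is diagonal) and $N\in T$; hence $\overline{\cK_n}\leq_*\overline{\cK_n}\otimes T$ by Lemma~\ref{preorderrel}. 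Applying the key lemma to $S$ and this with $n=d^2$, transitivity (ii) gives $S\leq_*\overline{\cK_{d^2}}\leq_*\overline{\cK_{d^2}}\otimes T$, so $r=d^2$ works. I expect the key lemma — the maximally‑entangled‑basis construction — to be the only non‑routine step; (i)--(iv) are straightforward operator‑algebra transcriptions of the graph arguments already carried out in the excerpt.
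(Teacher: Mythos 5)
Your proposal is correct, and for items (i)--(iv) it is essentially the paper's argument: (i) via reflexivity of $\leq$ and Lemma~\ref{preorderrel}, (ii) via the composed Kraus operators $F_j(E_i\otimes I)$ with ancilla state $\rho_1\otimes\rho_2$, (iii) via Lemma~\ref{prop: reduce to EA preorder graphs} and Lemma~\ref{lemma: EA homo preorder strassen}, and (iv) via $E_i\otimes F_j$ for the tensor product; for the direct sum the paper uses the direct-sum ancilla $A_0\oplus A_0'$ with state $\rho\oplus\sigma$, whereas you use the tensor ancilla $A_0\otimes A_0'$ and trace out the unused factor, an equally valid variant. (Minor slip in (iv): the decomposition you need is on the input side, $(S\otimes S')^\perp=S^\perp\otimes\cL(A')+\cL(A)\otimes(S')^\perp$, and it is \emph{one} factor landing in $T^\perp$ or $(T')^\perp$, not both, that gives orthogonality; the argument is unaffected.) The genuine divergence is in (v). The paper proceeds in two steps: first $S\leq\cI_n\coloneqq\linspan\{I_n\}$, taking as Kraus operators a family $\{E_i\}_i$ with $\sum_i E_i^\dagger E_i=I$ and $S=\linspan\{E_i^\dagger E_j\}_{i,j}$, whose existence is imported from the channel-realization results of~\cite{duan2009super}; second, $\cI_n\leq_*\overline{\cK_{n^2}}$ by a teleportation argument with the Bell basis; then transitivity and $\overline{\cK_{n^2}}=\overline{\cK_{n^2}}\otimes\overline{\cK_1}\leq_*\overline{\cK_{n^2}}\otimes T$. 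You instead prove $S\leq_*\overline{\cK_{d^2}}$ in one stroke: the single unitary Kraus operator $E=\sum_m\ket{m}\!\bra{f_m}$ built from a maximally entangled basis, with $\rho=I/d$, works because $I_A\in S$ forces every $X\in S^\perp$ to be traceless and maximally entangled vectors have maximally mixed marginals, so the diagonal of $E(X\otimes\rho)E^\dagger$ vanishes. This is the same teleportation mechanism as the paper's second step, but applied directly to $S$, so your route avoids the external channel-realization theorem and is more self-contained, at the price of not isolating the intermediate relation to the trivial nc-graph $\cI_n$, which is the object the paper revisits when showing (Appendix~\ref{sec: unassisted}) that the unassisted preorder violates exactly this axiom. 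Your finishing step, $\overline{\cK_{n}}\leq\overline{\cK_{n}}\otimes T$ via the isometry $\ket{m}\mapsto\ket{m}\otimes\ket{\psi_0}$, replaces the paper's use of $\overline{\cK_1}\leq_* T$ together with (iv); the two are interchangeable.
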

\begin{proof}
(i) We know $S\leq T$ implies $S\leq_*T$ (Lemma~\ref{preorderrel}). %
Clearly $S\leq S$ holds by taking $E=\linspan\{I\}$. Therefore, also $S \leq_* S$.

(ii) Let a positive definite $\rho\in\cD(A_0)$ and $E=\linspan\{E_i\}_i\subseteq\cL(A\otimes A_0,B)$ be given by $S\leq_* T$, and a positive definite $\sigma\in\cD(B_0)$ and $F=\linspan\{F_j\}_j\subseteq\cL(B\otimes B_0,B')$ be given by $T\leq_* T'$. To see $S\leq_* T'$, take $\tau=\rho\otimes\sigma\in\cD(A_0\otimes B_0)$ and $F'=\linspan\{F_j(E_i\otimes I_{B_0})\}_{i,j}\subseteq\cL(A\otimes A_0\otimes B_0, B')$. We have
\begin{equation}
F'(S^\perp\otimes\tau){F'}^\dagger=F(E(S^\perp\otimes\rho)E^\dagger\otimes\sigma)F^\dagger \subseteq F(T^\perp\otimes\sigma)F^\dagger\perp T',
\end{equation}
where the inequality holds since $E(S^\perp\otimes\rho)E^\dagger\perp T$ by $S\leq_* T$, and the orthogonality relation is given by $T\leq_* T'$.

(iii) %
By Lemma~\ref{prop: reduce to EA preorder graphs}, $\overline{\cK_n}\leq_*\overline{\cK_m}$ is equivalent to $\overline{K_n}\leq_*\overline{K_m}$, which is equivalent to $m\leq n$ by Lemma~\ref{lemma: EA homo preorder strassen}.

{(iv) We only need to show that, if $S \leq_* T$, then for any $S'$ we have $S\oplus S' \leq_* T\oplus S'$ and $S \otimes S' \leq_* T\otimes S'$. Since if these hold, take $S \leq_* T$ and $S'\leq_* T'$, we have $S\oplus S'\leq_* T\oplus S'\leq_* T\oplus T'$ and $S\otimes S'\leq_* T\otimes S'\leq_* T\otimes T'$.\footnote{Recall that $S\oplus T$ (resp. $S\otimes T$) and $T\oplus S$ (resp. $T\otimes S$) are isomorphic up to unitary transformation, or more precisely, a permutation of the basis.}
Let $\rho\in\cD(A_0)$ be a positive definite matrix of size $d$ and $E=\linspan\{E_i\}_{i\in[m]}\subseteq\cL(A\otimes A_0,B)$ be given by $S\leq_* T$ such that $E(S^\perp \otimes \rho)E^\dagger\perp T$ and $\sum_{i=1}^m E_i^\dagger E_i=I_{A\otimes A_0}$. 

We first show $S\oplus S' \leq_* T\oplus S'$. Let 
\[
\hat{E}=\linspan\{\hat{E}_i:i\in[m+d],~\hat{E}_i=E_i\oplus 0~\text{if}~i\in[m],\hat{E}_{i}=0\oplus I_{A'}\otimes \bra{i-m}~\text{if}~i\in[m+d]\setminus[m]\}.
\]
where $\{\ket{1},\dots,\ket{d}\}$ denotes the computational basis of $A_0$. It is clear that $\hat{E}\subseteq\cL((A\otimes A_0\oplus A'\otimes A_0,B\oplus A')\cong \cL((A\oplus A')\otimes A_0,B\oplus A')$. Note that 
\[
\sum_{i=1}^{m+d}\hat{E}_i^\dagger\hat{E}_i=\sum_{i=1}^{m}\hat{E}_i^\dagger\hat{E}_i+\sum_{j=1}^d\hat{E}_{m+j}^\dagger\hat{E}_{m+j}=I_{A\otimes A_0}\oplus 0+0\oplus I_{A'}\otimes I_{A_0}=I_{(A\oplus A')\otimes A_0}.
\]
We only need to prove that $\hat{E}({(S\oplus S')}^\perp \otimes \rho)\hat{E}^\dagger\perp T\oplus S'$, which is equivalent to ${(S\oplus S')}^\perp \otimes \rho\perp \hat{E}^\dagger (T\oplus S')\hat{E}$. For $X\in T$ and $Y\in S'$, we know that
\[
\hat{E}_k^\dagger (X\oplus Y)\hat{E}_\ell=\left\{
\begin{array}{ll}
      E_k^\dagger X E_\ell\oplus 0 & k,\ell\in [m]\\
      0\oplus Y\otimes \ket{k-m}\!\bra{\ell-m} & k,\ell\in [m+d]\setminus[m]\\
      0& \textnormal{otherwise.} \\
\end{array}
\right.
\]
Note that for $Z=\begin{bmatrix}Z_{1,1}&Z_{1,2}\\Z_{2,1}&Z_{2,2}\end{bmatrix}\in (S\oplus S')^\perp$, we have that $\Tr(Z_{1,1}C)=0$ for any $C\in S$ and $\Tr(Z_{2,2}D)=0$ for any $D\in S'$. For every $Z=\begin{bmatrix}Z_{1,1}&Z_{1,2}\\Z_{2,1}&Z_{2,2}\end{bmatrix}\in (S\oplus S')^\perp$, $X\in T$ and $Y\in S'$, we have
\begin{equation}\label{eq: direct sum trace}
\Tr((Z\otimes \rho)(\hat{E}_k^\dagger (X\oplus Y)\hat{E}_\ell))=\left\{
\begin{array}{ll}
      \Tr((Z_{1,1}\otimes \rho)E_k^\dagger X E_\ell) & k,\ell\in [m]\\
      \bra{\ell-m}\rho\ket{k-m}\Tr(Z_{2,2} Y) & k,\ell\in [m+d]\setminus[m]\\
      0& \textnormal{otherwise.} \\
\end{array}
\right.
\end{equation}
It is clear that, by the choice of $E=\linspan\{E_i\}_{i\in[m]}$, $\Tr((Z_{1,1}\otimes \rho)E_k^\dagger X E_\ell)=0$ since $Z_{1,1}\in S^\perp$ and $E_k^\dagger X E_\ell\perp S^\perp\otimes\rho$ for any $X\in T$, $k,\ell\in[m]$. It is also clear that $\bra{\ell-m}\rho\ket{k-m}\Tr(Z_{2,2} Y)=0$ for any $k,\ell\in[m+d]\setminus[m]$ since $Y\in S'$ and $Z_{2,2}\perp S'$. These conclude that Equation~\eqref{eq: direct sum trace} is $0$ for any $Z\in (S\oplus S')^\perp$, $X\in T$, $Y\in S'$ and $k,\ell\in[m+d]$, and $\hat{E}({(S\oplus S')}^\perp \otimes \rho)\hat{E}^\dagger\perp T\oplus S'$ follows by linearity.

Now we show $S \otimes S' \leq_* T\otimes S'$. Let 
\[
\tilde{E}=\linspan\{\tilde{E}_i=E_i\otimes I_{A'}\}_{i\in [m]}\subseteq \cL(A\otimes A_0\otimes A',B\otimes A').
\]
It is clear that $\sum_{i=1}^m\tilde{E}_i^\dagger \tilde{E}_i=I_{A\otimes A_0}\otimes I_{A'}=I_{A\otimes A_0\otimes A'}$. We only need to prove that $\tilde{E}({(S\otimes S')}^\perp \otimes \rho)\tilde{E}^\dagger\perp T\otimes S'$, which is equivalent to ${(S\otimes S')}^\perp \otimes \rho\perp \tilde{E}^\dagger(T\otimes S')\tilde{E}$. 

\changed{We first establish the following claim: For any nc-graphs $S\subseteq \cL(A)$ and $S'\subseteq\cL(A')$ we have that $(S\otimes S')^\perp$ equals $\linspan\{S^\perp\otimes S',S\otimes (S')^\perp, S^\perp\otimes (S')^\perp\}$. To see this, let $T=\linspan\{S^\perp\otimes S',S\otimes (S')^\perp, S^\perp\otimes (S')^\perp\}$. It is easy to see that $T\subseteq (S\otimes S')^\perp$. To see that $T$ and $(S\otimes S')^\perp$ are the same subspace of $\cL(A\otimes A')$, we prove that they have the same dimension.
Let $\dim(S)=a_1$, $\dim(\cL(A))=n_1$, $\dim(S')=a_2$ and $\dim(\cL(A'))=n_2$. Then $\dim((S\otimes S')^\perp)=n_1n_2-a_1a_2$. Observe that $S^\perp\otimes S'$, $S\otimes (S')^\perp$ and $S^\perp\otimes (S')^\perp$ are pairwise orthogonal. We have $\dim(T)=\dim(S^\perp\otimes S')+\dim(S\otimes (S')^\perp)+\dim(S^\perp\otimes (S')^\perp)=(n_1-a_1)a_2+a_1(n_2-a_2)+(n_1-a_1)(n_2-a_2)=n_1n_2-a_1a_2=\dim((S\otimes S')^\perp)$. This concludes the proof of the claim.}

Note that $\tilde{E}^\dagger(T\otimes S')\tilde{E}=E^\dagger T E\otimes S'\subseteq (S^\perp \otimes \rho)^\perp\otimes S'$, where the last inclusion holds by the choice of $E$. We shall prove that $(S^\perp \otimes \rho)^\perp\otimes S'\perp {(S\otimes S')}^\perp \otimes \rho$, where 
\[
{(S\otimes S')}^\perp \otimes \rho=\linspan\{S^\perp\otimes S'\otimes \rho,S\otimes (S')^\perp\otimes \rho,S^\perp\otimes (S')^\perp\otimes \rho\}
\]
\changed{holds due to the above claim}. For any $X\in S^\perp\otimes S'\otimes \rho\cong (S^\perp\otimes\rho)\otimes S'$, $X\perp (S^\perp \otimes \rho)^\perp\otimes S'$. For any $Y\in S\otimes (S')^\perp\otimes \rho$ and $Z\in S^\perp\otimes (S')^\perp\otimes \rho$, it is clear that $Y,Z\perp (S^\perp \otimes \rho)^\perp\otimes S'$. $(S^\perp \otimes \rho)^\perp\otimes S'\perp {(S\otimes S')}^\perp \otimes \rho$ follows by linearity, and we conclude that ${(S\otimes S')}^\perp \otimes \rho\perp \tilde{E}^\dagger(T\otimes S')\tilde{E}$.
}

(v) We show that for any $S,T\neq\overline{\cK_0}$, there exists an $r\in \N$ such that $S\leq_*\overline{\cK_{r}}\otimes T$. We first claim that $S\leq \cI_n:=\linspan\{I_n\}\subseteq\cL(\C^n)$ for $n=\dim(A)$. {This can be done by simply taking $E=\linspan\{I_n\}$, since by definition of an nc-graph we have $I_n \in S$.} We then show that, for any $n\in\N$, $\cI_n\leq_*\overline{\cK_{n^2}}$. Let $E_{i,j}=\ket{\Phi_{i,j}}\!\bra{i,j}$ for all $i,j\in\{0,\dots,n-1\}$, where $\{\ket{i}\otimes\ket{j}:i,j\in\{0,\dots,n-1\}\}$ is the computational basis of $\C^n\otimes\C^n$ and
\begin{equation}
\ket{\Phi_{i,j}}:=\frac{1}{\sqrt{n}}\sum_{k=0}^{n-1}(X(i)Z(j)\ket{k})\otimes\ket{k},
\end{equation}
where $X(i)\ket{k}=\ket{i+k\mod{n}}$ and $Z(j)\ket{k}=\exp(i2\pi jk/n)\ket{k}$, is the $(i,j)$-th element of the Bell basis of $\C^n\otimes\C^n$ (cf.~\cite[page 114]{wilde_2017}). Take $\rho= I_n$ and $E=\{E_{i,j}:i,j\in\{0,\dots,n-1\}\}\subseteq \cL(\C^{n^2})$. Note that $\cI_n^\perp=\{X\in\cL(\C^n):\Tr(X)=0\}$, and $X\otimes I_n\perp \Proj{\Phi_{i,j}}$ for all $i,j\in\{0,\dots,n-1\}$, since
\[
\Tr((X^\dagger\otimes I_n)\Proj{\Phi_{i,j}})=\Tr(X^\dagger\Tr_2(\Proj{\Phi_{i,j}}))=\Tr(X^\dagger)=0.
\]
This implies that $(\cI_n^\perp\otimes I_n)\perp E^\dagger \overline{\cK_{n^2}} E$, which is equivalent to $E(\cI_n^\perp\otimes I_n)E^\dagger\perp \overline{\cK_{n^2}}$. Thus we conclude that $S\leq_*\overline{\cK_{N^2}}$ by transitivity. We derive that $S\leq_*\overline{\cK_{N^2}}\otimes\overline{\cK_1}\leq_*\overline{\cK_{N^2}}\otimes T$ if $T\neq\overline{\cK_0}$, which concludes the proof.
\end{proof}

Let $\bX(\cS,\leq_*)$
be the asymptotic spectrum of nc-graphs with respect to the entanglement-assisted cohomomorphism preorder, i.e.
\begin{equation}\label{eq: X(S,leq_*)}
\bX(\cS,\leq_*)=\{\phi\in\Hom(\cS,\R_{\geq 0}):\forall S,T\in \cS,~S\leq_* T~\Rightarrow~\phi(S)\leq\phi(T)\}.
\end{equation}

Together with Theorem~\ref{thm: asymptotic subrank}, we obtain the following dual characterization of the entanglement-assisted Shannon capacity of nc-graphs.
\begin{theorem}\label{thm: ea capacity of quantum channels}
Let $S$ be an nc-graph. Then
\[
\Theta_*(S)=\min_{\phi\in\bX(\cS,\leq_*)}\phi(S).
\]
\end{theorem}
Recall that there exists an injective semiring homomorphism $\iota: \cG \to \cS$ mapping the graph~$G$ to the nc-graph~$S_G$ (Lemma~\ref{injhom}) such that $G \leq_* H$ if and only if $S_G \leq_* S_H$ (Lemma~\ref{prop: reduce to EA preorder graphs}). 
By Theorem~\ref{restr} this implies that there exists a surjection from $\bX(\cS,\leq_*)$ to $\bX(\cG, \leq_*)$ via $\iota$.
\begin{theorem}\label{surje}
The map
\[
\bX(\cS,\leq_*) \rightarrow \bX(\cG, \leq_*) : \phi \mapsto \phi \circ \iota
\]
is surjective.
\end{theorem}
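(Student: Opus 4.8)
The plan is to recognize \cref{surje} as a direct instance of Strassen's restriction theorem (\cref{restr}), with the copy of $\cG$ inside $\cS$ playing the role of the subsemiring $T$. First I would record the structural setup: by \cref{injhom} the image $T \coloneqq \iota(\cG) \subseteq \cS$ is a subsemiring of $\cS$ --- it contains $\overline{\cK_0} = S_{\overline{K_0}}$ and $\overline{\cK_1} = S_{\overline{K_1}}$, and it is closed under $\oplus$ and $\otimes$ because $\iota$ is a semiring homomorphism --- and $\iota \colon \cG \to T$ is a semiring isomorphism. Since $\cS$ is a commutative semiring (\cref{thm: semiring of nc-graphs}) and $\leq_*$ is a Strassen preorder on $\cS$ (\cref{lemma: EA homo preorder quantum strassen}), the restriction $\leq_*|_T$ is a Strassen preorder on $T$ and \cref{restr} applies to the pair $(T \subseteq \cS, \leq_*)$.

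Next I would use \cref{prop: reduce to EA preorder graphs} to identify the two preorders: for graphs $G,H$ one has $G \leq_* H$ in $\cG$ if and only if $S_G \leq_* S_H$ in $\cS$, i.e.\ if and only if $\iota(G) \leq_*|_T \iota(H)$. Hence $\iota$ is an isomorphism of preordered semirings from $(\cG,\leq_*)$ onto $(T,\leq_*|_T)$, and precomposition with $\iota$ is a bijection $\bX(T,\leq_*|_T) \to \bX(\cG,\leq_*)$, $\phi \mapsto \phi\circ\iota$; in particular every $\chi \in \bX(\cG,\leq_*)$ is of the form $\phi \circ \iota$ for a unique $\phi \coloneqq \chi\circ\iota^{-1} \in \bX(T,\leq_*|_T)$.

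Finally, to prove surjectivity, take any $\chi \in \bX(\cG,\leq_*)$ and let $\phi = \chi\circ\iota^{-1} \in \bX(T,\leq_*|_T)$ as above. By \cref{restr} there is $\psi \in \bX(\cS,\leq_*)$ with $\psi|_T = \phi$. Then $\psi\circ\iota = (\psi|_T)\circ\iota = \phi\circ\iota = \chi$, so $\psi$ is a preimage of $\chi$ under the map $\phi \mapsto \phi\circ\iota$ of the theorem. This establishes surjectivity.

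There is essentially no hard step here: the only content beyond bookkeeping is that the preorder $\leq_*$ on $\cS$ restricted to the embedded copy of $\cG$ coincides with $\leq_*$ on $\cG$, which is exactly \cref{prop: reduce to EA preorder graphs}, together with the fact that $\leq_*$ on $\cS$ is a Strassen preorder, which is \cref{lemma: EA homo preorder quantum strassen}. The one caveat worth stating explicitly is that, since \cref{restr} is nonconstructive, the extension $\psi$ produced above is likewise nonconstructive.
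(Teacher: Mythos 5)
Your proposal is correct and follows essentially the same route as the paper: identify the image $\iota(\cG)\subseteq\cS$ as a subsemiring on which, by \cref{prop: reduce to EA preorder graphs}, the restricted preorder coincides with $\leq_*$ on $\cG$, and then extend any element of the asymptotic spectrum of this subsemiring to all of $\cS$ via \cref{restr}, using that $\leq_*$ is a Strassen preorder on $\cS$ (\cref{lemma: EA homo preorder quantum strassen}). Your write-up merely makes explicit the bookkeeping (the isomorphism of preordered semirings and the nonconstructive nature of the extension) that the paper leaves implicit.
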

Since $\vartheta \in \bX(\cG, \leq_*)$, we know by Theorem~\ref{surje} that there exists a function in $\bX(\cS,\leq_*)$ that restricts to $\vartheta$.
Indeed, Duan, Severini and Winter in \cite{duan2013} introduced the quantum Lov\'asz theta function~$\tilde{\vartheta}$, which has these properties. This is currently the only element in $\bX(\cS,\leq_*)$ that we know~of.
\begin{theorem}\label{quantum lovasz theta}
We have
\[
\tilde{\vartheta} \in \bX(\cS,\leq_*).
\]
Moreover, $\tilde{\vartheta}(S_G)=\vartheta(G)$ for any graph $G$.
\end{theorem}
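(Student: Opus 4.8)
The plan is to check the four conditions that make $\tilde{\vartheta}$ an element of $\bX(\cS,\leq_*)$ --- additivity under $\oplus$, multiplicativity under $\otimes$, the normalization $\tilde{\vartheta}(\overline{\cK_1})=1$, and $\leq_*$-monotonicity --- together with the identity $\tilde{\vartheta}(S_G)=\vartheta(G)$. All of these are established by Duan, Severini and Winter~\cite{duan2013} and by Stahlke~\cite{stahlke2016}, so the proof is essentially a matter of recalling the relevant statements and rephrasing them in the asymptotic-spectrum language; I indicate below how each piece is obtained.

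Recall that $\tilde{\vartheta}(S)$, for an nc-graph $S\subseteq\cL(A)$, is defined in~\cite{duan2013} as the optimum of a semidefinite program that maximizes over an ancilla space $A'$, a state $\rho\in\cD(A')$, and a Hermitian element of $S\otimes\cL(A')$ subject to a positivity constraint. The normalization $\tilde{\vartheta}(\overline{\cK_1})=1$ is immediate because $\overline{\cK_1}=\cL(\C)\cong\C$ and the program becomes one-dimensional. For the additivity $\tilde{\vartheta}(S\oplus T)=\tilde{\vartheta}(S)+\tilde{\vartheta}(T)$ and the multiplicativity $\tilde{\vartheta}(S\otimes T)=\tilde{\vartheta}(S)\cdot\tilde{\vartheta}(T)$, one inequality in each case comes from combining optimal feasible solutions of the two programs (block-diagonally, respectively tensorwise); the reverse inequalities are among the structural properties of $\tilde{\vartheta}$ proved in~\cite{duan2013}, being the non-commutative analogues of Lov\'asz's identities $\vartheta(G\sqcup H)=\vartheta(G)+\vartheta(H)$ and $\vartheta(G\boxtimes H)=\vartheta(G)\vartheta(H)$; the multiplicativity, exactly as in the classical case, rests on the dual formulation of the defining SDP and a tensoring of dual-feasible solutions. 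This gives $\tilde{\vartheta}\in\Hom(\cS,\R_{\geq 0})$. For $\leq_*$-monotonicity --- that $S\leq_* T$ implies $\tilde{\vartheta}(S)\leq\tilde{\vartheta}(T)$ --- I would invoke Stahlke~\cite{stahlke2016} (building on~\cite{duan2013}), who shows that $\tilde{\vartheta}$ does not increase under the channel-with-ancilla operations that define the entanglement-assisted cohomomorphism preorder on nc-graphs, consistently with $\tilde{\vartheta}$ being an upper bound on $\alpha_*$ and hence on $\Theta_*$. Together these facts give $\tilde{\vartheta}\in\bX(\cS,\leq_*)$.

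For the identity $\tilde{\vartheta}(S_G)=\vartheta(G)$, I would use the computation in~\cite{duan2013} showing that $\tilde{\vartheta}$ restricted to nc-graphs arising from classical channels reduces to the ordinary Lov\'asz number: a feasible point of the classical SDP for $\vartheta(G)$ embeds into the quantum SDP for $\tilde{\vartheta}(S_G)$, and conversely any feasible point of the quantum SDP can be symmetrized --- by averaging over the diagonal unitaries that stabilize $S_G$ --- into a classically feasible point with the same objective value, forcing equality. (As a sanity check, this gives $\tilde{\vartheta}(\overline{\cK_n})=\vartheta(\overline{K_n})=n$, in agreement with $\Theta_*(\overline{\cK_n})=n$.) I note that the existence of \emph{some} element of $\bX(\cS,\leq_*)$ restricting to $\vartheta$ on $\cG$ is already provided, non-constructively, by Theorem~\ref{surje}; the substance of the present theorem is that the explicit function $\tilde{\vartheta}$ of~\cite{duan2013} is such an element. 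If one wanted a self-contained argument rather than a citation, the main obstacle would be the multiplicativity inequality $\tilde{\vartheta}(S\otimes T)\leq\tilde{\vartheta}(S)\tilde{\vartheta}(T)$ --- the quantum counterpart of the most delicate step in Lov\'asz's original analysis --- whereas everything else is a routine unwinding of the definitions.
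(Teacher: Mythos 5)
Your proposal is correct and takes essentially the same route as the paper, which states this theorem purely as a citation: the semiring-homomorphism properties and the identity $\tilde{\vartheta}(S_G)=\vartheta(G)$ are those established by Duan--Severini--Winter \cite{duan2013}, and $\leq_*$-monotonicity is Stahlke's result \cite{stahlke2016}, exactly as you assemble them. One minor slip in your recollection of the definition --- the defining SDP maximizes $\norm{I+T}$ over Hermitian $T\in S^\perp\otimes\cL(A')$ with $I+T\succeq 0$ (an element of the \emph{orthogonal complement} of $S$ tensored with the ancilla, not of $S\otimes\cL(A')$) --- but this does not affect your argument, since every substantive step is in any case deferred to the cited references.
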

{\begin{proof}
	For any two nc-graphs $S\subseteq \cL(A)$ and $S'\subseteq\cL(A')$, $\tilde{\vartheta}(S\otimes S')=\tilde{\vartheta}(S)\tilde{\vartheta}(S')$~\cite[Corollary~10]{duan2013} and $\tilde{\vartheta}(S\oplus S')=\tilde{\vartheta}(S)+\tilde{\vartheta}(S')$~\cite[Proposition 17 in the arXiv version]{duan2013}. In~\cite{duan2013}, they also proved $\tilde{\vartheta}(S_G)=\vartheta(G)$ for any graph $G$, thus $\tilde{\vartheta}(\overline{\cK_n})=\vartheta(\overline{K_n})=n$ for any $n\in\N$. Lastly, $\tilde{\vartheta}(S)\leq\tilde{\vartheta}(S')$ if $S\leq_* S'$ has been shown in~\cite[Theorem 19]{stahlke2016}. This concludes the proof.
	\end{proof}
}

{Although we do not know any other explicit element in $\bX(\cS,\leq_*)$, we know that there must be at least one more element in $\bX(\cS,\leq_*)$ besides $\tilde{\vartheta}$. This is due to the separation result in~\cite{8260566}: We know that there exist (a family of) noncommutative graphs $S$ satisfying that $\Theta_*(S)<\vartheta(S)$.%
}

We summarize our knowledge of the asymptotic spectra of graphs and noncommutative graphs in~Figure~\ref{fig: relations between asymptotic spectrum}. %

\begin{figure}[ht]
\center
\begin{tikzpicture}[scale=1]
\draw[fill=gray!20] (9,5) circle (4 and 2.8);
\node at (9,8.3) {\small $\bX(\cG,\leq)$}; 

\draw[fill=green!20] (7.3,5) circle (2.2 and 1.6);
\node at (7.2,6.25) {\small $\bX(\cG,\leq_q)$}; 

\draw[fill=orange!20] (6.5,5) circle (1.4 and 1);
\node at (6.5,5.6) {\small $\bX(\cG,\leq_*)$}; 

\draw[fill=yellow!20] (0.5,5) circle (3 and 2);
\node at (0.5,7.5) {\small $\bX(\cS,\leq_*)$};

\node at (6,5) {$\bullet$};
\node at (6,4.7) {\footnotesize $\vartheta$};
\node at (2.8,5) {$\bullet$};
\node at (2.8,4.7) {\footnotesize $\tilde{\vartheta}$};
\node[red!90] at (8.5,5.9) { $\bullet$};
\node[black!75] at (9.1,6.25) { \small$\bullet$};
\node[black!60] at (9.7,6.5) {\small $\bullet$};
\node[black!45] at (10.3,6.7) {\footnotesize $\bullet$};
\node[black!40] at (10.9,6.8) {\scriptsize$\bullet$};
\node[black!30] at (11.5,6.85) {\tiny$\bullet$};

\node at (8.5,5.5) {\footnotesize $\cH_f^\R$};
\node at (9.9,6.1) {\footnotesize $\cH_f^{\F}$};
\node at (10.3,7.1) {\footnotesize $\cH_f^{\F_p}$};

\node at (12,5) {$\bullet$};
\node at (12,4.7) {\footnotesize $\overline{\chi}_f$};

\node[red] at (7.9,3.8) {$\bullet$};
\node at (8.25,3.8) {\footnotesize $\overline{\xi}_f$};

\draw[dashed,blue] (1,6.98) -- (6.6,6) node [midway, above, sloped] {\small $\kappa$};
\draw[dashed,blue] (1,3.02) -- (6.6,4);
\draw[dashed,blue] (2.8,5) -- (6,5); 

\end{tikzpicture}

\caption{Relations among asymptotic spectra of graphs and non-commutative graphs with different preorder. The fractional Haemers bound provide an infinite family of elements in $\bX(\cG,\leq)$. We don't know whether the red elements belong to smaller asymptotic spectra or not. It is also open whether $\bX(\cG,\leq_*)=\bX(\cG,\leq_q)$.}
\label{fig: relations between asymptotic spectrum}
\end{figure}
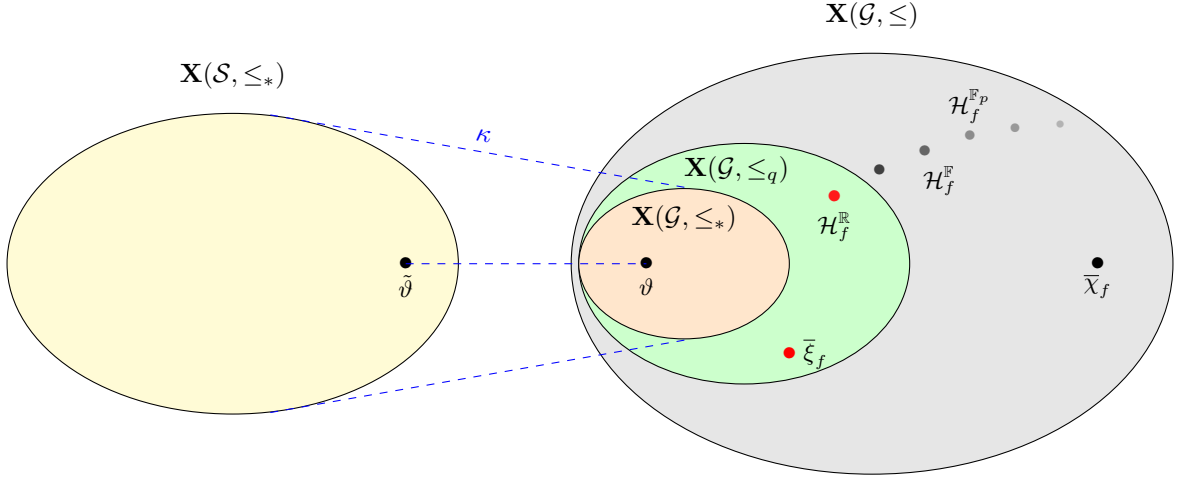

\paragraph{Acknowledgment}
The authors thank Jop Bri\"{e}t, Monique Laurent, Ronald de Wolf, Michael Walter, Christian Schaffner, Xin Wang, and Runyao Duan for useful discussions. YL thanks Runyao Duan for hosting his visit to the Institute of Quantum Computing, Baidu Inc. YL is supported by ERC Consolidator Grant 615307-QPROGRESS. JZ was a PhD student at Centrum Wiskunde \& Informatica while writing part of this paper. JZ was supported by NWO (617.023.116) and the QuSoft Research Center for Quantum Software.
This material is based upon work supported by the National Science Foundation under Grant No.~DMS-1638352.

\appendix

\section{Proof of Lemma~\ref{prop: nc-graph independent numbers}}\label{sec: proof of 2.8}

\begin{proof}[Proof of Lemma~\ref{prop: nc-graph independent numbers}]
Let $S\leq \cL(A)$ be an nc-graph.

(i) We show that $\overline{\cK_n}\leq S$ if and only if there is a size-$n$ independent set of $S$. 

Suppose $\Proj{\psi_1},\dots,\Proj{\psi_n}$ is a size-$n$ independent set of $S$. 
Let $E_i=\ket{\psi_i}\!\bra{i}$ for $i\in[k]$. Then $E_i \ket{\ell}\!\bra{\ell'} E_j^\dagger=\delta_{i,\ell}\delta_{j,\ell'}\ket{\psi_\ell}\!\bra{\psi_{\ell'}}$ for all $\ell\neq \ell'\in[n]$. 
We compute the inner product of $E_i \ket{\ell}\!\bra{\ell'} E_j^\dagger$ and $X$. We have $\Tr(E_j\ket{\ell'}\!\bra{\ell}E^\dagger_i X)=\delta_{i,\ell}\delta_{j,\ell'}\Tr(\ket{\psi_{\ell'}}\!\bra{\psi_\ell} X)$, for all $i,j$, $\ell\neq \ell'\in[n]$ and $X\in S$. For all $i\neq \ell$ or $j\neq \ell'$, the previous equation equals $0$ since $\delta_{i,\ell}=0$ or $\delta_{j,\ell'}=0$, and otherwise $\Tr(\ket{\psi_{\ell'}}\!\bra{\psi_\ell} X) = 0$ for all $X\in S$ as $\{\Proj{\psi_1},\dots,\Proj{\psi_k}\}$ forms an independent set of $S$. This concludes that $E_i \ket{\ell}\!\bra{\ell'} E_j^\dagger\perp X$ for all $i,j$, $\ell\neq \ell'\in[n]$ and $X\in S$, which implies $\overline{\cK_n}\leq S$.

On the other hand, suppose $\overline{\cK_n}\leq S$. Then there exist $\cE: \cL(\C^n)\to \cL(A)$ with Choi--Kraus operators $\{E_1,\dots,E_\ell\}\subseteq \cL(\C^n,A)$, such that $E \overline{\cK_n}^\perp E^\dagger\perp S$. The condition $E \overline{\cK_n}^\perp E^\dagger\perp S$ is equivalent to $E^\dagger S E \subseteq \overline{\cK_n}= \linspan\{\ket{i}\!\bra{i}:i\in[n]\}$. Since $I_A \in S$, we then have $E_j^\dagger E_j\in\linspan\{\ket{i}\!\bra{i}:i\in[n]\}$ for any $j\in[\ell]$. Thus we know $E_j^\dagger E_j$ is diagonal. 
By the singular value decomposition, 
 we can write $E_j=\sum_{x_j}\sqrt{\lambda_{x_j}^j}\ket{\psi_{x_j}^j}\!\bra{v_{x_j}^j}$, where $\lambda_{x_j}^j>0$ since $E_j^\dagger E_j$ is positive semidefinite, $\ket{v^j_{x_j}}\in\{\ket{1},\dots,\ket{k}\}$ for all possible $x_j$ and $\langle \psi_{x_j}^j|\psi_{y_j}^j\rangle=0$ for all possible $x_j\neq y_j$. Then for $X\in S$,
\[
E_j^\dagger X E_\ell=\sum_{x_j,y_\ell}\sqrt{\lambda^j_{x_j}\lambda^\ell_{y_\ell}}\bra{\psi_{x_j}^j}X\ket{\psi_{y_\ell}^\ell}\ket{v_{x_j}^j}\!\bra{v_{y_\ell}^\ell}\in \linspan\{\ket{i}\!\bra{i}:i\in[n]\},
\]
which implies $\bra{\psi_{x_j}^j}X\ket{\psi_{y_j}^\ell}=0$ if $\ket{v_{x_j}^j}\neq \ket{v_{y_\ell}^\ell}$ for all $X\in S$. Note that $\sum_{j}E_j^\dagger E_j=I_{\C^n}$. Thus $\linspan\{\ket{v^j_{x_j}}\}_{j,x_j}=\C^n$. This then guarantees that we can find a size-$n$ independent set of $S$.

(ii) We show that $\overline{\cK_n}\leq_* S$ if and only if there is a size-$n$ entanglement-assisted independent set of $S$. 

Suppose $\ket{\Omega}\in A_0\otimes B_0$ and $\cE_1,\dots,\cE_k$ form an entanglement-assisted independent set of $S$.  Let $\rho=\Tr_{B_0}(\Proj{\Omega})$ and $\cE:\cL(\C^n\otimes A_0)\to\cL(A)$ be the quantum channel which maps $\ket{i}\!\bra{i}\otimes \sigma$ to $\cE_i(\sigma)$ for all $i\in[k]$ and $\sigma\in\cD(A_0)$. The Choi--Kraus operators of $\cE$ can be written as $\{\bra{i}\otimes E_{i,j}\}_{i\in[n],j}$, where $\{E_{i,j}\}_{j}$ are the Choi--Kraus operators of $\cE_i$. We obtain that $E(\overline{\cK_n}^\perp\otimes \rho) E^\dagger=\linspan\{E_{i,j}\rho E_{k,\ell}^\dagger:i\neq k\in[n],~j,\ell\} \perp S$. 
We conclude $\overline{\cK_n}\leq_* S$.

Suppose $\overline{\cK_n}\leq_* S$. Then there exist a positive definite $\rho\in\cD(A_0)$ and a quantum channel $\cE:\cL(\C^n\otimes A_0)\to\cL(A)$ with Choi--Kraus operators $\{E_i\}_i$ such that $E(\overline{\cK_n}^\perp\otimes\rho)E^\dagger \perp  S$. Let $\cE_i(\rho)=\cE(\Proj{i}\otimes \rho)$ for $i\in[n]$ and let $\ket{\Omega}\in A_0\otimes B_0$ be a purification of $\rho$. The Choi--Kraus operator of $\cE_i$ can be written as $\{E_{i,j}=E_j(\ket{i}\otimes I_{A_0})\}_j\subseteq\cL(A_0, A)$. Then for $i\neq i'\in[n]$ and $j,j'$, $E_{i,j}^\dagger\rho E_{i',j'}=E_j(\ket{i}\!\bra{i'}\otimes\rho)E_{j'}^\dagger\in E(\overline{\cK_n}^\perp\otimes\rho)E^\dagger$, thus $\linspan\{E_{i,j}^\dagger\Tr_{B_0}(\Proj{\Omega}) E_{i',j'}:i\neq i'\in[n],~j,j'\}\perp S$. %
We conclude $\{\ket{\Omega},\cE_1,\dots,\cE_n\}$ is an entanglement-assisted independent set of $S$.
\end{proof}

\section{The unassisted Shannon capacity of nc-graphs}\label{sec: unassisted}
In this section, we discuss the unassisted Shannon capacity of nc-graphs. 
The unassisted Shannon capacity may not admit a dual characterization by its asymptotic spectrum. We first note that the cohomomorphism preorder on nc-graphs becomes the cohomomorphism preorder on graphs when restricting from nc-graphs to graphs.
\begin{lemma}[{\cite[Theorem~8]{stahlke2016}}]\label{prop: reduce to preorder graphs}
For any graphs $G, H\in \cG$ \changed{we have that}
$G\leq H$ if and only if $S_G\leq S_H$.
\end{lemma}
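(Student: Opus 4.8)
The plan is to prove both implications by directly translating between a graph homomorphism $\overline G\to\overline H$ and Choi--Kraus data witnessing $S_G\le S_H$. The only tool needed is the elementary identity $S_G^\perp=\linspan\{\ket{a}\!\bra{b}:a\neq b,\ \{a,b\}\notin E(G)\}$, which holds because the operators $\ket{a}\!\bra{b}$ ($a,b\in V(G)$) form an orthonormal basis of $\cL(\C^{|V(G)|})$ under the Hilbert--Schmidt inner product and $S_G$ is the span of those with $a=b$ or $\{a,b\}\in E(G)$; this is the description already used in the proof of Lemma~\ref{prop: reduce to EA preorder graphs}.

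For the forward implication, suppose $f\colon V(G)\to V(H)$ is a homomorphism $\overline G\to\overline H$. I would take as Choi--Kraus operators the family $E_g:=\ket{f(g)}\!\bra{g}\in\cL(\C^{|V(G)|},\C^{|V(H)|})$ indexed by the vertices $g\in V(G)$; using one operator per vertex, rather than the single operator $\sum_g\ket{f(g)}\!\bra{g}$, is the one subtlety, and it is forced because $f$ need not be injective. Then $\sum_g E_g^\dagger E_g=\sum_g\ket{g}\!\bra{g}=I$, and for a spanning element $\ket{a}\!\bra{b}\in S_G^\perp$ (so $a\neq b$, $\{a,b\}\notin E(G)$) the product $E_g\ket{a}\!\bra{b}E_{g'}^\dagger$ equals $\ket{f(a)}\!\bra{f(b)}$ if $a=g$ and $b=g'$, and $0$ otherwise; in the first case $\{a,b\}\in E(\overline G)$ forces $\{f(a),f(b)\}\in E(\overline H)$, so $f(a)\neq f(b)$ and $\{f(a),f(b)\}\notin E(H)$, i.e.\ $\ket{f(a)}\!\bra{f(b)}\in S_H^\perp$. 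Hence $E S_G^\perp E^\dagger\perp S_H$, i.e.\ $S_G\le S_H$.

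For the converse, suppose $S_G\le S_H$ via $E=\linspan\{E_i\}_i$ with $\sum_i E_i^\dagger E_i=I$ and $E S_G^\perp E^\dagger\perp S_H$. I would first dualize the orthogonality condition: since $\langle E_iXE_j^\dagger,Y\rangle=\langle X,E_i^\dagger Y E_j\rangle$ in the Hilbert--Schmidt inner product, it is equivalent to $E_i^\dagger S_H E_j\subseteq(S_G^\perp)^\perp=S_G$ for all $i,j$. Feeding in the diagonal elements $\ket{h}\!\bra{h}\in S_H$ and writing $\ket{\chi_{h,i}}:=E_i^\dagger\ket{h}$ gives $\ket{\chi_{h,i}}\!\bra{\chi_{h,j}}\in S_G$ for all $h,i,j$, and summing the cases $i=j$ over $h,i$ yields $\sum_{h,i}\ket{\chi_{h,i}}\!\bra{\chi_{h,i}}=\sum_i E_i^\dagger E_i=I$. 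Consequently, for each $g\in V(G)$ some $\langle g|\chi_{h,i}\rangle\neq 0$; fix such a pair and set $f(g):=h$. To verify that $f$ is a homomorphism $\overline G\to\overline H$, take $g\neq g'$ with $\{g,g'\}\notin E(G)$ and write $h=f(g)$, $h'=f(g')$ with witnessing indices $i,i'$. Every matrix in $S_G$ has vanishing $(g,g')$-entry; applied to $\ket{\chi_{h,i}}\!\bra{\chi_{h,i'}}\in S_G$ (the case $h=h'$) this forces $h\neq h'$, and applied to $\ket{\chi_{h,i}}\!\bra{\chi_{h',i'}}\in S_G$, which holds whenever $\ket{h}\!\bra{h'}\in S_H$, i.e.\ whenever $\{h,h'\}\in E(H)$, it forces $\{h,h'\}\notin E(H)$ --- both times because $\langle g|\chi_{h,i}\rangle$ and $\langle g'|\chi_{h',i'}\rangle$ are nonzero by construction. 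Thus $\{f(g),f(g')\}\in E(\overline H)$, i.e.\ $G\le H$.

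The main obstacle is purely bookkeeping: in the forward implication one must use one Choi--Kraus operator per vertex (a single global one works only for injective $f$), and in the converse the crux is recognizing that the orthogonality condition dualizes to $E_i^\dagger S_H E_j\subseteq S_G$ and that the vectors $\ket{\chi_{h,i}}=E_i^\dagger\ket{h}$ simultaneously resolve the identity (giving the vertex map) and, via the rank-one matrices they form inside $S_G$, yield the homomorphism property by inspecting a single matrix entry. Both implications run parallel to the proof of Lemma~\ref{prop: reduce to EA preorder graphs} with the auxiliary state and the spectral decompositions removed.
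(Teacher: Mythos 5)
Your proof is correct. Note that for this particular lemma the paper gives no argument of its own --- it is quoted from Stahlke (Theorem~8 of \cite{stahlke2016}) --- so the natural comparison is with the paper's proof of the entanglement-assisted analogue (Lemma~\ref{prop: reduce to EA preorder graphs}), and your argument is exactly the unassisted simplification of that scheme, with the auxiliary state $\rho$ and the spectral decompositions stripped out. Both directions check out: in the forward direction the per-vertex Kraus operators $E_g=\ket{f(g)}\!\bra{g}$ satisfy $\sum_g E_g^\dagger E_g=I$ (and you are right that a single operator $\sum_g\ket{f(g)}\!\bra{g}$ would violate this normalization for non-injective $f$), and $E_g\ket{a}\!\bra{b}E_{g'}^\dagger$ is either $0$ or $\ket{f(a)}\!\bra{f(b)}\in S_H^\perp$. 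In the converse, the dualization $E S_G^\perp E^\dagger\perp S_H\iff E_i^\dagger S_H E_j\subseteq S_G$ is a clean one-line Hilbert--Schmidt computation, the resolution $\sum_{h,i}\ket{\chi_{h,i}}\!\bra{\chi_{h,i}}=I$ guarantees the vertex map $f$ is well defined, and the two applications of the vanishing $(g,g')$-entry of matrices in $S_G$ (to $\ket{\chi_{h,i}}\!\bra{\chi_{h,i'}}$ to rule out $f(g)=f(g')$, and to $\ket{\chi_{h,i}}\!\bra{\chi_{h',i'}}$ to rule out $\{f(g),f(g')\}\in E(H)$) correctly deliver $\overline G\to\overline H$. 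This is essentially Stahlke's argument; compared with the paper's proof of the entanglement-assisted version, your route buys a fully elementary, self-contained verification with no POVM/purification machinery, at the cost of not generalizing to $\leq_*$.
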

The cohomomorphism preorder on nc-graphs has the following properties.
\begin{lemma}\label{lemma: homo preorder quantum strassen}
For any nc-graphs $S\subseteq\cL(A)$, $S'\subseteq\cL(A')$, $T\subseteq\cL(B)$ and $T'\subseteq\cL(B')$ and $n, m \in \N$, we have
\begin{enumerate}[label=\upshape(\roman*)]
\item $S\leq S$
\item if $S\leq T$ and $T\leq T'$, then $S\leq T'$
\item $\overline{\cK_m} \leq \overline{\cK_n}$ if and only if $m \leq n$
\item if $S \leq T$ and $S'\leq T'$, then $S\oplus S' \leq T\oplus T'$ and $S \otimes S' \leq T\otimes T'$
\end{enumerate}
\end{lemma}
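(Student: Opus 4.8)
The plan is to verify (i)--(iv) by the same Choi--Kraus bookkeeping used in the proof of Lemma~\ref{lemma: EA homo preorder quantum strassen}, now with a trivial entanglement register.

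For (i), take $E = \linspan\{I_A\}$: then $\sum_i E_i^\dagger E_i = I_A$ and $E S^\perp E^\dagger = S^\perp \perp S$, so $S \leq S$. For (ii), let $E = \linspan\{E_i\}_i \subseteq \cL(A,B)$ witness $S \leq T$ and $F = \linspan\{F_j\}_j \subseteq \cL(B,B')$ witness $T \leq T'$, and set $F' = \linspan\{F_j E_i\}_{i,j} \subseteq \cL(A,B')$; then $\sum_{i,j}(F_j E_i)^\dagger(F_j E_i) = \sum_i E_i^\dagger\bigl(\sum_j F_j^\dagger F_j\bigr)E_i = \sum_i E_i^\dagger E_i = I_A$, and since $S \leq T$ gives $E S^\perp E^\dagger \subseteq T^\perp$, we obtain $F' S^\perp F'^\dagger = F\bigl(E S^\perp E^\dagger\bigr)F^\dagger \subseteq F T^\perp F^\dagger \perp T'$, i.e.\ $S \leq T'$. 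For (iii), since $\overline{\cK_m} = S_{\overline{K_m}}$, Lemma~\ref{prop: reduce to preorder graphs} turns $\overline{\cK_m} \leq \overline{\cK_n}$ into the graph relation $\overline{K_m} \leq \overline{K_n}$, which by definition means $K_m \to K_n$, and this holds precisely when $m \leq n$.

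For (iv), the direct-sum case: with witnesses $E = \linspan\{E_i\}_i$ for $S \leq T$ and $F = \linspan\{F_j\}_j$ for $S' \leq T'$, set $E' = \linspan\{E_i \oplus 0,\ 0 \oplus F_j\}_{i,j} \subseteq \cL(A \oplus A', B \oplus B')$, for which the normalization is immediate from $\sum_i E_i^\dagger E_i = I_A$ and $\sum_j F_j^\dagger F_j = I_{A'}$. Writing an element of $(S \oplus S')^\perp$ in $2 \times 2$ block form, its diagonal blocks lie in $S^\perp$ and $S'^\perp$ while its off-diagonal blocks are unconstrained; $E'$ sends the diagonal blocks into $E S^\perp E^\dagger$ and $F S'^\perp F^\dagger$ (hence orthogonal to $T$ and $T'$) and sends everything off-diagonal to off-diagonal operators, which are automatically orthogonal to the block-diagonal space $T \oplus T'$. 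For the tensor-product case, set $E' = \linspan\{E_i \otimes F_j\}_{i,j} \subseteq \cL(A \otimes A', B \otimes B')$; then $\sum_{i,j}(E_i \otimes F_j)^\dagger(E_i \otimes F_j) = \bigl(\sum_i E_i^\dagger E_i\bigr) \otimes \bigl(\sum_j F_j^\dagger F_j\bigr) = I_{A \otimes A'}$, and using the identity $(S \otimes S')^\perp = S^\perp \otimes \cL(A') + \cL(A) \otimes S'^\perp$ --- which follows from the obvious inclusion $\supseteq$ plus a dimension count --- one reduces $E'(S \otimes S')^\perp E'^\dagger \perp T \otimes T'$ to $E S^\perp E^\dagger \perp T$ and $F S'^\perp F^\dagger \perp T'$: for a simple tensor $Y \otimes Z$ with $Y \in S^\perp$ one has $(E_i \otimes F_j)(Y \otimes Z)(E_k \otimes F_l)^\dagger = (E_i Y E_k^\dagger) \otimes (F_j Z F_l^\dagger)$, whose first factor lies in $E S^\perp E^\dagger$ and is therefore orthogonal to every simple tensor of $T \otimes T'$; the case $Z \in S'^\perp$ is symmetric.

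The step needing the most care is (iv): identifying the orthogonal complements $(S \oplus S')^\perp$ and $(S \otimes S')^\perp$ inside $\cL(A \oplus A')$ and $\cL(A \otimes A')$ correctly, checking that the off-diagonal contributions in the direct-sum case really pair to zero against $T \oplus T'$, and verifying the dimension count pinning down $(S \otimes S')^\perp$. Everything else is routine manipulation of Choi--Kraus operators.
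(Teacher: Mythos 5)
Your proposal is correct and follows essentially the same route as the paper: $E=\linspan\{I_A\}$ for reflexivity, composition of Choi--Kraus operators for transitivity, reduction to the graph case via Lemma~\ref{prop: reduce to preorder graphs} for (iii), and direct sums respectively tensor products of the witnessing operators for (iv). The only (harmless) difference is in the tensor-product part of (iv), where you take the operators $E_i\otimes F_j$ directly and justify the step by explicitly identifying $(S\otimes S')^\perp$, while the paper writes the two factors $E_i\otimes I_{A'}$ and $I_A\otimes F_j$ separately and leaves the verification to the reader; your version fills in exactly that routine check.
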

\begin{proof}
(i) We see that $S\leq S$ by taking $E=\linspan\{I_A\}$ in Definition~\ref{def: homomorphic preorder of quantum channels}.

(ii) Let $E=\linspan\{E_i\}_i\subseteq\cL(A,B)$ be given by $S\leq T$, and $F=\linspan\{F_j\}_j\subseteq\cL(B,B')$ be given by $T\leq T'$. To see $S\leq T'$, take $F'=\linspan\{F_jE_i\}_{i,j}\subseteq\cL(A, B')$. We have
\begin{equation}
F'S^\perp{F'}^\dagger=F(E S^\perp E^\dagger) F^\dagger \subseteq FT^\perp F^\dagger\perp T',
\end{equation}
where the inequality holds since $ES^\perp E^\dagger\perp T$ by $S\leq T$, and the last orthogonality relation is given by $T\leq T'$.

(iii) %
By Lemma~\ref{prop: reduce to preorder graphs}, $\overline{\cK_n}\leq \overline{\cK_m}$ is equivalent to $\overline{K_n}\leq \overline{K_m}$, which is equivalent to $m\leq n$.

(iv) Let $E=\linspan\{E_i\}_i\subseteq\cL(A,B)$ be given by $S\leq T$, and $F=\linspan\{F_j\}_j\subseteq\cL(A',B')$ be given by $S'\leq T'$. Let $E'=\linspan\{E_i\oplus 0\}_i \cup \{0\oplus F_j\}_j \subseteq \cL(A\oplus A',B\oplus B')$, where $(E_i\oplus F_j)(\ket{\psi}_A\oplus\ket{\psi'}_{A'})=E_i\ket{\psi}_A\oplus F_j\ket{\psi'}_{A'})$ for all $i,j$ and $\ket{\psi}_{A}\in A$ and $\ket{\psi'}_{A'}\in A'$. One readily verifies that $E'(S\oplus S')^\perp{E'}^\dagger\perp T\oplus T'$. To see $S \otimes S' \leq_* T\otimes T'$, Let $E'=\linspan\{E_i\otimes I_{A'},I_{A}\otimes F_j\}_{i,j}\subseteq\cL(A\otimes A',B\otimes B')$. One readily verifies that $E'(S\otimes S')^\perp{E'}^\dagger\perp T\otimes T'$.
\end{proof}

Recall the following property of the entanglement-assisted cohomomorphism preorder $\leq_*$. If $T\neq \overline{\cK_0}$, then there is an $r \in \N$ with $S \leq_* \overline{\cK_r} \otimes T$. The next example shows that the cohomomorphism preorder $\leq$ does not have this property, thus cannot be a Strassen preorder.
\begin{example}
Let $S=\cI_2$ and $T=\overline{\cK_1}=\C$. For any $r\in\N$ \changed{it holds that} $S\not\leq \overline{\cK_r}\otimes T$.	
\end{example}
\begin{proof}
Assume $\cI_2\leq \overline{\cK_{r}}$. Let $E=\linspan\{E_i\}_i\leq \cL(\C^2,\C^r)$ satisfy $E \cI_2^\perp E^\dagger \perp \overline{\cK_r}$. 
Note that $ES^\perp E^\dagger\perp \overline{\cK_{r}}$ implies $E^\dagger\overline{\cK_{r}}E\subseteq S$, since $\Tr(E_iX^\dagger E_j^\dagger Y)= \Tr(E_i^\dagger Y^\dagger E_jX)$ implies $E_j^\dagger YE_i\perp X$ for all $E_i,E_j\in E$, $X\in S^\perp$ and $Y\in \overline{\cK_r}$. We obtain that $E_i^\dagger\Proj{j}E_i\in \cI_2$ for all $i$ and $j\in[r]$. This is impossible since $E\neq 0$ and since the nonzero elements in $\cI_2$ have rank 2.
\end{proof}
The reason why $\cI_2\not\leq \overline{\cK_r}$ for every $r\in\N$ can be understood as: no classical channels can transmit even a single qubit. In the entanglement-assisted setting, this can be overcome by invoking the teleportation protocol, as mentioned in the proof of Lemma~\ref{lemma: EA homo preorder quantum strassen} (v).

\raggedright
\bibliographystyle{alphaurl}
\bibliography{all}

\newcommand{\etalchar}[1]{$^{#1}$}
\begin{thebibliography}{CLMW11}

\bibitem[ADR{\etalchar{+}}17]{acin2017new}
Antonio Ac{\'\i}n, Runyao Duan, David~E. Roberson, Ana~Bel{\'e}n Sainz, and
  Andreas Winter.
\newblock {A New Property of the Lov{\'a}sz Number and Duality Relations
  between Graph Parameters}.
\newblock {\em Discrete Applied Mathematics}, 216:489--501, 2017.

\bibitem[BBG13]{Briet19227}
Jop Bri{\"e}t, Harry Buhrman, and Dion Gijswijt.
\newblock {Violating the Shannon Capacity of Metric Graphs with Entanglement}.
\newblock {\em Proc. Natl. Acad. Sci. USA}, 110(48):19227--19232, 2013.
\newblock \href {http://dx.doi.org/10.1073/pnas.1203857110}
  {\path{doi:10.1073/pnas.1203857110}}.

\bibitem[BBL{\etalchar{+}}15]{briet2015entanglement}
Jop Bri{\"e}t, Harry Buhrman, Monique Laurent, Teresa Piovesan, and Giannicola
  Scarpa.
\newblock {Entanglement-assisted Zero-error Source-channel Coding}.
\newblock {\em IEEE Trans. Inform. Theory}, 61(2):1124--1138, 2015.
\newblock \href {http://arxiv.org/abs/1308.4283} {\path{arXiv:1308.4283}},
  \href {http://dx.doi.org/10.1109/TIT.2014.2385080}
  {\path{doi:10.1109/TIT.2014.2385080}}.

\bibitem[BC19]{bukh2018fractional}
B.~{Bukh} and C.~{Cox}.
\newblock On a fractional version of haemers’ bound.
\newblock {\em IEEE Trans. Inform. Theory}, 65(6):3340--3348, 2019.
\newblock \href {http://arxiv.org/abs/1802.00476} {\path{arXiv:1802.00476}}.

\bibitem[BCS97]{burgisser1997algebraic}
Peter B{\"u}rgisser, Michael Clausen, and M.~Amin Shokrollahi.
\newblock {\em Algebraic complexity theory}, volume 315 of {\em Grundlehren
  Math. Wiss.}
\newblock Springer-Verlag, Berlin, 1997.
\newblock \href {http://dx.doi.org/10.1007/978-3-662-03338-8}
  {\path{doi:10.1007/978-3-662-03338-8}}.

\bibitem[Bei10]{Beigi2010}
Salman Beigi.
\newblock {Entanglement-assisted Zero-error Capacity is Upper-bounded by the
  Lov\'asz $\ensuremath{\vartheta}$ Function}.
\newblock {\em Phys. Rev. A}, 82:010303, 2010.
\newblock \href {http://dx.doi.org/10.1103/PhysRevA.82.010303}
  {\path{doi:10.1103/PhysRevA.82.010303}}.

\bibitem[Bla13]{blasiak2013graph}
Anna Blasiak.
\newblock {\em {A graph-theoretic approach to network coding}}.
\newblock PhD thesis, Cornell University, 2013.
\newblock URL:
  \url{https://ecommons.cornell.edu/bitstream/handle/1813/34147/ab675.pdf}.

\bibitem[CCH11]{cubitt2011}
T.~S. Cubitt, J.~Chen, and A.~W. Harrow.
\newblock {Superactivation of the Asymptotic Zero-Error Classical Capacity of a
  Quantum Channel}.
\newblock {\em IEEE Trans. Inform. Theory}, 57(12):8114--8126, 2011.
\newblock \href {http://dx.doi.org/10.1109/TIT.2011.2169109}
  {\path{doi:10.1109/TIT.2011.2169109}}.

\bibitem[CLMW10]{cubitt2010improving}
Toby~S. Cubitt, Debbie Leung, William Matthews, and Andreas Winter.
\newblock {Improving Zero-Error Classical Communication with Entanglement}.
\newblock {\em Phys. Rev. Lett.}, 104:230503, 2010.
\newblock \href {http://dx.doi.org/10.1103/PhysRevLett.104.230503}
  {\path{doi:10.1103/PhysRevLett.104.230503}}.

\bibitem[CLMW11]{5961832}
T.~S. {Cubitt}, D.~{Leung}, W.~{Matthews}, and A.~{Winter}.
\newblock {Zero-Error Channel Capacity and Simulation Assisted by Non-Local
  Correlations}.
\newblock {\em IEEE Trans. Inform. Theory}, 57(8):5509--5523, 2011.
\newblock \href {http://dx.doi.org/10.1109/TIT.2011.2159047}
  {\path{doi:10.1109/TIT.2011.2159047}}.

\bibitem[CMR{\etalchar{+}}14]{cubitt2014bounds}
Toby Cubitt, Laura Man{\v{c}}inska, David~E. Roberson, Simone Severini, Dan
  Stahlke, and Andreas Winter.
\newblock {Bounds on Entanglement-Assisted Source-Channel Coding via the
  Lov{\'a}sz Theta Number and Its Variants}.
\newblock {\em IEEE Trans. Inform. Theory}, 60(11):7330--7344, 2014.
\newblock \href {http://arxiv.org/abs/1310.7120} {\path{arXiv:1310.7120}},
  \href {http://dx.doi.org/10.1109/TIT.2014.2349502}
  {\path{doi:10.1109/TIT.2014.2349502}}.

\bibitem[CVZ18]{christandl2017universalproc}
Matthias Christandl, P{\'e}ter Vrana, and Jeroen Zuiddam.
\newblock Universal points in the asymptotic spectrum of tensors (extended
  abstract).
\newblock In {\em Proceedings of the 50th Annual ACM SIGACT Symposium on the
  Theory of Computing (STOC 2018)}, pages 289--296. 2018.
\newblock \href {http://arxiv.org/abs/1709.07851} {\path{arXiv:1709.07851}},
  \href {http://dx.doi.org/10.1145/3188745.3188766}
  {\path{doi:10.1145/3188745.3188766}}.

\bibitem[DSW13]{duan2013}
Runyao Duan, Simone Severini, and Andreas Winter.
\newblock {Zero-Error Communication via Quantum Channels, Noncommutative
  Graphs, and a Quantum Lov\'{a}sz Number}.
\newblock {\em IEEE Trans. Inform. Theory}, 59(2):1164--1174, 2013.
\newblock \href {http://arxiv.org/abs/1002.2514} {\path{arXiv:1002.2514}},
  \href {http://dx.doi.org/10.1109/TIT.2012.2221677}
  {\path{doi:10.1109/TIT.2012.2221677}}.

\bibitem[Dua09]{duan2009super}
Runyao Duan.
\newblock {Super-activation of Zero-error Capacity of Noisy Quantum Channels}.
\newblock {\em arXiv}, 2009.
\newblock \href {http://arxiv.org/abs/0906.2527} {\path{arXiv:0906.2527}}.

\bibitem[Hae79]{haemers1979some}
Willem Haemers.
\newblock {On Some Problems of Lov{\'a}sz Concerning the Shannon Capacity of a
  Graph}.
\newblock {\em IEEE Trans. Inform. Theory}, 25(2):231--232, 1979.
\newblock \href {http://dx.doi.org/10.1109/TIT.1979.1056027}
  {\path{doi:10.1109/TIT.1979.1056027}}.

\bibitem[HJW93]{HUGHSTON199314}
Lane~P. Hughston, Richard Jozsa, and William~K. Wootters.
\newblock {A Complete Classification of Quantum Ensembles Having a Given
  Density Matrix}.
\newblock {\em Phys. Lett. A}, 183(1):14--18, 1993.
\newblock \href {http://dx.doi.org/10.1016/0375-9601(93)90880-9}
  {\path{doi:10.1016/0375-9601(93)90880-9}}.

\bibitem[HPRS17]{Hogben17}
Leslie Hogben, Kevin~F. Palmowski, David~E. Roberson, and Simone Severini.
\newblock Orthogonal representations, projective rank, and fractional minimum
  positive semidefinite rank: Connections and new directions.
\newblock {\em Electron. J. Linear Algebra}, 32:98--115, 2017.
\newblock \href {http://dx.doi.org/https://doi.org/10.13001/1081-3810.3102}
  {\path{doi:https://doi.org/10.13001/1081-3810.3102}}.

\bibitem[JV20]{jensen2018asymptotic}
A.~K. {Jensen} and P.~{Vrana}.
\newblock The asymptotic spectrum of {LOCC} transformations.
\newblock {\em IEEE Trans. Inform. Theory}, 66(1):155--166, 2020.
\newblock \href {http://arxiv.org/abs/1807.05130} {\path{arXiv:1807.05130}}.

\bibitem[Kar72]{MR0378476}
Richard~M. Karp.
\newblock {Reducibility among Combinatorial Problems}.
\newblock In {\em Complexity of computer computations ({P}roc. {S}ympos., {IBM}
  {T}homas {J}. {W}atson {R}es. {C}enter, {Y}orktown {H}eights, {N}.{Y}.,
  1972)}, pages 85--103. Plenum, New York, 1972.
\newblock \href {http://dx.doi.org/10.1007/978-1-4684-2001-2_9}
  {\path{doi:10.1007/978-1-4684-2001-2_9}}.

\bibitem[LMM{\etalchar{+}}12]{Leung2012}
Debbie Leung, Laura Man{\v{c}}inska, William Matthews, M{\={a}}ris Ozols, and
  Aidan Roy.
\newblock {Entanglement can Increase Asymptotic Rates of Zero-Error Classical
  Communication over Classical Channels}.
\newblock {\em Comm. Math. Phys.}, 311(1):97--111, 2012.
\newblock \href {http://dx.doi.org/10.1007/s00220-012-1451-x}
  {\path{doi:10.1007/s00220-012-1451-x}}.

\bibitem[Lov75]{LOVASZ1975383}
L.~Lovász.
\newblock On the ratio of optimal integral and fractional covers.
\newblock {\em Discrete Math.}, 13(4):383 -- 390, 1975.
\newblock \href
  {http://dx.doi.org/https://doi.org/10.1016/0012-365X(75)90058-8}
  {\path{doi:https://doi.org/10.1016/0012-365X(75)90058-8}}.

\bibitem[Lov79]{lovasz1979shannon}
L{\'a}szl{\'o} Lov{\'a}sz.
\newblock {On the Shannon Capacity of a Graph}.
\newblock {\em IEEE Trans. Inform. Theory}, 25(1):1--7, 1979.
\newblock \href {http://dx.doi.org/10.1109/TIT.1979.1055985}
  {\path{doi:10.1109/TIT.1979.1055985}}.

\bibitem[MR16]{manvcinska2016quantum}
Laura Man{\v{c}}inska and David~E. Roberson.
\newblock {Quantum Homomorphisms}.
\newblock {\em J.~Combin. Theory Ser.~B}, 118:228--267, 2016.
\newblock \href {http://dx.doi.org/10.1016/j.jctb.2015.12.009}
  {\path{doi:10.1016/j.jctb.2015.12.009}}.

\bibitem[MSS13]{6466384}
L.~{Man\v{c}inska}, G.~{Scarpa}, and S.~{Severini}.
\newblock {New Separations in Zero-Error Channel Capacity Through Projective
  Kochen–Specker Sets and Quantum Coloring}.
\newblock {\em IEEE Trans. Inform. Theory}, 59(6):4025--4032, 2013.
\newblock \href {http://dx.doi.org/10.1109/TIT.2013.2248031}
  {\path{doi:10.1109/TIT.2013.2248031}}.

\bibitem[NC10]{Nielsen2010}
Michael~A. Nielsen and Isaac~L. Chuang.
\newblock {\em {Quantum Computation and Quantum Information}}.
\newblock {Cambridge University Press}, 2010.

\bibitem[OP15]{ORTIZ2015128}
Carlos~M. Ortiz and Vern~I. Paulsen.
\newblock {Lovász Theta Type Norms and Operator Systems}.
\newblock {\em Linear Algebra Appl.}, 477:128 -- 147, 2015.
\newblock \href {http://dx.doi.org/https://doi.org/10.1016/j.laa.2015.03.022}
  {\path{doi:https://doi.org/10.1016/j.laa.2015.03.022}}.

\bibitem[PS19]{polak2018new}
Sven~C. Polak and Alexander Schrijver.
\newblock New lower bound on the shannon capacity of {C7} from circular graphs.
\newblock {\em Inform. Process. Lett.}, 143:37--40, 2019.
\newblock \href {http://arxiv.org/abs/1808.07438} {\path{arXiv:1808.07438}},
  \href {http://dx.doi.org/10.1016/j.ipl.2018.11.006}
  {\path{doi:10.1016/j.ipl.2018.11.006}}.

\bibitem[Ros67]{10.2307/2035288}
M.~Rosenfeld.
\newblock On a problem of {C. E. Shannon} in graph theory.
\newblock {\em Proc. Amer. Math. Soc.}, 18(2):315--319, 1967.

\bibitem[Sch03]{schrijver2003combinatorial}
Alexander Schrijver.
\newblock {\em {Combinatorial Optimization: Polyhedra and Efficiency}},
  volume~24.
\newblock Springer Science \& Business Media, 2003.

\bibitem[Sha56]{MR0089131}
Claude~E. Shannon.
\newblock {The Zero Error Capacity of a Noisy Channel}.
\newblock {\em Institute of Radio Engineers, Transactions on Information
  Theory}, IT-2(September):8--19, 1956.
\newblock \href {http://dx.doi.org/10.1109/TIT.1956.1056798}
  {\path{doi:10.1109/TIT.1956.1056798}}.

\bibitem[SR02]{Spekkens:2002:OCA:2011417.2011421}
Robert~W. Spekkens and Terry Rudolph.
\newblock {Optimization of Coherent Attacks in Generalizations of the BB84
  Quantum Bit Commitment Protocol}.
\newblock {\em Quantum Info. Comput.}, 2(1):66--96, 2002.

\bibitem[Sta16]{stahlke2016}
Dan Stahlke.
\newblock {Quantum Zero-Error Source-Channel Coding and Non-Commutative Graph
  Theory}.
\newblock {\em IEEE Trans. Inform. Theory}, 62(1):554--577, 2016.
\newblock \href {http://dx.doi.org/10.1109/TIT.2015.2496377}
  {\path{doi:10.1109/TIT.2015.2496377}}.

\bibitem[Str86]{Strassen:1986:AST:1382439.1382931}
Volker Strassen.
\newblock {The Asymptotic Spectrum of Tensors and the Exponent of Matrix
  Multiplication}.
\newblock In {\em Proceedings of the 27th Annual Symposium on Foundations of
  Computer Science}, SFCS 1986, pages 49--54, 1986.
\newblock \href {http://dx.doi.org/10.1109/SFCS.1986.52}
  {\path{doi:10.1109/SFCS.1986.52}}.

\bibitem[Str87]{strassen1987relative}
Volker Strassen.
\newblock {Relative Bilinear Complexity and Matrix Multiplication}.
\newblock {\em J. Reine Angew. Math.}, 375/376:406--443, 1987.
\newblock \href {http://dx.doi.org/10.1515/crll.1987.375-376.406}
  {\path{doi:10.1515/crll.1987.375-376.406}}.

\bibitem[Str88]{strassen1988asymptotic}
Volker Strassen.
\newblock {The Asymptotic Spectrum of Tensors}.
\newblock {\em J. Reine Angew. Math.}, 384:102--152, 1988.
\newblock \href {http://dx.doi.org/10.1515/crll.1988.384.102}
  {\path{doi:10.1515/crll.1988.384.102}}.

\bibitem[Str91]{strassen1991degeneration}
Volker Strassen.
\newblock {Degeneration and Complexity of Bilinear Maps: Some Asymptotic
  Spectra}.
\newblock {\em J. Reine Angew. Math.}, 413:127--180, 1991.
\newblock \href {http://dx.doi.org/10.1515/crll.1991.413.127}
  {\path{doi:10.1515/crll.1991.413.127}}.

\bibitem[WD18]{8260566}
X.~{Wang} and R.~{Duan}.
\newblock {Separation Between Quantum Lovász Number and Entanglement-Assisted
  Zero-Error Classical Capacity}.
\newblock {\em IEEE Trans. Inform. Theory}, 64(3):1454--1460, 2018.
\newblock \href {http://dx.doi.org/10.1109/TIT.2018.2794391}
  {\path{doi:10.1109/TIT.2018.2794391}}.

\bibitem[Wil17]{wilde_2017}
Mark~M. Wilde.
\newblock {\em {Quantum Information Theory}}.
\newblock Cambridge University Press, 2 edition, 2017.
\newblock \href {http://dx.doi.org/10.1017/9781316809976}
  {\path{doi:10.1017/9781316809976}}.

\bibitem[Zui18]{phd}
Jeroen Zuiddam.
\newblock {\em Algebraic complexity, asymptotic spectra and entanglement
  polytopes}.
\newblock PhD thesis, University of Amsterdam, 2018.

\bibitem[Zui19]{zuiddam2018asymptotic}
Jeroen Zuiddam.
\newblock {The Asymptotic Spectrum of Graphs and the Shannon Capacity}.
\newblock {\em Combinatorica}, 2019.
\newblock \href {http://arxiv.org/abs/1807.00169} {\path{arXiv:1807.00169}},
  \href {http://dx.doi.org/10.1007/s00493-019-3992-5}
  {\path{doi:10.1007/s00493-019-3992-5}}.

\end{thebibliography}

\end{document}